\definecolor{cornellred}{rgb}{0.7, 0.11, 0.11}
\definecolor{dgreen}{rgb}{0.0, 0.5, 0.0}
\definecolor{royalazure}{rgb}{0.0, 0.22, 0.66}
\def\OriginalDisplay{}
\def\OriginalDisplayEnd{}
\long\def\MultiLineSplit#1\MultiLineSplitEnd{}
\def\MultiLineSplit{}
\def\MultiLineSplitEnd{}
\long\def\OriginalDisplay#1\OriginalDisplayEnd{}
\newtheorem{theorem}{Theorem}[section]
\newtheorem{lemma}[theorem]{Lemma}
\newtheorem{corollary}[theorem]{Corollary}
\newtheorem{proposition}[theorem]{Proposition}
\newtheorem*{corollary*}{Corollary}
\theoremstyle{definition}
\newtheorem{definition}[theorem]{Definition}
\newtheorem{remark}[theorem]{Remark}
\newcommand\PARA[1]{\paragraph{#1}}
\newcommand\FPS[2]{{s}^{#2}({#1})}
\newcommand\FPA[2]{{x}^{#2}({#1})}
\newcommand\VEC[1]{\bm{#1}}
\newcommand\ChoicePayment[2]{\hat{q}_{#1}^{\, #2}}
\newcommand\FAVitems[2]{\hat{A}_{#1}^{#2}}
\newcommand\FPremain[2]{\VEC{\SurplusSym}_{#1}^{#2}}
\newcommand\FPprice[2]{\VEC{p}_{#1}^{#2}}
\newcommand\Multiplicity{capacity}
\newcommand\MultiSym[1]{c(#1)}
\newcommand\MultiSymB[1]{c\big(#1 \big)} 
\newcommand\MaxMultiSym{C}
\newcommand\Surplus{inventory}
\newcommand\SurplusSym{\lambda}
\newcommand\Item{resource}
\newcommand\Items{resources}
\newcommand\Buyer{user}
\newcommand\Buyers{users}
\newcommand\Slot{slot}
\newcommand\Slots{slots}
\newcommand\LeftNode{i}
\newcommand\RightNode{t}
\newcommand\LeftSet{\mathtt{Left}}
\newcommand\RightSet{\mathtt{Right}}
\newcommand\LeftInts{[N]}
\newcommand\RightInts{[T]}
\def\NoEpsilon{}
\def\EpsilonDependency{}
\def\EpsilonDependencyEnd{}
\long\def\ConstantDependency#1\ConstantDependencyEnd{}
\def\ConstantDependency{}
\def\ConstantDependencyEnd{}
\long\def\EpsilonDependency#1\EpsilonDependencyEnd{}
\newcommand\Step{round}
\newcommand\Steps{rounds}
\newcommand{\Capacity}{\mathit{c}}
\newcommand{\Active}{\mathit{A}}
\newcommand{\Valuation}{\mathit{v}}
\newcommand{\PriceVec}{\bm{p}}
\newcommand{\Pay}{\mathit{q}}
\newcommand{\SurplusVec}{\bm{\lambda}}
\newcommand{\PricePol}{\gamma}
\newcommand{\Selected}{\hat{\Active}}
\newcommand{\Ex}{\mathbb{E}}
\newcommand{\PricePols}{\Gamma}
\newcommand{\LVEC}{\big\langle}
\newcommand{\RVEC}{\big\rangle}
\DeclareMathOperator*{\ArgMax}{argmax}
\newcommand\MechanismName{\texttt{LBPP}}
\DeclareMathOperator{\argmax}{argmax}
\DeclareMathOperator{\argmin}{argmin}
\newcommand{\states}{\mathcal{S}}
\newcommand{\actions}{\mathcal{X}}
\newcommand{\tinit}{t_{\textrm{init}}}
\newcommand{\tfinal}{t_{\textrm{final}}}
\newcommand{\sinit}{s_{\textrm{init}}}
\title{Stateful Posted Pricing with Vanishing Regret via Dynamic Deterministic Markov Decision Processes}
\author[1]{Yuval Emek}
\author[1]{Ron Lavi}
\author[2]{Rad Niazadeh}
\author[1]{Yangguang Shi}
\affil[1]{Faculty of Industrial Engineering and Management, Technion, Haifa, Israel}
\affil[2]{University of Chicago Booth School of Business, Chicago, IL, United States}
\affil[ ]{\texttt{\{yemek, ronlavi, shiyangguang\}@ie.technion.ac.il}\\\texttt{rad.niazadeh@chicagobooth.edu}}
\date{}
\begin{document}

\maketitle

\begin{abstract}
In this paper, a rather general online problem called \emph{dynamic resource allocation with capacity constraints (DRACC)} is introduced and studied in the realm of posted price mechanisms.
This problem subsumes several applications of stateful pricing, including but not limited to posted prices for online job scheduling and matching over a dynamic bipartite graph.
As the existing online learning techniques do not yield vanishing-regret mechanisms for this problem, we develop a novel online learning framework defined over deterministic Markov decision processes with \emph{dynamic} state transition and reward functions.
We then prove that if the Markov decision process is guaranteed to admit an oracle that can simulate any given policy from any initial state with bounded loss --- a condition that is satisfied in the DRACC problem --- then the online learning problem can be solved with vanishing regret.
Our proof technique is based on a reduction to online learning with \emph{switching cost}, in which an online decision maker incurs an extra cost every time she switches from one arm to another.
We formally demonstrate this connection and further show how DRACC can be used in our proposed applications of stateful pricing.
\end{abstract}


\section{Introduction}
\label{sec:introduction}
\label{section:introduction}
Price posting is a common selling mechanism across various corners of e-commerce. Its applications span from more traditional domains such as selling flight tickets on Delta's website or selling products on Amazon, to more emerging domains such as selling cloud services on AWS or pricing ride-shares in Uber.
The prevalence of price posting comes from its several important advantages:
it is incentive compatible, simple to grasp, and can easily fit in an online (or dynamic) environment where buyers arrive sequentially over time.
Therefore, online posted pricing mechanisms, also known as dynamic pricing, have been studied quite extensively in computer science, operations research, and economics (for a comprehensive survey, see~\cite{den2015dynamic}).

A very useful method for devising online posted prices is via \emph{vanishing-regret online learning} algorithms in an adversarial environment
\citep{bubeck2019multi, bubeck2017online, feldman2016online, blum2005near,blum2004online,  kleinberg2003value}.
Here, a sequence of buyers arrive, each associated with her own valuation function that is assumed to be devised by a malicious adversary, and the goal is to post a sequence of price vectors that perform almost as good as the best fixed pricing policy in hindsight.
Despite its success, a technical limitation of this method (shared by the aforementioned papers) forces the often less natural assumption of unlimited item supply to ensure that the selling platform is \emph{stateless}.
However, in many applications of online posted pricing, the platform is \emph{stateful};
indeed, prices can depend on previous sales that determine the platform's state.
Examples for such stateful platforms include selling resources of limited supply, in which the state encodes the number of remaining inventories of different products, and selling resources in cloud computing to schedule online jobs, in which the state encodes the currently scheduled jobs.

The above mentioned limitation is in sharp contrast to the posted prices literature that consider stochastic settings where the buyers' valuations are drawn independently and identically from unknown distributions
\citep{badanidiyuru2013bandits, babaioff2015dynamic, zhang2018occupation}, or independently from known distributions~\citep{chawla2010multi, feldman2014combinatorial, Chawla2017SST}.
By exploiting the randomness (and distributional knowledge) of the input and employing other algorithmic techniques, these papers cope with limited supply and occasionally, with more complicated stateful pricing scenarios.
However, the stochastic approach does not encompass the (realistic) scenarios in which the buyers' valuations are correlated in various complex ways, scenarios that are typically handled using adversarial models.
The only exception in this regard is the work of Chawla et al.~\cite{Chawla2017TRO} that takes a different approach:
they consider the online job scheduling problem, and given access to a collection of (truthful) posted price scheduling mechanisms, they show how to design a (truthful) vanishing-regret online scheduling mechanism against this collection in an adversarial environment.

Motivated by the abundance of stateful posted pricing platforms, and inspired by \cite{Chawla2017TRO}, we study the design of adversarial online learning algorithms with vanishing regret for a rather general online resource allocation framework.
In this framework, termed \emph{dynamic resource allocation with capacity constraints (DRACC)}, dynamic resources of limited inventories arrive and depart over time, and an online mechanism sequentially posts price vectors to (myopically) strategic buyers with adversarially chosen combinatorial valuations (refer to \Cref{sec:model} for the formal model).
The goal is to post a sequence of price vectors with the objective of maximizing revenue, while respecting the inventory restrictions of dynamic resources for the periods of time in which they are active.
We consider a full-information setting, in which the buyers' valuations are elicited by the platform after posting prices in each round of the online execution.

Given a collection of pricing policies for the DRACC framework, we aim to construct a sequence of price vectors that is guaranteed to admit a vanishing regret with respect to the best fixed pricing policy in hindsight.
Interestingly, our abstract framework is general enough to admit, as special cases, two important applications of stateful posted pricing, namely, online job-scheduling and matching over a dynamic bipartite graph;
these applications, for which existing online learning techniques fail to obtain vanishing regret, are discussed in detail in \Cref{sec:applications}.

\vspace{-2mm}

\paragraph{Our contributions and techniques.}
Our main result is a vanishing-regret posted price mechanism for the DRACC problem (refer to \Cref{sec:main-technical} for a formal exposition).
\vspace{-0.6mm}
\begin{displayquote}
\emph{For any DRACC instance with $T$ users and for any collection $\Gamma$ of
pricing policies, the regret of our proposed posted price mechanism (in terms of expected revenue) with respect to the in-hindsight best policy in $\Gamma$ is sublinear in $T$.}
\end{displayquote}
\vspace{-0.7mm}

We prove this result by abstracting away the details of the pricing problem and considering a more general stateful decision making problem.
To this end, we introduce a new framework, termed \emph{dynamic deterministic Markov decision process (Dd-MDP)}, which generalizes the classic \emph{deterministic MDP} problem to an adversarial online learning dynamic setting.
In this framework, a decision maker picks a feasible action for the current state of the MDP, not knowing the state transitions and the rewards associated with each transition;
the state transition function and rewards are then revealed.
The goal of the decision maker is to pick a sequence of actions with the objective of maximizing her total reward.
In particular, we look at vanishing-regret online learning, where the decision maker is aiming at minimizing her regret, defined with respect to the in-hindsight best fixed policy (i.e., a mapping from states to actions) among the policies in a given collection $\Gamma$.

Not surprisingly, vanishing-regret online learning is impossible for this general problem (see \Cref{prop:no-noregret-general-DdMDP}).
To circumvent this difficulty, we introduce a structural condition on Dd-MDPs that enables online learning with vanishing regret.
This structural condition ensures the existence of an ongoing \emph{chasing oracle} that allows one to simulate a given fixed policy from any initial state, irrespective of the actual current state, while ensuring a small (vanishing) \emph{chasing regret}.
The crux of our technical contribution is cast in proving that the Dd-MDPs induced by DRACC instances satisfy this \emph{chasability} condition.

Subject to the chasability condition, we establish a reduction from designing vanishing-regret online algorithms for Dd-MDP to the extensively studied (classic stateless) setting of \emph{online learning with switching cost}~\citep{Kailai2005EAO}.
At high level, we have one arm for each policy in the given collection $\Gamma$ and employ the switching cost online algorithm to determine the next policy to pick.
Each time this algorithm suggests a switch to a new policy
$\gamma \in \Gamma$,
we invoke the chasing oracle that attempts to simulate $\gamma$, starting from the current state of the algorithm which may differ from $\gamma$'s current state.
In summary, we obtain the following result (see \Cref{theorem_regret_of_hola} for a formal exposition).

\begin{displayquote}
\emph{For any $T$-round Dd-MDP instance that satisfies the chasability condition and for any collection $\Gamma$ of policies, the regret of our online learning algorithm with respect to the in-hindsight best policy in $\Gamma$ is sublinear (and optimal) in $T$.}
\end{displayquote}

We further study the bandit version of the above problem, where the state transition function is revealed at the end of each round, but the learner only observes the current realized reward instead of the complete reward function.
By adapting the chasability condition to this setting, we obtain near optimal regret bounds.
See \Cref{theorem_regret_of_C_and_S_FLP} and \Cref{corollary-bndit} in \Cref{section:bandit} for a formal statement.

Our abstract frameworks, both for stateful decision making and stateful pricing, are rather general and we believe that they will turn out to capture many natural problems as special cases (on top of the applications discussed in \Cref{section:applications}).
\paragraph{Additional related work and discussion.}

In the DRACC problem, the class of feasible prices at each time $t$ is determined by the remaining inventories, which in turn depends on the prices picked at previous times $t' < t$. 
This kind of dependency cannot be handled by the conventional online learning algorithms, such as follow-the-perturbed-leader \cite{Kailai2005EAO} and EXP3 \cite{Auer2002NMB}. 
That is why we aim for the stateful model of online learning, which allows a certain degree of dependence on the past actions. 

Several attempts have been made to formalize and study stateful online learning models. 
The authors of \cite{Arora2012OBL, feldman2016online} consider an online learning framework where the reward (or cost) at each time depends on the $k$ recent actions for some fixed $k > 0$. 
This framework can be viewed as a reward function that depends on the system's state that, in this case, encodes the last $k - 1$ actions.

There is an extensive line of work on online learning models that address general multi-state systems, typically formalized by means of stochastic \cite{Eyal2005EMD, Guan2014MVL, Yu2008MDP, Abbasi2013OLM, Neu2014OMD} or deterministic \cite{Dekel2013BRA} MDPs. 
The disadvantage of these models from our perspective is that they all have at least one of the following two restrictions: 
(a)
all actions are always feasible regardless of the current state \cite{Abbasi2013OLM, Eyal2005EMD, Guan2014MVL, Yu2008MDP};
or
(b)
the state transition function is fixed (static) and known in advance \cite{Dekel2013BRA, Eyal2005EMD, Guan2014MVL, Neu2014OMD, Yu2008MDP}.

In the DRACC problem, however, not all actions (price vectors) are feasible for every state and the
state transition function at time $t$ is revealed only after the decision maker has committed to its action. 
Moreover, the aforementioned MDP-based models require a certain type of state \emph{connectivity} in the sense that the Markov chain induced by each action should be irreducible \cite{Abbasi2013OLM, Eyal2005EMD, Guan2014MVL, Neu2014OMD, Yu2008MDP} or at least the union of all induced Markov chains should form a strongly connected graph \cite{Dekel2013BRA}. 
In contrast, in the DRACC problem, depending on the inventories of the resources, it may be the case that certain {\Surplus} vectors can never be reached (regardless of the decision maker's actions).

On the algorithmic side, a common feature of all aforementioned online learning models is that for every instance, there exists some $k > 0$ that can be computed in a preprocessing stage (and does not depend on $T$) such that the online learning can ``catch'' the state (or distribution over states) of any given sequence of actions in \emph{exactly} $k$ time units. 
While this feature serves as a corner stone for the existing online learning algorithms, it is not present in our model, hence our online learning algorithm has to employ different ideas.

In \cite{Devanur2011NOO, Kesselheim2014PBD, Agrawal2014FAO}, a family of online resource allocation problems is investigated under a different setting from ours. The {\Items} in their problem models are static, which means that every {\Item} is revealed at the beginning, and remains active from the first {\Buyer} to the last one. Different from our adversarial model, these papers take different stochastic settings on the {\Buyers}, such as the random permutation setting where a fixed set of {\Buyers} arrive in a random order \cite{Kesselheim2014PBD, Agrawal2014FAO}, and the random generation setting where the parameters of each {\Buyer} are drawn from some distribution \cite{Devanur2011NOO, Agrawal2014FAO}. In these papers, the assignment of the resources to the requests are fully determined by a single decision maker, and the decision for each request depends on the revealed parameters of the current request and previous ones. By contrast, we study the scenario where each strategic {\Buyer} makes her own decision of choosing the resources, and the price posted to each {\Buyer} should be specified independently of the valuation of the current {\Buyer}.

\section{Model and Definitions}
\label{sec:model}


\paragraph{The DRACC problem.}
Consider $N$ dynamic \emph{resources} and $T$ strategic myopic \emph{users} arriving sequentially over \emph{rounds}
$t = 1, \dots, T$,
where round $t$ lasts over the time interval
$[t, t + 1)$.
Resource
$i \in [N]$
arrives at the beginning of round $t_{a}(i)$ and departs at the end of round $t_{e}(i)$, where
$1 \leq t_{a}(i) \leq t_{e}(i) \leq T$;
upon arrival, it includes 
$\Capacity(i) \in \mathbb{Z}_{> 0}$
units.
We say that resource $i$ is \emph{active} at time $t$ if
$t_{a}(i) \leq t \leq t_{e}(i)$
and denote the set of resources active at time $t$ by
$A_t \subseteq [N]$.
Let $C$ and $W$ be upper bounds on
$\max_{i \in [N]} \Capacity(i)$
and
$\max_{t \in [T]} |A_{t}|$,
respectively.

The arriving user at time $t$ has a \emph{valuation function}
$\Valuation_{t} : 2^{\Active_{t}} \rightarrow [0, 1)$
that determines her value $\Valuation_{t}(A)$ for each subset
$A \subseteq \Active_{t}$
of resources active at time $t$.
We assume that
$\Valuation_{t}(\emptyset) = 0$
and that the users are quasi-linear, namely, if a subset $A$ of resources is allocated to user $t$ and she pays a total payment of $q$ in return, then her utility is
$\Valuation_t(A) - q$.
A family of valuation functions that receives a separated attention in this paper is that of \emph{$k_{t}$-demand} valuation functions, where user $t$ is associated with an integer parameter
$1 \leq k_{t} \leq |A_{t}|$
and with a value
$w^{i}_{t} \in [0, 1)$
for each active resource
$i \in A_{t}$
so that her value for a subset
$A \subseteq A_{t}$
is
$\max_{A' \subseteq A : \lvert A' \rvert \leq k_{t}} \sum_{i \in A'} w^{i}_{t}$.

\paragraph{Stateful posted price mechanisms.}
We restrict our attention to dynamic posted price mechanisms that work based on the following protocol.
In each round
$t \in [T]$,
the mechanism first realizes which resources
$i \in [N]$
arrive at the beginning of round $t$, together with their initial capacity $\Capacity(i)$, and which resources departed at the end of round
$t - 1$,
thus updating its knowledge of $A_{t}$.
It then posts a \emph{price vector}
$\PriceVec_{t} \in (0, 1]^{\Active_{t}}$
that determines the price $\PriceVec_{t}(i)$ of each resource
$i \in \Active_{t}$
at time $t$.
Following that, the mechanism elicits the valuation function $\Valuation_t$ of the current user $t$ and allocates (or in other words sells) one unit of each resource in the demand set $\Selected_{t}^{\PriceVec_{t}}$ to user $t$ at a total price of $\hat{\Pay}_{t}^{\PriceVec_{t}}$, where
\begin{equation}\textstyle
\Selected_{t}^{\PriceVec}
\, = \,
\argmax_{A \subseteq \Active_t}
\left\{ \Valuation_t(A) - \sum_{i \in A} \PriceVec(i) \right\}
\qquad \text{and} \qquad
\hat{\Pay}_{t}^{\PriceVec}
\, = \,
\sum_{i\in \Selected_{t}^{\PriceVec}}\PriceVec(i) \label{formula_def_demand_set_and_payment}
\end{equation}
for any price vector
$\PriceVec \in (0, 1]^{\Active_{t}}$,
consistently breaking $\argmax$ ties according to the lexicographic order on $A_{t}$.
A virtue of posted price mechanisms is that if the choice of $\PriceVec_{t}$ does not depend on $\Valuation_{t}$, then it is dominant strategy for (myopic) user $t$ to report her valuation $\Valuation_{t}$ truthfully.

Let
$\SurplusVec_{t} \in \{ 0, 1, \dots, C \}^{\Active_t}$
be the \emph{inventory} vector that encodes the number $\SurplusVec_{t}(i)$ of units remaining from resource
$i \in \Active_{t}$
at time
$t = 1, \dots, T$.
Formally, if
$t_{a}(i) = t$,
then
$\SurplusVec_{t}(i) = \Capacity(i)$;
and if (a unit of) $i$ is allocated to user $t$ and $i$ is still active at time
$t + 1$,
then
$\SurplusVec_{t + 1}(i) = \SurplusVec_{t}(i) - 1$.
We say that a price vector $\PriceVec$ is \emph{feasible} for the inventory vector $\SurplusVec_{t}$ if
$\PriceVec(i) = 1$
for every
$i \in \Active_{t}$
such that
$\SurplusVec_{t}(i) = 0$,
that is, for every (active) resource $i$ exhausted by round $t$.
To ensure that the resource inventory is not exceeded, we require that the posted price vector $\PriceVec_{t}$ is feasible for $\SurplusVec_{t}$ for every
$1 \leq t \leq T$;
indeed, since $\Valuation_{t}$ is always strictly smaller than $1$, this requirement ensures that the utility of user $t$ from any resource subset
$A \subseteq \Active_{t}$
that includes an exhausted resource is negative, thus preventing $A$ from becoming the selected demand set, recalling that the utility obtained by user $t$ from the empty set is $0$.

%

In this paper, we aim for posted price mechanisms whose objective is to maximize the extracted \emph{revenue} defined to be the total expected payment
$\Ex[\sum_{t = 1}^{T} \hat{\Pay}_{t}^{\PriceVec_{t}}]$
received from all users, where the expectation is over the mechanism's internal randomness.\footnote{The techniques we use in this paper are applicable also to the objective of maximizing the social welfare.}

\paragraph{Adversarial online learning over pricing policies.}
To measure the quality of the aforementioned posted price mechanisms, we consider an adversarial online learning framework, where at each time
$t \in [T]$,
the decision maker picks the price vector $\PriceVec_t$ and an adaptive adversary simultaneously picks the valuation function $\Valuation_t$.
The resource arrival times $t_{a}(i)$, departure times $t_{e}(i)$, and initial capacities $\Capacity(i)$ are also determined by the adversary.
We consider the full information setting, where the valuation function $\Valuation_{t}$ of user $t$ is reported to the decision maker at the end of each round $t$.
It is also assumed that the decision maker knows the parameters $C$ and $W$ upfront and that these parameters are independent of the instance length $T$.

A (feasible) \emph{pricing policy} $\PricePol$ is a function that maps each inventory vector
$\SurplusVec \in \{ 0, 1, \dots, C\}^{A_{t}}$,
$t \in [T]$,
to a price vector
$\PriceVec = \PricePol(\SurplusVec)$,
subject to the constraint that $\PriceVec$ is feasible for $\SurplusVec$.\footnote{%
The seemingly more general setup, where the time $t$ is passed as an argument to $\PricePol$ on top of $\SurplusVec$, can be easily reduced to our setup (e.g., by introducing a dummy resource $i_{t}$ active only in round $t$).}
The pricing policies are used as the benchmarks of our online learning framework:
Given a pricing policy $\PricePol$, consider a decision maker that repeatedly plays according to $\PricePol$;
namely, she posts the price vector
$\PriceVec^{\PricePol}_{t} = \PricePol(\SurplusVec^{\PricePol}_{t})$
at time
$t = 1, \dots, T$,
where $\SurplusVec^{\PricePol}_{t}$ is the inventory vector at time $t$ obtained by applying $\PricePol$ recursively on previous inventory vectors $\SurplusVec^{\PricePol}_{t'}$ and posting prices
$\PricePol(\SurplusVec^{\PricePol}_{t'})$
at times
$t' = 1, \dots, t - 1$.
Denoting
$\hat{\Pay}_{t}^{\PricePol} = \hat{\Pay}_{t}^{\PriceVec^{\PricePol}_{t}}$,
the revenue of this decision maker is given by
$\sum_{t = 1}^{T} \hat{\Pay}_{t}^{\PricePol}$.


Now, consider a collection $\Gamma$ of pricing policies.
The quality of a posted price mechanism
$\{ \PriceVec_{t} \}_{t = 1}^{T}$
is measured by means of the decision
maker's \emph{regret} that compares her own revenue to the revenue generated
by the in-hindsight best pricing policy in $\Gamma$.
Formally, the regret (with respect to $\Gamma$) is defined to be
\[\textstyle
\max_{\PricePol \in \PricePols}
\sum_{t = 1}^{T} \hat{\Pay}_{t}^{\PricePol}
-
\Ex \left[
\sum_{t = 1}^{T} \hat{\Pay}_{t}^{\PriceVec_{t}}
\right]
\, ,
\]
where the expectation is taken over the decision maker's randomness.
The mechanism is said to have \emph{vanishing regret} if it is guaranteed that the decision maker's regret is sublinear in $T$, which means that the average regret per
time unit vanishes as $T \rightarrow \infty$.


\section{Dynamic Posted Pricing via Dd-MDP with Chasability}
\label{sec:main-technical}
\label{section:HOLA}
The online learning framework underlying the DRACC problem as defined in \cref{sec:model} is stateful with the inventory vector $\SurplusVec$ playing the role of the framework's state.
In the current section, we first introduce a generalization of this online learning framework in the form of a stateful online decision making, formalized by means of \emph{dynamic deterministic Markov decision processes (Dd-MDPs)}.
Following that, we propose a structural condition called \emph{chasability} and show that under this condition, the Dd-MDP problem is amenable to vanishing-regret online learning algorithms.
This last result is obtained through a reduction to the extensively studied problem of ``experts with switching cost'' ~\cite{Kailai2005EAO}.
Finally, we prove that the Dd-MDP instances that correspond to the DRACC problem indeed satisfy the chasability condition.

\subsection{Viewing DRACC as a Dd-MDP}
\label{sec:Dd-MDP-prelim}

A (static) \emph{deterministic Markov decision process (d-MDP)} is defined over a set $\states$ of states and a set $\actions$ of actions.
Each state
$s \in \states$
is associated with a subset
$X_{s} \subseteq \actions$
of actions called the feasible actions of $s$.
A \emph{state transition function} $g$ maps each state
$s \in \states$
and action
$x \in X_{s}$
to a state
$g(s, x) \in \states$.
This induces a directed graph over $\states$, termed the \emph{state transition graph}, where an edge labeled by
$\langle s, x \rangle$
leads from node $s$ to node $s'$ if and only if
$g(s, x) = s'$.
The d-MDP also includes a \emph{reward function} $f$ that maps each state-action pair $\langle s, x \rangle$ with $s \in \states$
and 
$x \in X_{s}$ to a real value in $[0, 1]$.

\paragraph{Dynamic deterministic MDPs.}
Notably, static d-MDPs are \emph{not} rich enough to capture the dynamic aspects of the DRACC problem.
We therefore introduce a more general object where the state transition and reward functions are allowed to develop in an (adversarial) dynamic fashion.

Consider a sequential game played between an online decision maker and an adversary.
As in static d-MDPs, the game is defined over a set $\states$ of states, a set $\actions$ of actions, and a feasible action set $X_s$ for each
$s \in \states$.
We further assume that the state and action sets are finite.
The game is played in
$T \in \mathbb{N}$
rounds as follows.
The decision maker starts from an initial state
$s_{1} \in \states$. In each round
$t = 1, \ldots, T$, she
plays a (randomized) feasible action
$x_t \in X_{s_{t}}$,
where
$s_{t} \in \states$
is the state at the beginning of round
$t$.
Simultaneously, the adversary selects the state transition function $g_{t}$ and the reward function $f_{t}$.
The decision maker then moves to a new state
$s_{t + 1} = g_{t}(s_{t}, x_{t})$
(which is viewed as a movement along edge
$\langle s_{t}, x_{t} \rangle$
in the state transition graph induced by $g_{t}$),
obtains a reward
$f_{t}(s_{t}, x_{t})$,
and finally, observes $g_{t}$ and $f_{t}$ as the current round's (full information) feedback.\footnote{%
No (time-wise) connectivity assumptions are made for the dynamic transition graph induced by
$\{ g_{t} \}_{t = 1}^{T}$,
hence it may not be possible to devise a path between two given states as is done in \cite{Dekel2013BRA} for static d-MDPs.}
The game then advances to the next round
$t + 1$.
The goal is to maximize the expected total reward
$\mathbb{E} [ \sum_{t \in [T]} f_{t}(s_{t}, x_{t})]$.


\paragraph{Policies, simulation, \& regret.}
A (feasible) \emph{policy}
$\gamma: \states \mapsto \actions$
is a function that maps each state
$s \in \states$
to an action
$\gamma(s) \in X_{s}$.
A \emph{simulation} of policy $\gamma$ over the round interval
$[1, T]$
is given by the state sequence
$\{ \FPS{t}{\gamma} \}_{t = 1}^{T}$
and the action sequence
$\{ \FPA{t}{\gamma} \}_{t = 1}^{T}$
defined by setting
\begin{equation}
\FPS{t}{\gamma} \triangleq
\begin{cases}
s_{1} & \text{if } t = 1 \\
g_{t - 1} \left( \FPS{t - 1}{\gamma}, \FPA{t - 1}{\gamma} \right) & \text{if } t > 1
\end{cases}
\quad \text{and} \quad
\FPA{t}{\gamma} \triangleq
\begin{cases}
\gamma(s_{1}) & \text{if } t = 1 \\
\gamma \left( \FPS{t}{\gamma} \right) & \text{if } t > 1
\end{cases} \, .
\label{formula_fixed_policy_states_and_actions}
\end{equation}
The cumulative reward obtained by this simulation of $\gamma$ is given by
$\sum_{t \in [T]} f_{t} \left( \FPS{t}{\gamma}, \FPA{t}{\gamma} \right)$.

Consider a decision maker that plays the sequential game by following the (randomized) state sequence
$\{ s_{t} \}_{t = 1}^{T}$
and action sequence
$\{ x_{t} \}_{t = 1}^{T}$,
where
$x_{t} \in X_{s_{t}}$
for every
$1 \leq t \leq T$.
For a (finite) set $\Gamma$ of policies, the decision maker's \emph{regret} with respect to $\Gamma$ is defined to be
\begin{equation}\textstyle
\max_{\gamma \in \Gamma}
\sum_{t \in [T]} f_{t} \left( \FPS{t}{\gamma}, \FPA{t}{\gamma} \right)
\, - \,
\sum_{t \in [T]} \mathbb{E} \left[ f_{t}(s_{t}, x_{t}) \right] \, . \label{formula_policy_regret}
\end{equation}
 

\subsubsection*{\emph{Relation to the DRACC Problem}}
\vspace{-1mm}
Dynamic posted pricing for the DRACC problem can be modeled as a Dd-MDP.
To this end, we identify the state set $\states$ with the set of possible inventory vectors $\SurplusVec_{t}$,
$t = 1, \dots, T$.
If state
$s \in \states$
is identified with inventory vector $\SurplusVec_{t}$, then we identify $X_{s}$ with the set of price vectors feasible for $\SurplusVec_{t}$.
The reward function $f_{t}$
is defined by setting
\begin{equation}
f_{t}(s, x)
\, = \,
\ChoicePayment{t}{x}  \, , \label{formula_DRACC_reward_function}
\end{equation}
where $\ChoicePayment{t}{x}$ is defined as in Eq.~\eqref{formula_def_demand_set_and_payment}, recalling that the valuation function $v_{t}$, required for the computation of $\ChoicePayment{t}{x}$, is available to the decision maker at the end of round $t$.
%
%
As for the state transition function $g_{t}$, the new state
$s' = g_{t}(s, x)$
is the inventory vector obtained by posting the price vector $x$ to user $t$ given the inventory vector $s$, namely,
\begin{equation*}
s'(i)
\, = \,
\begin{cases}
s(i) - 1_{i \in \FAVitems{t}{x}} & \text{if } i \in A_{t + 1} \cap A_{t} \\
\MultiSymB{i} & \text{if } i \in A_{t + 1} \setminus A_{t}
\end{cases} \, .
\end{equation*}
Given the aforementioned definitions, the notion of (pricing) policies and their recursive simulations and the notion of regret translate directly from the DRACC setting to that of Dd-MDPs.

\subsection{The Chasability Condition}
\label{sec:dominance}
As the Dd-MDP framework is very inclusive, it is not surprising that in general, it does not allow for vanishing regret.

\begin{proposition}
\label{prop:no-noregret-general-DdMDP}
For every online learning algorithm, there exists a $T$-round Dd-MDP instance for which the algorithm's regret is
$\Omega (T)$.
\end{proposition}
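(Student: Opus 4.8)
The plan is to exhibit, for any given (possibly randomized) online learning algorithm, a single hard Dd-MDP instance together with a benchmark class $\Gamma$ on which the regret is at least $T/2$. First I would fix the combinatorial skeleton of the instance: take $T+1$ distinct states $s_1, s_2, \ldots, s_{T+1}$, declare two feasible actions $X_{s_t} = \{0, 1\}$ at every state, and let the adversary's transition functions ignore the chosen action, i.e.\ set $g_t(s_t, 0) = g_t(s_t, 1) = s_{t+1}$ for every $t$. With this skeleton both the decision maker's trajectory and the simulation of any policy $\gamma$ are forced to be $s_1 \to s_2 \to \cdots$, so that $\FPS{t}{\gamma} = s_t$ and $\FPA{t}{\gamma} = \gamma(s_t)$ for all $t$. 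Because the states are pairwise distinct, I take $\Gamma$ to be the collection of \emph{all} policies, so that $\Gamma$ realizes every assignment of an action to each round; the reward of a policy is then $\sum_{t} f_t(s_t, \gamma(s_t))$, whose maximum over $\gamma \in \Gamma$ decomposes round-by-round into $\sum_t \max_{x \in \{0,1\}} f_t(s_t, x)$.

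Next I would specify the rewards through an oblivious \emph{randomized} adversary and argue via the probabilistic method. In each round $t$, independently across rounds, let the pair $\big(f_t(s_t,0), f_t(s_t,1)\big)$ be drawn uniformly from $\{(1,0),(0,1)\}$, so that exactly one action carries reward $1$ and the identity of that action is a fresh uniform bit. Since the decision maker commits to $x_t$ before observing $f_t$, and the round-$t$ bit is independent of the entire history and of the algorithm's internal randomness up to round $t$, the probability that $x_t$ matches the rewarding action is exactly $1/2$ regardless of the algorithm; hence the algorithm's expected per-round reward is $1/2$ and its total expected reward is $T/2$. On the other hand, for \emph{every} realization of the random bits the all-matching policy --- which selects at each state the action carrying reward $1$ --- lies in $\Gamma$ and collects reward $T$, so $\max_{\gamma \in \Gamma}\sum_t f_t(\FPS{t}{\gamma}, \FPA{t}{\gamma}) = T$ surely.

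Taking expectations over the random instance, the expected regret equals $T - T/2 = T/2$; by averaging there must exist at least one realization of the reward sequence $\{f_t\}_{t=1}^{T}$ (an oblivious, hence valid, Dd-MDP instance) on which the fixed algorithm incurs regret at least $T/2 = \Omega(T)$, which is the claim. The step I expect to require the most care is handling \emph{randomized} algorithms: a direct adaptive adversary could predict a deterministic learner's action and zero out its reward, but this argument breaks once the action is randomized and hidden. Routing everything through a single oblivious random instance and invoking Yao's principle cleanly sidesteps this difficulty, since on that instance every algorithm's reward is pinned to $T/2$ by independence while the in-hindsight optimum is deterministically $T$. (It is worth noting that this instance is precisely one that violates the chasability condition introduced in \Cref{sec:dominance}, consistent with the positive result of \Cref{theorem_regret_of_hola}.)
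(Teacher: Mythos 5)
Your proof is correct as a proof of the literal statement, but it reaches it by a genuinely different mechanism than the paper, and the difference is instructive. The paper uses two states, two actions, and a constant-size benchmark: rewards depend only on the state, and a wrong choice in round~$1$ traps the learner in an absorbing zero-reward state, so the linear regret stems from \emph{irreversible state dynamics} --- exactly the pathology that the chasability condition of \Cref{def:chase} rules out, which is why the proposition serves to motivate that condition. Your construction instead trivializes the dynamics (transitions ignore actions, states never repeat) and extracts the lower bound from the \emph{richness of the benchmark}: with $\Gamma$ equal to all $2^{T+1}$ policies over never-repeating states, ``best fixed policy in hindsight'' degenerates into ``best action sequence in hindsight,'' and your oblivious random instance plus averaging (which, as you note, correctly handles randomized learners) pins any algorithm to $T/2$ while the benchmark surely earns $T$. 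What each approach buys is different: yours is a clean, self-contained probabilistic argument, but it only exhibits impossibility when $\log|\Gamma| = \Theta(T)$; the paper's instance shows impossibility already for constant-size $\Gamma$, i.e., that the obstruction is statefulness itself rather than benchmark size.

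This difference is also where your one genuine error lies: the closing parenthetical claim that your instance ``violates the chasability condition'' is unjustified and, under the natural completion of the data you left unspecified, false. Complete the off-path values by $g_t(s,x) = s_{t+1}$ and $f_t(s,x) = f_t(s_t,x)$ for all states $s$; then a chasing oracle given target $\gamma$ coincides with the forced state $s_t$ from round $\tinit+1$ onward and can simply play $\gamma(s_t)$, so the instance is $1$-chasable. There is no contradiction with \Cref{theorem_regret_of_hola}, because its bound $O\left(\sqrt{\sigma \cdot T \log|\Gamma|}\right)$ is itself $\Theta(T)$ when $\log|\Gamma| = \Theta(T)$. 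In other words, your lower bound is really a statement about exponentially large policy classes, not about failure of chasability, whereas the paper's two-state instance is genuinely non-chasable (an oracle started in $s'$ can never rejoin a target policy sitting in $s$, incurring linear chasing regret). If you want your construction to carry the paper's intended moral, you would need either to shrink $\Gamma$ to constant size (which breaks your round-by-round decomposition) or to reinstate action-dependent, irreversible transitions --- at which point you essentially recover the paper's proof.
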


\begin{proof}
Consider a simple scenario where there are only two states
$\lbrace s, s' \rbrace$
with $s$ being the initial state and two actions
$\lbrace x, x' \rbrace$
that are feasible for both states.
Without loss of generality, let $x$ be the action that the decision maker's algorithm chooses with probability at least
$1 / 2$
at time
$t = 1$.
Now, consider an adversary that works in the following manner:
It sets
$f_{t}(s, \cdot) = 1$
and
$f_{t}(s', \cdot) = 0$
for every
$t \in [T]$.
Regarding the state transition, the adversary sets
$g_{1}(s, x) = s'$
and
$g_{1}(s, x') = s$;
and for every
$t \in [2, T]$,
it sets
$g_{t}(s, \cdot) = s$
and
$g_{t}(s', \cdot) = s'$.
In such case, the expected cumulative reward of the decision maker is at most
$1 + T / 2$,
while the policy that always plays action $x_1$ obtains a cumulative reward of $T$.
\end{proof}

As a remedy to the impossibility result established in \cref{prop:no-noregret-general-DdMDP}, we introduce a structural condition for Dd-MDPs that makes them amenable to online learning with vanishing regret.


\begin{definition}[\emph{Chasability condition for Dd-MDPs}]
\label{def:chase}
A Dd-MDP instance is called \emph{$\sigma$-chasable} for some
$\sigma > 0$
if it admits an ongoing \emph{chasing oracle} $\mathcal{O}^{\mathtt{Chasing}}$ that works as follows for any given target policy 
$\gamma \in \Gamma$.
The chasing oracle is invoked at the beginning of some round $\tinit$ and provided with an initial state
$\sinit \in \states$;
this invocation is halted at the end of some round
$\tfinal \geq \tinit$.
In each round
$\tinit \leq t \leq \tfinal$,
the chasing oracle generates a (random) action $\hat{x}(t)$ that is feasible for state
\begin{equation}\textstyle
\hat{s}(t)
\, = \,
\begin{cases}
\sinit & \text{if } t = \tinit \\
g_{t - 1} \left( \hat{s}(t - 1), \hat{x}(t - 1) \right) & \text{if } \tinit < t \leq \tfinal
\end{cases} \, ;
\label{formula_states_visited_in_local_online_learning}
\end{equation}
following that, the chasing oracle is provided with the Dd-MDP's state transition function
$f_{t}(\cdot, \cdot)$
and reward function
$g_{t}(\cdot, \cdot)$.
The main guarantee of $\mathcal{O}^{\mathtt{Chasing}}$ is that its \emph{chasing regret (CR)} satisfies
\[\textstyle
\textrm{CR}
\, \triangleq \,
\sum_{t = \tinit}^{\tfinal}
f_{t} \left( \FPS{t}{{\gamma}}, \FPA{t}{{\gamma}} \right)
-
\sum_{t = \tinit}^{\tfinal}
\mathbb{E} \left[ f_{t}\Big(\hat{s}(t), \hat{x}(t) \Big) \right]
\, \leq \,
\sigma \, .
\]
We emphasize that the initial state $\sinit$ provided to the chasing oracle may differ from $\FPS{\tinit}{{\gamma}}$.
\end{definition}


\subsubsection*{\emph{Relation to the DRACC Problem (continuted)}}
\vspace{-1mm}
Interestingly, the Dd-MDPs corresponding to DRACC instances are $\sigma$-chasable for
$\sigma = o (T)$,
where the exact bound on $\sigma$ depends on whether we consider general or $k_{t}$-demand valuation functions.
Before establishing these bounds, we show that the chasing oracle must be randomized.

\begin{proposition} \label{prop:deterministic-chasing-oracle}
There exists a family of $T$-round DRACC instances whose corresponding Dd-MDPs do not admit a deterministic chasing oracle with $o (T)$ chasing regret CR.
\end{proposition}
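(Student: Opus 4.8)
The plan is to embed a price‑guessing game into a DRACC instance and to exploit the fact that a deterministic chasing oracle cannot reconstruct the target policy's hidden inventory from feedback that is, by construction, policy‑independent. Concretely I would fix the resource structure to consist of a single long‑lived \emph{flag} resource $h$, active in every round $t\in[T]$ with $\Capacity(h)=1$, together with one ephemeral \emph{reward} resource $r_t$ per round, active only in round $t$ with $\Capacity(r_t)=1$ (so $W=2$ and $C=1$, both independent of $T$). The target policy $\gamma$ reads only $\SurplusVec(h)\in\{0,1\}$: throughout the chasing window it withholds $h$ (posts price $1$) and posts price $\tfrac34$ for $r_t$ when $\SurplusVec(h)=1$ and price $\tfrac14$ when $\SurplusVec(h)=0$. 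A one‑round pre‑window before $\tinit$ (using the time‑encoding reduction of the model's footnote, or one extra auxiliary resource, so that $\gamma$ may offer $h$ there while withholding it inside the window) lets the adversary decide, via the pre‑window valuation alone, whether $\gamma$ sells $h$; this fixes the hidden bit $b\triangleq \FPS{\tinit}{\gamma}(h)\in\{0,1\}$ and hence $\gamma$'s price $p_b\in\{\tfrac14,\tfrac34\}$ for every $r_t$ in $[\tinit,\tfinal]=[\tinit,T]$.

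The crucial structural claim I would establish is \emph{information hiding}: since inside the window the adversary assigns value only to $r_t$ and never to $h$, the revealed functions $f_t$ and $g_t$ depend on the forthcoming value of $r_t$ and on the resource structure, but never on $\gamma$'s actual state. Consequently a deterministic oracle, invoked at $\tinit$ with some $\sinit$ and blind to the pre‑window, produces the exact same price $p_t$ for $r_t$ under $b=0$ and under $b=1$, so its whole trajectory $\{\hat s(t),\hat x(t)\}$ is $b$‑independent. I can therefore analyze a single fixed price sequence $\{p_t\}$ and let the adversary select the worse $b$ afterwards.

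Against this $b$‑independent sequence I would have the adversary set $w_t\triangleq v_t(\{r_t\})$ as a function of the committed $p_t$, and a short case analysis on the thresholds $\tfrac14,\,p_t,\,\tfrac34$ shows that for every $p_t$ there is a choice of $w_t$ with $\mathrm{gap}(0)_t+\mathrm{gap}(1)_t\ge\tfrac14$, where $\mathrm{gap}(b)_t=p_b\,\mathbf 1[w_t\ge p_b]-p_t\,\mathbf 1[w_t\ge p_t]$ is the per‑round reward gap; concretely the adversary either undercuts the oracle (takes $w_t$ just below $p_t$, killing its high‑price sale while still paying off $\gamma$'s cheaper offer) or reveals a large value that both of $\gamma$'s prices capture but the oracle's single guess cannot exploit. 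Averaging over the two hypotheses yields $\max_b\sum_{t=\tinit}^{\tfinal}\mathrm{gap}(b)_t\ge\tfrac12\cdot\tfrac14\,(T-\tinit+1)=\Omega(T)$, and realizing the worse $b$ in the pre‑window (legitimate precisely because the window behavior is $b$‑independent) produces a target policy with $\mathrm{CR}=\Omega(T)$ for every deterministic oracle. The same instance is consistent with the $o(T)$‑chasability of DRACC established later: a randomized oracle running multiplicative weights over the two candidate prices $\{\tfrac14,\tfrac34\}$—equivalently over the two candidate $\gamma$‑trajectories—earns expected reward within $O(\sqrt T)$ of the better candidate, hence within $O(\sqrt T)$ of $\gamma$, so randomness closes the gap.

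The main obstacle is the per‑round minimax step: I must verify that for \emph{every} price the oracle could commit to there is a single (necessarily $b$‑independent) value $w_t$ keeping the two‑hypothesis gap bounded below by a positive constant, which is exactly what forbids a deterministic ``hedge.'' This is where the revenue objective is essential—posting a lower price strictly lowers the collected payment even when the sale still occurs, so there is no safe low price that matches $\gamma$ under both hypotheses. A secondary, purely technical point is implementing the hidden bit with one fixed policy: because a policy maps inventories rather than times to prices, I would use the dummy‑resource time‑encoding (or a single auxiliary pre‑window resource) so that $\gamma$ can offer $h$ before the window yet withhold it during the window.
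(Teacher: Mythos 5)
Your construction has a genuine gap: it rests on the premise that a deterministic chasing oracle is ``blind to the pre-window,'' i.e., cannot determine the target policy's simulated state $\FPS{\tinit}{\gamma}$ (your hidden bit $b$) at invocation time. That premise is inconsistent with the paper's model. The chasing oracles that the paper actually constructs (proof of \Cref{theorem_chasing_condition}) explicitly maintain the sets $\mathtt{Good}_t$ and $\mathtt{Bad}_t$, which are defined by comparing the oracle's inventory $\hat{\SurplusVec}_t$ against $\gamma$'s simulated inventory $\SurplusVec^{\gamma}_t$; and in the only place the oracle is used (Algorithm C\&S, full-information feedback), the entire pre-invocation history $\{f_{t'}, g_{t'}\}_{t' < \tinit}$ is available, from which $\FPS{\tinit}{\gamma}$ is computable --- in your instance, whether buyer $\tinit - 1$ bought the flag resource $h$ from $\gamma$ is readable off $g_{\tinit - 1}$. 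So the class of deterministic oracles the proposition must rule out (to be consistent with the randomized, \emph{informed} oracles of the positive results) includes oracles that know $b$. Against such an oracle your instance is not hard at all: knowing $b$, the oracle simply posts $\gamma$'s price $p_b$ on each fresh reward resource $r_t$ (both players always hold one unit of $r_t$, and $\gamma$ withholds $h$ anyway), matching $\gamma$'s revenue exactly and achieving chasing regret $0$. Hence your family of instances does not witness the proposition, and no purely information-theoretic two-hypothesis argument can, because the obstruction to deterministic chasing here is not informational.

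The paper's proof instead exploits the adversary's adaptivity against determinism, with the target policy's state fully public. The oracle is started one unit ``behind'' ($\sinit = \langle 0, 1\rangle$ versus $\FPS{\tinit}{\gamma} = \langle 1, 1\rangle$, with $\gamma$ always posting $\langle 2/3, 1/3\rangle$), and since the oracle is deterministic, the adversary can predict the price it will post for the one resource it can still sell and choose $v_t$ accordingly: if that price is at most $1/3$, the buyer's values are set to $(2/3, 1/3)$, otherwise to $(1/3, 1/3)$; either way $\gamma$ earns $1/3$ more than the oracle in round $t$, and the adversary retires the resource $\gamma$ sold and introduces a fresh one so that the oracle remains exactly one exhausted resource behind forever. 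Randomization defeats this because the adversary can no longer anticipate the posted price --- which is precisely the point of the proposition. Your per-round minimax computation is fine as far as it goes, but to repair the argument you would have to replace the information-hiding step with a source of hardness that survives the oracle knowing $\gamma$'s state, i.e., essentially the adaptive undercutting argument above.
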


\begin{proof}
Consider an ongoing chasing oracle that is implemented in a deterministic manner for a DRACC instance with $\MaxMultiSym = 1$, $W = 2$. The adversary chooses initial step $\tinit$ and initial state $\sinit$ so that $\tinit = o(T)$, $|A_{\tinit}| = 2$, and $\VEC{\SurplusSym}_{\tinit} = \langle 0, 1 \rangle$. Note that throughout this proof, the {\Surplus} vectors and price vectors containing two elements are presented in an ordered way, which means that the first element corresponds to the {\Item} with the smaller index.

The target policy $\gamma$ is chosen to have $\FPremain{\tinit}{\gamma} = \langle 1, 1 \rangle$. Moreover, it maps every {\Surplus} vector to a price vector of $\langle \frac{2}{3}, \frac{1}{3} \rangle$. The adversary ensures the feasibility of such a policy by setting $t_{e}(i) = t$ for each {\Item} that is sold out at $t$ with the price vector generated by $\gamma$, and setting $t_{a}(i') = t + 1$ for a new {\Item}. With this setting, it holds for every $t \geq \tinit$ that $\FPremain{t}{\gamma} = \langle 1, 1 \rangle$.

The adversary configures the valuation functions $v_{t}$ for each $t \geq \tinit$ in an adaptive way, and ensures that for all such $t$
\begin{equation}
    \hat{\VEC{\SurplusSym}}_{t} = \langle 0, 1 \rangle \, . \label{formula_invariant_surplus_vector_for_deterministic_oracle}
\end{equation}
With the initial state $\sinit$ chosen by the adversary, Eq.~\eqref{formula_invariant_surplus_vector_for_deterministic_oracle} holds for $\tinit$. Suppose it holds for some $t \geq \tinit$. Then the price vector $\hat{\PriceVec}$ generated by the oracle must be in the form of $\langle 1, p \rangle$ for some $p \in (0, 1]$. Let $i$ and $i'$ be the two {\Items} in $A_{t}$ with $i < i'$. If $p \leq \frac{1}{3}$, the adversary sets $v_{t}(i) = \frac{2}{3}$ and $v_{t}(i') = \frac{1}{3}$. Then with the price vector generated by $\gamma$, payment $\frac{2}{3}$ is obtained from the {\Buyer} for {\Item} $i$, while the oracle obtains payment $\frac{1}{3}$ from the {\Buyer} for $i'$. The difference in rewards is
\begin{equation}
    f_{t}(\FPremain{t}{\gamma}, \FPprice{t}{\gamma}) - f_{t}(\hat{\VEC{\SurplusSym}}_{t}, \hat{\PriceVec}_{t}) = \frac{1}{3} \, . \label{formula_difference_in_payments_for_deterministic_oracle}
\end{equation}
Moreover, since $i$ is sold out with $\FPprice{t}{\gamma}$, the adversary sets $t_{e}(i) = t$ and $t_{a}(i'') = t + 1$ for a new {\Item} $i'' > i'$. In such case, it is guaranteed that Eq.~\eqref{formula_invariant_surplus_vector_for_deterministic_oracle} holds for $t + 1$.

For the case where $p > \frac{1}{3}$, it can be verified that Eq.~\eqref{formula_difference_in_payments_for_deterministic_oracle} still holds for $t$ and Eq.~\eqref{formula_invariant_surplus_vector_for_deterministic_oracle}  holds for $t + 1$ when the adversary sets $v_{t}(i) = v_{t}(i') = \frac{1}{3}$. Since Eq.~\eqref{formula_difference_in_payments_for_deterministic_oracle} is established for every $t \geq \tinit$, $CR = \frac{1}{3}(\tfinal - \tinit)$. With $\tinit = o(T)$, taking $\tfinal = T$ gives the desired bound.
\end{proof}

We now turn to study chasing oracles for DRACC instances implemented by randomized procedures.

\begin{theorem}\label{theorem_chasing_condition}
The Dd-MDPs corresponding to $T$-round DRACC instances with $k_{t}$-demand valuation functions are
$O (\sqrt{C W \cdot T})$-chasable.
\end{theorem}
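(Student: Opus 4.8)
The plan is to construct a \emph{randomized} chasing oracle that imitates the target policy $\gamma$ on its own state while using randomness to reconcile its \Surplus{} vector with that of the phantom run of $\gamma$ from $s_{1}$. Write $d_{t}(i) = \hat{s}(t)(i) - \FPS{t}{\gamma}(i)$ for the signed discrepancy on \Item{} $i$ between the oracle's state $\hat{s}(t)$ and the policy's simulated state $\FPS{t}{\gamma}$. The first fact I would record is that discrepancies are only inherited, never created by arrivals: whenever a \Item{} (re)arrives after $\tinit$ both runs receive the same fresh \Multiplicity{} $\MultiSym{i}$, so $d_{t}(i) = 0$ at its arrival. Hence the only nonzero discrepancies live on the at most $W$ \Items{} already active at $\tinit$, each of magnitude at most $\MaxMultiSym$, giving at most $\MaxMultiSym\, W$ units of signed discrepancy to dissipate over the invocation.

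Next I would isolate why naive mimicking is both safe and insufficient. Suppose the oracle posts $\gamma(\hat{s}(t))$, which coincides with the phantom price $\gamma(\FPS{t}{\gamma})$ on every synchronized \Item{}; on such \Items{} the same \Buyer{} selects the same bundle, the sale is replicated, and $d_{t}(i)$ is unchanged. The only discrepancy-moving event is a \Item{} $i$ the oracle has exhausted ($\hat{s}(t)(i) = 0$) while the phantom has not: under a $k_{t}$-demand valuation the \Buyer{} then substitutes a different active \Item{} $j$, still buying the same \emph{number} of units. Consequently $\sum_{i} d_{t}(i)$ is conserved (the deficit merely relocates from $i$ to $j$), and the reward gap created by one substitution is at most the price gap between $j$ and $i$, hence at most $1$. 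So mimicking keeps the total deficit constant yet pays up to $1$ per \Step{} for as long as any deficit survives, which over the horizon is exactly the $\Omega(T)$ behaviour witnessed by the deterministic adversary of \Cref{prop:deterministic-chasing-oracle}.

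To actually drain the deficit the oracle must, in some \Steps{}, sell one fewer unit than the phantom, which it can only do by raising a price to suppress a sale \emph{before} seeing $v_{t}$ and without knowing the phantom \Surplus{} vector of the pre-$\tinit$ \Items{} — which is precisely where randomization becomes indispensable in view of \Cref{prop:deterministic-chasing-oracle}. I would have the oracle, independently in each \Step{} with a tuned probability $\epsilon$, deviate by raising to $1$ the price of a uniformly random deficit \Item{}, and otherwise mimic. Randomizing the suppressed \Item{} prevents the adaptive adversary from anchoring the exhausted \Item{}, so the total-deficit process becomes a bounded-increment supermartingale with negative drift of order $\epsilon$, while each deviation costs at most $1$. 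The chasing regret then splits into an exploration term bounded by $\epsilon\cdot(\tfinal - \tinit)\le \epsilon T$ and a synchronization term governed by the time the supermartingale needs to absorb the initial $\le \MaxMultiSym\, W$ units of deficit; balancing these via a martingale (Azuma--Hoeffding) fluctuation bound of range $\MaxMultiSym$ across the $W$ active \Items{} yields the claimed $O(\sqrt{\MaxMultiSym\, W\cdot T})$ bound at $\epsilon \asymp \sqrt{\MaxMultiSym\, W / T}$.

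I expect the main obstacle to be the rigorous coupling through the $k_{t}$-demand substitution against an adaptive adversary: one must show that, however the adversary sets $v_{t}$ after observing the oracle's randomized prices, (i) each randomized suppression removes a unit of deficit with constant probability, so the drift is genuinely $\Theta(\epsilon)$, and (ii) the feasibility-induced substitutions and the exploratory deviations together lose only $O(\epsilon)$ expected reward per \Step{}, uniformly over the unknown phantom state of the old \Items{}. Controlling how one exhausted \Item{} can cascade substitutions across the up-to-$W$ coupled \Items{} of a single $k_{t}$-demand \Buyer{} (including the possibility that the \Buyer{}'s purchased \emph{count} changes, which breaks the clean conservation of $\sum_{i} d_{t}(i)$), and converting the supermartingale drift into the $O(\sqrt{\MaxMultiSym\, W\cdot T})$ in-expectation guarantee, is the technical heart of the argument.
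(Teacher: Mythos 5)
There is a genuine gap, and it sits at the very core of your construction: your oracle ``mimics'' by posting $\gamma(\hat{s}(t))$, i.e., the policy applied to the oracle's \emph{own} state, and the entire analysis rests on the claim that this price vector coincides with the phantom price vector $\gamma\left(\FPS{t}{\gamma}\right)$ on every synchronized {\Item}. That claim is false: a pricing policy is an arbitrary map from inventory vectors to feasible price vectors, with no coordinate-wise consistency whatsoever, so when $\hat{s}(t) \neq \FPS{t}{\gamma}$ the two price vectors may disagree on \emph{every} {\Item}, including those with $d_{t}(i)=0$. Once this premise falls, so does everything downstream --- the replication of sales on synchronized {\Items}, the conservation of $\sum_{i} d_{t}(i)$ under $k_{t}$-demand substitution, and the claim that each deviation costs $O(1)$ per {\Step}. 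The paper's oracle avoids this entirely: since the full valuation feedback lets it simulate $\gamma$'s run from round $1$, it replays the phantom's \emph{actual} prices $\PriceVec^{\gamma}_{t} = \gamma(\SurplusVec^{\gamma}_{t})$, overridden to $1$ exactly on the set $\mathtt{Bad}_{t}$ of {\Items} where its own inventory is strictly below the phantom's (which also guarantees feasibility). With that choice, in any round where $\gamma$ sells no $\mathtt{Bad}$ {\Item}, the oracle's sales and payments are \emph{identical} to the phantom's, with no appeal to the valuation structure at all.

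The second gap is the deficit-clearing mechanism. A deficit on {\Item} $i$ shrinks only when the \emph{phantom} sells $i$ while the oracle does not; the oracle cannot cause such an event, it can only avoid spoiling it. Accordingly, the paper prices all $\mathtt{Bad}$ {\Items} at $1$ deterministically (the oracle never sells them), and with probability $\epsilon$ posts the all-$1$ vector so that in any $\mathtt{Missing}$ round there is probability $\epsilon$ that the oracle sells nothing, in which case the potential $\phi(t) = \sum_{i \in \mathtt{Bad}_{t}} \left( \SurplusVec^{\gamma}_{t}(i) - \hat{\SurplusVec}_{t}(i) \right)$ strictly decreases; the $k_{t}$-demand property is used only once, to show $\phi$ never \emph{increases} in the remaining rounds, via $|S^{\gamma} \cap \mathtt{Bad}_{t}| \geq |\hat{S} \setminus S^{\gamma}|$. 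In your scheme the deficit {\Items} keep their mimicked prices with probability $1-\epsilon$, so the oracle can sell them and deepen the deficit, and raising the price of a \emph{single random} deficit {\Item} does not prevent the oracle's other sales in that round from creating fresh deficits that offset the phantom's; your claimed $\Theta(\epsilon)$ supermartingale drift is therefore not established (you concede as much regarding the broken conservation law). Finally, no Azuma--Hoeffding argument is needed: once $\phi$ is non-increasing and strictly decreases with probability $\epsilon$ in each $\mathtt{Missing}$ round, the number of $\mathtt{Missing}$ rounds is dominated in expectation by a negative binomial variable, giving $\mathbb{E}[\#\mathtt{M}] \leq CW/\epsilon$, hence $\textrm{CR} \leq \epsilon T + CW/\epsilon = O(\sqrt{CW \cdot T})$ at $\epsilon = \sqrt{CW/T}$, which is the balance you correctly anticipated.
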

\begin{proof}
Consider some DRACC instance and fix the target pricing policy
$\gamma \in \Gamma$;
in what follows, we identify $\gamma$ with a decision maker that repeatedly plays according to $\gamma$.
Given an initial round $\tinit$ and an initial inventory vector
$\hat{\VEC{\SurplusSym}}_{\tinit}$,
we construct a randomized chasing oracle $\mathcal{O}^{\mathtt{Chasing}}$ that works as follows until it is halted at the end of round
$\tfinal \geq \tinit$.
For each round
$\tinit \leq t \leq \tfinal$,
recall that $\SurplusVec^{\gamma}_{t}$ is the inventory vector at time $t$ obtained by running $\gamma$ from round $1$ to $t$, and let $\hat{\SurplusVec}_{t}$ be the inventory vector at time $t$ obtained by $\mathcal{O}^{\mathtt{Chasing}}$ 
as defined in Eq.~\eqref{formula_states_visited_in_local_online_learning}.
We partition the set $A_{t}$ of resources active at time $t$ into
$\mathtt{Good}_{t} =
\{ i \in A_{t} \mid \SurplusVec^{\gamma}_{t}(i) \leq \hat{\SurplusVec}_{t}(i) \}$
and
$\mathtt{Bad}_{t} = A_{t} \setminus \mathtt{Good}_{t}$.
In each round
$\tinit \leq t \leq \tfinal$,
the chasing oracle posts the ($|A_{t}|$-dimensional) all-$1$ price vector with probability $\epsilon$, where
$\epsilon \in (0, 1)$
is a parameter to be determined later on;
and it posts the price vector
\[
\hat{\PriceVec}_{t}
\, = \,
\begin{cases}
\PriceVec^{\gamma}_{t}(i) & \text{if } i \in \mathtt{Good}_{t} \\
1 & \text{if } i \in \mathtt{Bad}_{t}
\end{cases}
\]
with probability
$1 - \epsilon$,
observing that this price vector is feasible for $\hat{\SurplusVec}_{t}$ by the definition of $\mathtt{Good}_{t}$ and $\mathtt{Bad}_{t}$.
Notice that $\mathcal{O}^{\mathtt{Chasing}}$ never sells a resource
$i \in \mathtt{Bad}_{t}$
and that
$\hat{\PriceVec}_{t}(i) \geq \PriceVec^{\gamma}_{t}(i)$
for all
$i \in A_{t}$.
Moreover, if resource $i$ arrives at time
$t_{a}(i) = t > \tinit$,
then
$i \in \mathtt{Good}_{t}$.

To analyze the CR, we classify the rounds in
$[\tinit, \tfinal]$
into two classes called $\mathtt{Following}$ and $\mathtt{Missing}$:
round $t$ is said to be $\mathtt{Missing}$ if at least one (unit of a) resource in $\mathtt{Bad}_{t}$ is sold by $\gamma$ in this round;
otherwise, round $t$ is said to be $\mathtt{Following}$.
For each $\mathtt{Following}$ round $t$, if $\mathcal{O}^{\mathtt{Chasing}}$ posts $\hat{\PriceVec}_{t}$ in round $t$, then $\mathcal{O}^{\mathtt{Chasing}}$ sells exactly the same resources as $\gamma$ for the exact same prices;
otherwise ($\mathcal{O}^{\mathtt{Chasing}}$ posts the all-$1$ price vector in round $t$), $\mathcal{O}^{\mathtt{Chasing}}$ does not sell any resource.
Hence, the CR increases in round $t$ by at most $\epsilon$ in expectation.
For each $\mathtt{Missing}$ round $t$, the CR increases in round $t$ by at most $1$.
Therefore the total CR over the interval
$[\tinit, \tfinal]$
is upper bounded by
$\epsilon \cdot \mathbb{E}[\#\mathtt{F}] + \mathbb{E}[\#\mathtt{M}]
\leq
\epsilon \cdot T + \mathbb{E}[\#\mathtt{M}]$,
where $\#\mathtt{F}$ and $\#\mathtt{M}$ denote the number of $\mathtt{Following}$ and $\mathtt{Missing}$ rounds, respectively.

To bound $\mathbb{E}[\#\mathtt{M}]$, we introduce a potential function $\phi(t)$,
$\tinit \leq t \leq \tfinal$,
defined by setting
\[\textstyle
\phi(t)
\, = \,
\sum_{i \in \mathtt{Bad}_{t}} \SurplusVec^{\gamma}_{t}(i) - \hat{\SurplusVec}_{t}(i)
\]
By definition,
$\phi(\tinit) \leq C W$
and
$\phi(\tfinal) \geq 0$.
We argue that $\phi(t)$ is non-increasing in $t$.
To this end, notice that if $t$ is a $\mathtt{Following}$ round, then
$\mathtt{Bad}_{t + 1} \subseteq \mathtt{Bad}_{t}$,
hence
$\phi(t + 1) \leq \phi(t)$.
If $t$ is a $\mathtt{Missing}$ round and $\mathcal{O}^{\mathtt{Chasing}}$ posts the all-$1$ price vector, then
$\phi(t + 1) < \phi(t)$
as $\mathcal{O}^{\mathtt{Chasing}}$ sells no resource whereas $\gamma$ sells at least one (unit of a) resource in $\mathtt{Bad}_{t}$.
So, it remains to consider a $\mathtt{Missing}$ round $t$ in which $\mathcal{O}^{\mathtt{Chasing}}$ posts the price vector $\hat{\PriceVec}_{t}$.
Let $S^{\gamma}$ and $\hat{S}$ be the sets of (active) resources sold by $\gamma$ and $\mathcal{O}^{\mathtt{Chasing}}$, respectively, in round $t$ and notice that a resource
$i \in \hat{S} \setminus S^{\gamma}$
may move from
$i \in \mathtt{Good}_{t}$
to
$i \in \mathtt{Bad}_{t + 1}$.
The key observation now is that since $v_{t}$ is a $k_{t}$-demand valuation function, it follows that
$S^{\gamma} \cap \mathtt{Good}_{t} \subseteq \hat{S} \cap \mathtt{Good}_{t}$,
thus
$|S^{\gamma} \cap \mathtt{Bad}_{t}| \geq |\hat{S} \setminus S^{\gamma}|$.
As both $\gamma$ and $\mathcal{O}^{\mathtt{Chasing}}$ sell exactly one unit of each resource in $S^{\gamma}$ and $\hat{S}$, respectively, we conclude that
$\phi(t + 1) \leq \phi(t)$.

Therefore, $\mathbb{E}[\#\mathtt{M}]$ is upper bounded by
$C W$
plus the expected number of $\mathtt{Missing}$ rounds in which $\phi(t)$ does not decrease.
Since $\phi(t)$ strictly decreases in each $\mathtt{Missing}$ round $t$ in which $\mathcal{O}^{\mathtt{Chasing}}$ posts the all-$1$ price vector, it follows that the number of $\mathtt{Missing}$ rounds in which $\phi(t)$ does not decrease is stochastically dominated by a negative binomial random variable $Z$ with parameters $C W$ and $\epsilon$.
Recalling that
$\mathbb{E}[Z] = (1 - \epsilon) \cdot C W / \epsilon$,
we conclude that
$\mathbb{E}[\#\mathtt{M}]
\leq
C W + \mathbb{E}[Z]
=
C W / \epsilon$.
The assertion is now established by setting
$\epsilon = \sqrt{C W / T}$.
\end{proof}

\begin{remark} \label{remark:generalize-k-demand}
\Cref{theorem_chasing_condition} can be in fact extended -- using the exact same line of arguments -- to a more general family of valuation functions $v_{t}$ defined as follows.
Let $\PriceVec$ be a price vector, 
$B \subseteq A_{t}$
be a subset of the active resources, and $\PriceVec'$ be the price vector obtained from $\PriceVec$ by setting
$\PriceVec'(i) = 1$
if
$i \in B$;
and
$\PriceVec'(i) = \PriceVec(i)$
otherwise.
Then,
$|\hat{A}^{\PriceVec}_{t} \cap B| \geq
|\hat{A}^{\PriceVec'}_{t} \setminus \hat{A}^{\PriceVec}_{t}|$.
Besides $k_t$-demand valuations, this class of valuation functions includes OXS valuations~\cite{lehmann2006combinatorial} and single-minded valuations~\cite{lehmann2002truth}.
\end{remark}
\vspace{1mm}
\begin{theorem}\label{theorem_chasing_condition-general}
The Dd-MDPs corresponding to $T$-round DRACC instances with arbitrary valuation functions are
$O\left(T^{\tfrac{C W}{C W + 1}}\right)$-chasable.
\end{theorem}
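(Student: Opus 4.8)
The plan is to reuse verbatim the chasing oracle $\mathcal{O}^{\mathtt{Chasing}}$ and the bookkeeping of \Cref{theorem_chasing_condition}: in each round the oracle posts the all-$1$ vector with probability $\epsilon$ and otherwise posts $\hat{\PriceVec}_{t}$ (equal to $\PriceVec^{\gamma}_{t}$ on $\mathtt{Good}_{t}$ and $1$ on $\mathtt{Bad}_{t}$), partition $[\tinit,\tfinal]$ into $\mathtt{Following}$ and $\mathtt{Missing}$ rounds, and track the deficiency potential $\phi(t)=\sum_{i\in\mathtt{Bad}_{t}}\bigl(\SurplusVec^{\gamma}_{t}(i)-\hat{\SurplusVec}_{t}(i)\bigr)\in[0,CW]$. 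First I would observe that the analysis of $\mathtt{Following}$ rounds carries over unchanged to arbitrary valuations: when the oracle posts $\hat{\PriceVec}_{t}$ in such a round, raising the prices of $\mathtt{Bad}_{t}$ to $1$ can only weakly lower the utility of any subset containing a bad resource while leaving the good-only subsets untouched, so by consistency of the tie-breaking the buyer's demand set is still exactly $S^{\gamma}$ and the oracle matches $\gamma$ perfectly. Hence $\mathtt{Following}$ rounds contribute at most $\epsilon$ each to $\mathrm{CR}$, never increase $\phi$, and the state $\phi=0$ is absorbing (a $\mathtt{Missing}$ round requires selling a bad resource, which is impossible once $\mathtt{Bad}_{t}=\emptyset$). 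The entire difficulty therefore concentrates on the $\mathtt{Missing}$ rounds.

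The crucial difference from the $k_{t}$-demand case is that the monotonicity of \Cref{remark:generalize-k-demand} fails: in a $\mathtt{Missing}$ round where the oracle posts $\hat{\PriceVec}_{t}$, an arbitrary valuation may make the buyer substitute away from the now-expensive bad resources toward good resources that $\gamma$ does not sell, depleting the oracle's inventory on good resources that are exactly tight ($\hat{\SurplusVec}_{t}(i)=\SurplusVec^{\gamma}_{t}(i)$) and pushing them into $\mathtt{Bad}_{t+1}$. Consequently $\phi$ is no longer non-increasing; a single such round can raise it by as much as $|\hat{S}\setminus S^{\gamma}|\le W$. This breaks the negative-binomial counting of \Cref{theorem_chasing_condition}, and a crude ``total increase versus total decrease'' balance is useless because the worst-case per-round increase ($W$) dwarfs the per-round decrease rate ($\epsilon$).

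To salvage a bound I would replace the one-shot counting by a recursive descent over the $CW$ levels of $\phi$. For $1\le j\le CW$ let $\psi_{j}$ denote the expected number of $\mathtt{Missing}$ rounds needed to bring $\phi$ strictly below $j$, starting from a round in which $\phi=j$. In any such round the oracle idles with probability $\epsilon$ (independently of the adversary), which strictly decreases $\phi$ and already carries it below $j$; with probability $1-\epsilon$ it posts $\hat{\PriceVec}_{t}$, in the worst case raising $\phi$ by up to $W$, after which the process must re-descend through the intervening levels and through level $j$ itself. Since $\psi_{l}$ is decreasing in $l$ (levels closer to the ceiling $CW$ are cheaper to leave), absorbing the reappearing $\psi_{j}$ term into the left-hand side yields a recursion of the shape $\epsilon\,\psi_{j}\le 1+(1-\epsilon)\sum_{l=j+1}^{j+W}\psi_{l}\le 1+W\,\psi_{j+1}$, with boundary $\psi_{CW}\le 1/\epsilon$. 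Unrolling gives $\psi_{j}=O\bigl((W/\epsilon)^{CW-j+1}\bigr)$, so that, treating $C$ and $W$ as the instance-independent constants they are, $\mathbb{E}[\#\mathtt{M}]\le\sum_{j=1}^{CW}\psi_{j}=O(\epsilon^{-CW})$.

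Putting the pieces together, $\mathrm{CR}\le \epsilon\cdot\#\mathtt{F}+\#\mathtt{M}\le \epsilon T+O(\epsilon^{-CW})$ in expectation; optimizing the trade-off at $\epsilon=\Theta\bigl(T^{-1/(CW+1)}\bigr)$ produces the claimed $O\bigl(T^{CW/(CW+1)}\bigr)$ chasing regret. I expect the main obstacle to be making the level-recursion rigorous: one must control the multi-unit upward jumps (bounding them by $W$ and verifying the invariants that idling strictly decreases $\phi$ and that $\phi=0$ is absorbing) and handle the adversary's adaptivity, since the valuations --- and hence which rounds are $\mathtt{Missing}$ and the sizes of the jumps --- are chosen in response to the oracle's realized prices. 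The exploration coin, however, remains independent of the adversary, which is exactly what keeps the per-level decrease probability at least $\epsilon$ and lets the recursion close.
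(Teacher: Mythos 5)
Your proposal is correct in substance and reaches the paper's bound, but it replaces the paper's key counting step with a different and heavier argument. Up to the treatment of $\mathtt{Missing}$ rounds the two proofs coincide: same oracle, same potential $\phi$, same observations that $\mathtt{Following}$ rounds contribute at most $\epsilon$ each and never increase $\phi$, that $\phi$ strictly decreases whenever the all-$1$ vector is posted in a $\mathtt{Missing}$ round, that $\phi=0$ is absorbing, and the same final trade-off $\epsilon T + O(\epsilon^{-C W})$ optimized at $\epsilon = \Theta\big(T^{-1/(C W + 1)}\big)$. Where you diverge is in bounding $\mathbb{E}[\#\mathtt{M}]$: your level-by-level recursion on $\psi_j$ forces you to control the upward jumps of $\phi$ (up to $W$ per round) and to invoke the unproven monotonicity of $\psi_l$ in $l$ to close the recursion --- exactly the step you flag as the main obstacle. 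The paper sidesteps upward jumps entirely: since the exploration coin is independent of the adversary and $0 \le \phi \le C W$ always, any $C W$ consecutive $\mathtt{Missing}$ rounds that all draw the $\epsilon$-coin force $\phi$ to reach $0$ (no increase can occur in the interleaved $\mathtt{Following}$ rounds), after which no $\mathtt{Missing}$ round ever occurs again; hence $\#\mathtt{M}$ is stochastically dominated by $C W$ times a geometric random variable with parameter $\epsilon^{C W}$, giving $\mathbb{E}[\#\mathtt{M}] \le C W \cdot \epsilon^{-C W}$ in two lines, with no need to track how much $\phi$ can rise in the remaining $\mathtt{Missing}$ rounds. Your recursion, once made rigorous (the monotonicity appeal can be avoided by truncating the sum at the ceiling $C W$, since $\psi_l$ is vacuous for $l > C W$), yields $\mathbb{E}[\#\mathtt{M}] = O\big(W^{C W}\epsilon^{-C W}\big)$ --- the same dependence on $T$ after optimizing $\epsilon$, but a worse constant in $C$ and $W$; since the theorem treats $C$ and $W$ as constants independent of $T$, both arguments give $O\big(T^{C W/(C W+1)}\big)$-chasability, and the paper's block-of-coin-flips argument is simply the shorter and cleaner route.
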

\vspace{-4mm}
\begin{proof}
The proof follows the same line of arguments as that of \Cref{theorem_chasing_condition}, only that now, it no longer holds that the potential function $\phi(t)$ is non-increasing in $t$.
However, it is still true that
(I)
$0 \leq \phi(t) \leq C W$
for every
$\tinit \leq t \leq \tfinal$;
(II)
if
$\tinit \leq t < \tfinal$
is a $\mathtt{Missing}$ round and $\mathcal{O}^{\mathtt{Chasing}}$ posts the all-$1$ price vector in round $t$, then
$\phi(t + 1) < \phi(t)$;
and
(III)
if
$\phi(t) = 0$
for some
$\tinit \leq t \leq \tfinal$,
then
$\phi(t') = 0$
for all
$t < t' \leq \tfinal$.
We conclude that if $\mathcal{O}^{\mathtt{Chasing}}$ posts the all-$1$ price vector in 
$C W$
contiguous $\mathtt{Missing}$ rounds, then $\phi(\cdot)$ must reach zero and following that, there are no more $\mathtt{Missing}$ rounds.
Therefore the total number $\#\mathtt{M}$ of $\mathtt{Missing}$ rounds is stochastically dominated by
$C W$
times a geometric random variable $Z$ with parameter
$\epsilon^{C W}$.
Since
$\mathbb{E}[Z] = \epsilon^{-C W}$,
it follows that
$\mathbb{E}[\#\mathtt{M}] \leq C W / \epsilon^{C W}$.
Combined with the $\mathtt{Following}$ rounds, the CR is upper bounded by
$\epsilon \cdot T + C W / \epsilon^{C W}$.
The assertion is established by setting
$\epsilon = (T / (C W))^{-1 / (C W + 1)}$.
%
\end{proof}

\subsection{Putting the Pieces Together: Reduction to Online Learning with Switching Cost}
\label{section:off-the-shelf-SOLD}
Having an ongoing chasing oracle with vanishing chasing regret in hand, our remaining key technical idea is to reduce online decision making for the Dd-MDP problem to the well-studied problem of \emph{online learning with switching cost (OLSC)}~\cite{Kailai2005EAO}.
The problem's setup under full-information is exactly the same as the classic problem of learning from experts' advice, but the learner incurs an extra cost $\Delta > 0$,
a parameter referred to as the \emph{switching cost}, whenever it switches from one expert to another.
Here, we have a finite set $\Gamma$ of experts (often called actions or arms) and $T \in \mathbb{Z}_{> 0}$ rounds.
The expert reward function
$F_{t} : \Gamma \mapsto [0, 1)$
is revealed as feedback at the end of round
$t = 1, \dots, T$.
The goal of an algorithm for this problem is to pick a sequence
$\gamma_1, \ldots, \gamma_T$
of experts in an online fashion with the objective of minimizing the regret, now defined to be
\[\textstyle
\max_{\gamma \in \Gamma} \sum_{t \in [T]} F_{t}(\gamma)
-
\left( \sum_{t \in [T]} \mathbb{E} \left[ F_{t}(\gamma_{t}) \right]
-
\Delta \cdot \sum_{t = 2}^{T} 1_{\gamma_{t} \neq \gamma_{t - 1}} \right) \, .
\]


\begin{theorem}[\cite{Kailai2005EAO}]\label{lemma_preliminary_on_ODSC}
The OLSC problem with switching cost $\Delta$ admits an online algorithm $\mathcal{A}$ whose regret is
$O \left( \sqrt{\Delta \cdot T \log |\Gamma|} \right)$.
\end{theorem}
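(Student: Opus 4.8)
The plan is to obtain the bound by wrapping a standard no-regret experts algorithm in a device that keeps the number of expert switches small, and then balancing the learning rate against the switching penalty $\Delta$. As the base learner I would take exponential weights (Hedge) run on the rewards $F_t$ with a learning rate $\eta \in (0,1)$: it maintains weights $w_t(\gamma) = \exp\!\big(\eta \sum_{s < t} F_s(\gamma)\big)$ and the distribution $p_t(\gamma) = w_t(\gamma)/\sum_{\gamma' \in \Gamma} w_t(\gamma')$, with the textbook guarantee that its reward regret (ignoring switches) is $O\big(\tfrac{\log |\Gamma|}{\eta} + \eta T\big)$. If one simply drew $\gamma_t \sim p_t$ independently in each round, the reward part would already be controlled, but the number of switches could be $\Omega(T)$ and the total switching cost $\Omega(\Delta T)$; the whole difficulty is to realize the very same per-round marginals $p_t$ while switching only rarely.

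The switch-sparing device I would use is a lazy resampling (shrinking-dartboard-type) coupling. First I rewrite the update multiplicatively as $w_{t+1}(\gamma) = w_t(\gamma)\cdot e^{-\eta(1 - F_t(\gamma))}$, so that every multiplier lies in $(e^{-\eta},1]$ while the induced $p_t$ coincides with the Hedge distribution above. At the transition into round $t+1$ the algorithm keeps its current expert $\gamma_t$ with probability equal to that expert's own shrink factor $e^{-\eta(1 - F_t(\gamma_t))}$, and otherwise draws a fresh expert from $p_{t+1}$. The invariant I would verify is that $\gamma_t \sim p_t$ implies $\gamma_{t+1} \sim p_{t+1}$: writing $Z_t = \sum_{\gamma} w_t(\gamma)$ and expanding $\Pr[\gamma_{t+1} = \gamma']$ as a ``stay'' term plus a ``resample'' term, the identity $p_t(\gamma')\, e^{-\eta(1 - F_t(\gamma'))} = p_{t+1}(\gamma')\cdot (Z_{t+1}/Z_t)$ together with $\Pr[\text{resample}] = 1 - Z_{t+1}/Z_t$ makes the two terms collapse exactly to $p_{t+1}(\gamma')$. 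This invariant is what makes laziness free: the marginals, hence the expected reward, are exactly those of Hedge, so the reward regret remains $O\big(\tfrac{\log|\Gamma|}{\eta} + \eta T\big)$. For the switch count I would note that the algorithm can switch only on a resampling step, and $\Pr[\text{resample at } t] = 1 - Z_{t+1}/Z_t \le \eta$ because $e^{-\eta(1 - F_t(\gamma))} \ge 1 - \eta$; summing over rounds bounds the expected number of switches by $\eta T$.

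Putting the pieces together, the total regret (reward part plus switching cost) is at most $O\big(\tfrac{\log|\Gamma|}{\eta} + \eta T\big) + \Delta\cdot \eta T = O\big(\tfrac{\log|\Gamma|}{\eta} + (1+\Delta)\,\eta T\big)$, and choosing $\eta = \sqrt{\log|\Gamma| / ((1+\Delta) T)}$ balances the two terms to $O\big(\sqrt{(1+\Delta)\, T \log|\Gamma|}\big)$, which is $O\big(\sqrt{\Delta\, T \log|\Gamma|}\big)$ in the relevant regime $\Delta \ge 1$. The hard part is the coupling of the second paragraph: simultaneously certifying that the lazy resampling preserves the Hedge marginals and that each resampling probability is bounded by $\eta$; once this coupling is in hand, the Hedge regret estimate and the arithmetic of tuning $\eta$ are routine. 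An alternative route achieving the identical bound is the follow-the-lazy-leader perturbation argument of the cited work, in which a single shared perturbation is reused through a lattice coupling so that the perturbed leader seldom changes, with the same learning-rate-versus-$\Delta$ tradeoff.
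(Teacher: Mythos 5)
Your proof is correct, but note that the paper itself does not prove this statement at all: it is imported as a black-box result from the cited work of Kalai and Vempala, whose algorithm is Follow-the-Lazy-Leader --- a perturbed-leader scheme in which a single perturbation is reused through a lattice coupling so that the perturbed leader rarely changes (the alternative you mention in your last sentence). Your argument instead reproves the theorem from scratch via the multiplicative-weights lazy-resampling coupling (the ``shrinking dartboard'' technique of Geulen, V\"ocking and Winkler), and the three pillars of it check out: the identity $p_t(\gamma')\,e^{-\eta(1-F_t(\gamma'))} = p_{t+1}(\gamma')\,Z_{t+1}/Z_t$ does make the stay and resample terms collapse to $p_{t+1}(\gamma')$, so the marginals (hence the expected reward and the Hedge regret bound) are preserved; the per-round resampling probability is $1 - Z_{t+1}/Z_t \le \eta$ since every multiplier is at least $e^{-\eta} \ge 1-\eta$, giving at most $\eta T$ expected switches; and the tuning $\eta = \sqrt{\log|\Gamma|/((1+\Delta)T)}$ yields $O\big(\sqrt{(1+\Delta)\,T\log|\Gamma|}\big)$. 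Your caveat that this equals $O\big(\sqrt{\Delta\,T\log|\Gamma|}\big)$ only in the regime $\Delta = \Omega(1)$ is accurate and harmless here, since the paper invokes the theorem with $\Delta = \sigma$, a chasing-regret bound that grows with $T$. As for what each route buys: Follow-the-Lazy-Leader is computationally attractive when the expert set is a structured, possibly exponentially large decision set accessed through an optimization oracle, whereas your Hedge-based coupling is more elementary for a finite policy collection $\Gamma$ and preserves the exponential-weights marginals exactly. One compatibility point worth making explicit: immediately after the theorem, the paper extends it to independent stochastic switching costs, using the fact that the decision to switch is independent of the realized cost of switching; your algorithm has this property too (resampling is triggered only by the observed rewards and internal coins), so it can substitute for the perturbed-leader algorithm in that remark and hence in the C\&S reduction.
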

\vspace{-2mm}
Note that the same theorem also holds for independent stochastic switching costs with $\Delta$ as the upper bound on the expected switching cost, simply because of linearity of expectation and the fact that in algorithms for OLSC, such as the Following-The-Perturbed-Leader \cite{Kailai2005EAO}, switching at each time is independent of the realized cost of switching. 


We now present our full-information online learning algorithm for $\sigma$-chasable Dd-MDP instances;
the reader is referred to \Cref{section:bandit} for the bandit version of this algorithm.
Our (full-information) algorithm, called \emph{chasing and switching (C\&S)}, requires a black box access to an algorithm $\mathcal{A}$ for the OLSC problem with the following configuration:
(1)
the expert set of $\mathcal{A}$ is identified with the policy collection $\Gamma$ of the Dd-MDP instance;
(2)
the number of rounds of $\mathcal{A}$ is equal to the number of rounds of the Dd-MDP instance ($T$);
and
(3)
the switching cost of $\mathcal{A}$ is set to
$\Delta = \sigma$.

The operation of C\&S is described in Algorithm~\ref{alg:chasing_and_switching}. This algorithm maintains, in parallel, the OLSC algorithm $\mathcal{A}$ and an ongoing chasing oracle $\mathcal{O}^{\mathtt{Chasing}}$; $\mathcal{A}$ produces a sequence
$\{ \gamma_{t} \}_{t = 1}^{T}$
of policies and $\mathcal{O}^{\mathtt{Chasing}}$ produces a sequence
$\{ x_{t} \}_{t = 1}^{T}$
of actions based on that.
Specifically, $\mathcal{O}^{\mathtt{Chasing}}$ is \emph{restarted}, i.e., invoked from scratch with a fresh policy
$\gamma$, whenever $\mathcal{A}$ switches to $\gamma$ from some policy
$\gamma' \neq \gamma$.



\begin{algorithm}[t]
\caption{Online Dd-MDP algorithm C\&S}
\label{alg:chasing_and_switching}
	 \SetAlgoLined
	\KwIn{Policy set $\Gamma$, OLSC algorithm $\mathcal{A}$, chasing oracle $\mathcal{O}^{\mathtt{Chasing}}$, initial state $s_1$;}
	\KwOut{Sequence $x_1, \ldots, x_T$ of actions, (implicit) sequence $s_2, \ldots, s_T$ of states;}
	\vspace{1mm}
	Start from initial state $s_1$;\\
	\For{each {\Step} $t \in [T]$}{
	    \vspace{0.5mm}
		Invoke $\mathcal{A}$ to pick a policy $\gamma_{t}$ at the beginning of round $t$;\\
		\vspace{1mm}
		\eIf{$t > 1$ and $\gamma_{t} \neq \gamma_{t - 1}$}{
			Invoke $\mathcal{O}^{\mathtt{Chasing}}$ from scratch with target policy $\gamma_{t}$, initialized with round $t$ and state $s_{t}$; \\
			Select the action $x_t \leftarrow \hat{x}(t)$ returned by $\mathcal{O}^{\mathtt{Chasing}}$;
		}{
				Continue the existing run of $\mathcal{O}^{\mathtt{Chasing}}$ and select the action $x_t \leftarrow \hat{x}(t)$ it returns;
		}
		\vspace{1mm}
		Feed $\mathcal{O}^{\mathtt{Chasing}}$ with $g_{t}(\cdot, \cdot)$ and $f_{t}(\cdot, \cdot)$ as the state transition and reward functions of round $t$; \\
	    \vspace{1mm}
		\For{each $\gamma \in \Gamma$}{
			Compute $F_{t}(\gamma) \gets f_{t }(\FPS{t}{\gamma}, \FPA{t}{\gamma})$ by simulating policy $\gamma$ up to time $t$ (see Eq.~\eqref{formula_fixed_policy_states_and_actions}); \\
		}
		\vspace{1mm}
		Feed $\mathcal{A}$ with $F_{t}(\cdot)$ as the reward function of round $t$;
	}
\end{algorithm}

\begin{theorem}\label{theorem_regret_of_hola}
The regret of C\&S for $T$-round $\sigma$-chasable Dd-MDP instances is
$O \left( \sqrt{\sigma \cdot T \log |\Gamma|} \right)$.
\end{theorem}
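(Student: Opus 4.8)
**

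The plan is to prove Theorem~\ref{theorem_regret_of_hola} by decomposing the regret of C\&S into two pieces that are controlled by the two subroutines it runs in parallel: the OLSC algorithm $\mathcal{A}$ and the chasing oracle $\mathcal{O}^{\mathtt{Chasing}}$. Concretely, I would compare the reward collected by C\&S to the reward collected by the in-hindsight best policy $\gamma^{*} \in \Gamma$ by inserting, as an intermediate benchmark, the \emph{expert reward} sequence $\{F_{t}(\cdot)\}_{t=1}^{T}$ that C\&S feeds into $\mathcal{A}$ (defined in the algorithm by $F_{t}(\gamma) = f_{t}(\FPS{t}{\gamma}, \FPA{t}{\gamma})$, the reward of the clean from-scratch simulation of $\gamma$). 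The key point is that $F_{t}(\gamma^{*})$ telescopes to exactly the cumulative reward of the best policy, $\sum_{t} f_{t}(\FPS{t}{\gamma^{*}}, \FPA{t}{\gamma^{*}})$, which is the first term in the Dd-MDP regret~\eqref{formula_policy_regret}. So it suffices to relate the actual reward $\sum_{t}\Ex[f_{t}(s_{t},x_{t})]$ that C\&S earns along the chased trajectory to $\sum_{t}\Ex[F_{t}(\gamma_{t})]$, the reward that $\mathcal{A}$ \emph{thinks} it is earning via the expert feedback.

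\textbf{Step 1: account for the switching cost via the chasing guarantee.} The run of $[1,T]$ is partitioned by the switch times of $\mathcal{A}$ into maximal blocks on which a single policy $\gamma$ is played; on each such block, $\mathcal{O}^{\mathtt{Chasing}}$ is run from scratch from the current state $s_{\tinit}$ toward target $\gamma$. By the $\sigma$-chasability condition (\Cref{def:chase}), the chasing regret on each block is at most $\sigma$, i.e. the clean simulation's reward $\sum_{t \in \text{block}} f_{t}(\FPS{t}{\gamma},\FPA{t}{\gamma})$ exceeds the chased reward $\sum_{t\in\text{block}}\Ex[f_{t}(\hat{s}(t),\hat{x}(t))]$ by at most $\sigma$. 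Here the clean-simulation rewards on the block are exactly $F_{t}(\gamma_{t})$ by definition of $F_t$, and the chased rewards are exactly the actual rewards $f_{t}(s_{t},x_{t})$ of C\&S (since C\&S selects $x_t \leftarrow \hat{x}(t)$). Summing over blocks gives
\[\textstyle
\sum_{t \in [T]} \Ex[F_{t}(\gamma_{t})] - \sum_{t \in [T]} \Ex[f_{t}(s_{t},x_{t})]
\;\leq\; \sigma \cdot (\#\text{blocks}) \;=\; \sigma\Big(1 + \sum_{t=2}^{T} 1_{\gamma_t \neq \gamma_{t-1}}\Big).
\]
This is precisely where setting $\Delta = \sigma$ pays off: the per-block chasing loss is charged exactly like the OLSC switching cost.

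\textbf{Step 2: invoke the OLSC regret bound.} Applying \Cref{lemma_preliminary_on_ODSC} to $\mathcal{A}$ (with expert set $\Gamma$, reward functions $F_t$, and switching cost $\Delta = \sigma$) bounds its regret:
\[\textstyle
\max_{\gamma\in\Gamma}\sum_{t}F_{t}(\gamma)
- \Big(\sum_{t}\Ex[F_{t}(\gamma_{t})] - \sigma\sum_{t=2}^{T}1_{\gamma_t\neq\gamma_{t-1}}\Big)
\;=\; O\big(\sqrt{\sigma\cdot T\log|\Gamma|}\big).
\]
Adding the inequality from Step~1 to this bound, the two switching-cost terms combine — the $-\sigma\sum 1_{\gamma_t\neq\gamma_{t-1}}$ inside the OLSC regret cancels the identical penalty incurred by the per-block chasing losses, leaving only the additive $\sigma$ from the single initial block, which is absorbed into the $O(\cdot)$ term. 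Recalling $\max_{\gamma}\sum_t F_t(\gamma) = \max_{\gamma}\sum_t f_t(\FPS{t}{\gamma},\FPA{t}{\gamma})$, the telescoped sum, and $\sum_t\Ex[f_t(s_t,x_t)]$ is the realized C\&S reward, we obtain that the Dd-MDP regret~\eqref{formula_policy_regret} is $O(\sqrt{\sigma\cdot T\log|\Gamma|})$, as claimed.

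The step I expect to require the most care is the \emph{exact} bookkeeping in Step~1: one must verify that the chased reward in each block coincides with the realized reward of C\&S, that $F_t(\gamma_t)$ on a block equals the clean-simulation reward of the policy being chased (not the chased reward), and that the number of blocks equals one plus the number of switches, so that the chasing penalty aligns term-by-term with the OLSC switching penalty. A subtle point worth flagging is that $\mathcal{A}$'s switching decisions and the oracle's internal randomness are independent (as noted after \Cref{lemma_preliminary_on_ODSC}), which justifies applying the $\sigma$ chasing bound \emph{in expectation} on each block even though block boundaries are themselves random; everything else is a matter of summing the two displayed inequalities and simplifying.
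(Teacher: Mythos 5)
Your proof is correct and takes essentially the same route as the paper's: both partition $[T]$ into maximal single-policy episodes, bound the per-episode gap between the from-scratch simulation reward $F_{t}(\gamma_{\theta})$ and the realized chased reward by $\sigma$ via \Cref{def:chase}, and then invoke the OLSC guarantee (\Cref{lemma_preliminary_on_ODSC}) with $\Delta = \sigma$ so that the accumulated chasing losses are charged against the switching-cost term. The only differences are presentational --- you add the two inequalities where the paper chains them, and your explicit extra $\sigma$ from the initial block is absorbed into the $O(\cdot)$ exactly as in the paper.
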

\vspace{-4mm}
\begin{proof}
Partition the $T$ {\Steps} into episodes
$\lbrace 1, 2, \dots \rbrace$
so that each episode $\theta$ is a maximal contiguous sequence of rounds in which the policy $\gamma_{\theta}$ chosen by $\mathcal{A}$ does not change.
Let $t_{\theta}$ and $t'_{\theta}$ be the first and last rounds of episode $\theta$, respectively.
Consider some episode $\theta$ with corresponding policy $\gamma_{\theta}$.
Since C\&S follows an action sequence generated by $\mathcal{O}^{\mathtt{Chasing}}$ during the round interval
$[t_{\theta}, t'_{\theta}]$ and since the chasing regret of $\mathcal{O}^{\mathtt{Chasing}}$ is upper bounded by
$\sigma = \Delta$,
it follows that
\[\textstyle
\sum_{t = t_{\theta}}^{t'_{\theta}} F_{t}(\gamma_{\theta})
-
\sum_{t = t_{\theta}}^{t'_{\theta}} \mathbb{E} \left[ f_{t}(s_{t}, x_{t}) \right]
\, = \,
\sum_{t = t_{\theta}}^{t'_{\theta}}
f_{t} \left( \FPS{t}{\gamma_{\theta}}, \FPA{t}{\gamma_{\theta}} \right)
-
\sum_{t = t_{\theta}}^{t'_{\theta}} \mathbb{E} \left[ f_{t}(s_{t}, x_{t}) \right]
\, \leq \,
\Delta \, .
\]
Therefore, for each policy
$\gamma \in \Gamma$, we have
\vspace{-3mm}
\begin{align*}
\sum_{t \in [T]} f_{t} \left( \FPS{t}{\gamma}, \FPA{t}{\gamma} \right)
-
\sum_{t\in [T]}\mathbb{E} \left[ f_{t}(s_{t}, x_{t}) \right] 
&\leq
\sum_{t \in [T]} f_{t} \left( \FPS{t}{\gamma}, \FPA{t}{\gamma} \right)
-
\sum_{\theta} \left(
\sum_{t = t_{\theta}}^{t'_{\theta}}
\mathbb{E} \left[ F_{t}(\gamma_{\theta}) \right]
- \Delta \right) \\
&= 
\sum_{t \in [T]} F_{t}(\gamma)
-
\left(
\sum_{t \in [T]} \mathbb{E} \left[ F_{t}(\gamma_{t}) \right]
-
\Delta \cdot \sum_{t = 2}^{T} 1_{\gamma_{t} \neq \gamma_{t - 1}} \right) .
\end{align*}
By \Cref{lemma_preliminary_on_ODSC}, the last expression is at most
$O \left( \sqrt{\Delta \cdot T \log |\Gamma|} \right)
=
O \left( \sqrt{\sigma \cdot T \log |\Gamma|} \right)$.
\end{proof}
\vspace{-3mm}
So far, we have only considered the notion of \emph{policy regret} as defined in \cref{formula_policy_regret}.
An extension of our results to the notion of \emph{external regret} \citep{Arora2018PRR} is discussed in \Cref{section:external_regret}.
Furthermore, we investigate the bandit version of the problem in \Cref{section:bandit}.
In a nutshell, by introducing a stateless version of our full-information chasing oracle and reducing to the adversarial multi-armed-bandit problem~\citep{Audibert2009MPA}, we obtain $O(T^{2/3})$ regret bound for Dd-MDP under bandit feedback.
Finally, we obtain near-matching lower bounds for both the full-information and bandit feedback versions of the Dd-MDP problem under the chasability condition in \Cref{apx:lower}.


\subsubsection*{\emph{Relation to the DRACC Problem (continued)}}

We can now use C\&S (Algorithm~\ref{alg:chasing_and_switching}) for Dd-MDPs that correspond to DRACC instances.
This final mechanism is called \emph{learning based posted pricing ({\MechanismName})}.
It first provides the input parameters of C\&S, including the collection $\Gamma$ of pricing policies, the OLSC algorithm $\mathcal{A}$, the ongoing chasing oracle $\mathcal{O}^{\mathtt{Chasing}}$ and the initial state $s_{1}$.
It then runs C\&S by posting its price vectors (actions) and updating the resulting inventory vectors (states).
For $\mathcal{O}^{\mathtt{Chasing}}$, we employ the (randomized) chasing oracles promised in \Cref{theorem_chasing_condition}  and \Cref{theorem_chasing_condition-general}.
The following theorems can now be inferred from \Cref{theorem_regret_of_hola},  \Cref{theorem_chasing_condition}, and \Cref{theorem_chasing_condition-general}.

\begin{theorem}\label{theorem_regret_of_mechanism_for_BIT}
The regret of {\MechanismName} for $T$-round DRACC instances with $k_{t}$-demand valuation functions (or more generally, with the valuation functions defined in \Cref{remark:generalize-k-demand}) is
$O \left( (C W)^{\frac{1}{4}} T^{\frac{3}{4}} \sqrt{\log |\PricePols|} \right)$.
\end{theorem}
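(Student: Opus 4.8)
The plan is to observe that this theorem is essentially a composition of the machinery results established earlier, so the work consists of chaining them together and carrying out the resulting arithmetic. Recall from \Cref{sec:Dd-MDP-prelim} that $\MechanismName$ is nothing but the algorithm C\&S instantiated on the Dd-MDP that encodes the DRACC instance, with the inventory vector playing the role of the state, the feasible price vectors as the actions, and the reward $f_{t}(s, x) = \ChoicePayment{t}{x}$ equal to the per-round payment. Consequently the regret of $\MechanismName$, in the sense of the DRACC regret defined in \Cref{sec:model}, coincides with the policy regret of C\&S on that Dd-MDP, and it suffices to bound the latter.

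First I would invoke \Cref{theorem_chasing_condition}: since the valuations are $k_{t}$-demand (or, by \Cref{remark:generalize-k-demand}, in the broader class that includes OXS and single-minded valuations), the induced Dd-MDP admits an ongoing chasing oracle whose chasing regret satisfies $\sigma = O(\sqrt{C W \cdot T})$. This is precisely the oracle $\mathcal{O}^{\mathtt{Chasing}}$ that $\MechanismName$ supplies to C\&S, with the switching-cost parameter set to $\Delta = \sigma$.

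Next I would apply \Cref{theorem_regret_of_hola}, which guarantees that C\&S on a $\sigma$-chasable $T$-round Dd-MDP achieves regret $O(\sqrt{\sigma \cdot T \log |\Gamma|})$. Substituting $\sigma = O((C W)^{1/2} T^{1/2})$ yields
\[
O\!\left( \sqrt{(C W)^{1/2}\, T^{1/2} \cdot T \cdot \log |\Gamma|} \right)
\, = \,
O\!\left( (C W)^{1/4}\, T^{3/4} \sqrt{\log |\Gamma|} \right),
\]
which is exactly the claimed bound.

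The genuine difficulty of this result does not reside in the present theorem but in the two ingredients it consumes, namely the construction of the randomized chasing oracle with $O(\sqrt{C W \cdot T})$ chasing regret (the crux, carried out in \Cref{theorem_chasing_condition} via the $\mathtt{Good}$/$\mathtt{Bad}$ partition and the potential-function argument) and the reduction to online learning with switching cost (\Cref{theorem_regret_of_hola}). Given these, the only point I would verify is that the switching-cost parameter $\Delta = \sigma$ used inside C\&S matches the bound delivered by the oracle, which holds by construction; no further obstacle remains, so the proof reduces to the substitution above.
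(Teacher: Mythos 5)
Your proposal is correct and matches the paper exactly: the paper derives \Cref{theorem_regret_of_mechanism_for_BIT} by precisely this composition, plugging the $\sigma = O(\sqrt{C W \cdot T})$ chasability bound of \Cref{theorem_chasing_condition} into the $O(\sqrt{\sigma \cdot T \log |\PricePols|})$ guarantee of \Cref{theorem_regret_of_hola}, and the substitution arithmetic you carried out is the entire argument.
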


\begin{theorem}\label{theorem_regret_of_mechanism_for_BIT_general}
The regret of {\MechanismName} for $T$-round DRACC instances with with arbitrary valuation functions is
$O \left(
T^{\frac{1}{2} \left( 1 + \frac{C W}{C W + 1} \right)} \sqrt{\log |\PricePols|} \right)$.
\end{theorem}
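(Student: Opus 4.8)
The plan is to obtain this bound as a direct composition of the two principal results already developed: the chasability of DRACC-induced Dd-MDPs under arbitrary valuations (\Cref{theorem_chasing_condition-general}) and the generic regret guarantee of C\&S for $\sigma$-chasable Dd-MDP instances (\Cref{theorem_regret_of_hola}). Recall from the ``Relation to the DRACC Problem'' discussion that a $T$-round DRACC instance is faithfully modeled as a Dd-MDP whose states are the inventory vectors $\SurplusVec_t$, whose feasible actions at a state are the price vectors feasible for the corresponding inventory, whose reward is the payment $\ChoicePayment{t}{x}$ collected in round $t$, and whose transition decrements the inventory of each sold resource. Under this identification, the pricing-policy regret of \MechanismName{} in the DRACC setting coincides with the policy regret of C\&S on the induced Dd-MDP, so a bound on the latter transfers verbatim to the former.

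First I would instantiate the chasing oracle. By \Cref{theorem_chasing_condition-general}, the Dd-MDP corresponding to a $T$-round DRACC instance with arbitrary valuation functions admits an ongoing chasing oracle $\mathcal{O}^{\mathtt{Chasing}}$ whose chasing regret is bounded by $\sigma = O\!\left(T^{\frac{C W}{C W + 1}}\right)$; that is, the instance is $\sigma$-chasable for this value of $\sigma$. This is precisely the oracle that \MechanismName{} feeds into C\&S, with the switching-cost parameter of the underlying OLSC algorithm set to $\Delta = \sigma$.

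Next I would invoke \Cref{theorem_regret_of_hola}, which guarantees that the regret of C\&S on a $T$-round $\sigma$-chasable Dd-MDP is $O\!\left(\sqrt{\sigma \cdot T \log|\Gamma|}\right)$. Substituting $\sigma = O\!\left(T^{\frac{C W}{C W + 1}}\right)$ and collecting the exponent of $T$ gives
\[
\sqrt{\sigma \cdot T \log|\Gamma|}
= O\!\left(\sqrt{T^{\frac{C W}{C W + 1}} \cdot T \cdot \log|\Gamma|}\right)
= O\!\left(T^{\frac{1}{2}\left(1 + \frac{C W}{C W + 1}\right)}\sqrt{\log|\Gamma|}\right),
\]
which is exactly the claimed bound.

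I expect essentially no independent obstacle in this final step: all of the technical difficulty has already been absorbed into \Cref{theorem_chasing_condition-general} — the genuinely hard part, where the potential-function argument must tolerate a $\phi(t)$ that is no longer monotone under arbitrary valuations — and into the reduction to online learning with switching cost underlying \Cref{theorem_regret_of_hola}. The only thing that needs checking here is that the identification of DRACC with its Dd-MDP is reward- and regret-preserving, so that the value of $\sigma$ supplied by \Cref{theorem_chasing_condition-general} may be plugged into \Cref{theorem_regret_of_hola} without loss; after that, the conclusion is a one-line algebraic simplification of the exponent.
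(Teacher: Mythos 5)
Your proposal is correct and is essentially identical to the paper's own argument: the paper states this theorem as a direct consequence of \Cref{theorem_regret_of_hola} and \Cref{theorem_chasing_condition-general}, exactly as you do, by plugging $\sigma = O\big(T^{\frac{CW}{CW+1}}\big)$ into the $O\big(\sqrt{\sigma \cdot T \log|\Gamma|}\big)$ bound and simplifying the exponent. Your added remark that the DRACC-to-Dd-MDP identification is reward- and regret-preserving is the only point the paper leaves implicit, and it is handled by the ``Relation to the DRACC Problem'' construction, so there is no gap.
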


Note that the regret bounds in \Cref{theorem_regret_of_mechanism_for_BIT} and \Cref{theorem_regret_of_mechanism_for_BIT_general} depend on the parameters $C$ and $W$ of the DRACC problem;
as shown in the following theorem, such a dependence is unavoidable.





\begin{theorem}\label{theorem_impossibility_result_for_C_W}
If $C \cdot W = \Omega(T)$,
then the regret of any posted price mechanism is
$\Omega(T)$.
\end{theorem}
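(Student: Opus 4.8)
The plan is to hide a single bit of information that flips the instance between two regimes, forcing the mechanism into an irreversible ``sell now versus save for later'' commitment whose correct resolution it provably cannot know in time. Concretely, I would take $w \le W$ resources, each with capacity $c \le C$, all active throughout, and set the horizon to $2cw$ rounds (padding out to $T$ with rounds carrying no active resource, which contribute nothing); since $CW = \Omega(T)$, one can choose $c,w$ with $cw = \Theta(T)$, using $c = 1,\ w = \lfloor T/2\rfloor$ in the small-capacity regime and $w = W,\ c = \Theta(T/W)$ otherwise. Every user is unit-demand ($k_{t}=1$) and values each active resource identically. A uniformly random bit $\beta \in \{0,1\}$, fixed by an oblivious adversary, governs the instance: in the first $cw$ rounds every user has value $1/2$, while in the last $cw$ rounds every user has value $3/4$ if $\beta = 1$ and value $0$ if $\beta = 0$. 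I take the benchmark collection to consist of just two inventory-indexed policies: $\gamma_{\mathrm{sell}}$, pricing every non-exhausted resource at $1/2 - \epsilon$, and $\gamma_{\mathrm{save}}$, pricing it at $3/4 - \epsilon$ (both pricing exhausted resources at $1$, hence feasible, and both constant in the inventory so no time-indexing or dummy resources are needed). Note $|\Gamma| = 2$, so the bound is meaningful against the $O\!\big((CW)^{1/4}T^{3/4}\sqrt{\log|\Gamma|}\big)$ upper bound of \Cref{theorem_regret_of_mechanism_for_BIT}.

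The first analytical step is the key indistinguishability observation: the valuations revealed during the first $cw$ rounds are identical under $\beta = 0$ and $\beta = 1$, so the mechanism's randomized behavior on this prefix — in particular the number $K$ of units it sells there — has a distribution independent of $\beta$; write $k = \mathbb{E}[K] \le cw$. Since each first-half sale requires a price below $1/2$, the first-half revenue is at most $K/2$; since capacity is shared across halves, at most $cw - K$ units remain for the second half, each selling for less than $3/4$ under $\beta=1$. Hence the mechanism's expected revenue is at most $(3/4)cw - k/4$ under $\beta=1$ and at most $k/2$ under $\beta=0$. On the benchmark side, $\gamma_{\mathrm{sell}}$ sells all $cw$ units in the first half for revenue $\approx cw/2$ in both worlds, whereas $\gamma_{\mathrm{save}}$ sells nothing in the first half and, under $\beta=1$, sells all $cw$ units for $\approx (3/4)cw$ in the second; the in-hindsight optimum over $\Gamma$ is therefore $(3/4)cw$ when $\beta=1$ and $cw/2$ when $\beta=0$.

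Averaging the regret over $\beta$ then gives $\tfrac12\big((3/4)cw - ((3/4)cw - k/4)\big) + \tfrac12\big(cw/2 - k/2\big) = cw/4 - k/8 \ge cw/8$, using $k \le cw$. As the expectation over the random $\beta$ is at least $cw/8 = \Omega(T)$, some fixed realization of $\beta$ yields a deterministic DRACC instance (respecting the capacity bound $c\le C$ and the active-resource bound $w \le W$) on which the given mechanism incurs regret $\Omega(T)$; since the mechanism was arbitrary, this establishes the claim.

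I expect the main obstacle to be arranging that a single hidden bit simultaneously controls all $\Theta(T)$ units while keeping the benchmark class of constant size. This is precisely what forces $W$ simultaneously-active resources under unit demand, so that consuming the $cw$ units stretches over $cw = \Theta(T)$ rounds rather than over $c$ rounds, and it is exactly where the hypothesis $CW = \Omega(T)$ is consumed, since $cw \le CW$ caps the total number of units and hence the attainable regret (matching the $CW = o(T)$ feasibility of the upper bound). A secondary point to verify with care is that the capacity constraint genuinely couples the two halves — that units sold cheaply early are unrecoverable — and that both benchmark policies stay feasible (pricing exhausted resources at $1$) along their respective trajectories; the tie-breaking at equal utility is sidestepped by the small margin $\epsilon$, whose effect on all revenues is $O(\epsilon\, cw)$ and thus negligible.
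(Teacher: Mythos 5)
Your proof is correct and follows essentially the same route as the paper's: both construct two instances that agree on a first half of $\Theta(T)$ value-$1/2$ unit-demand users and differ only in the second half (worthless vs.\ high-value users), exploit the fact that the mechanism's first-half sales distribution cannot depend on the hidden second half, pit a ``sell-now'' policy against a ``save-for-later'' policy as the benchmark pair, and average over the two instances to handle randomized mechanisms. The only differences are cosmetic: you use second-half value $3/4$ and a uniform mixture with direct averaging (plus a slightly more careful choice of $c \le C$, $w \le W$ with $cw = \Theta(T)$), whereas the paper uses value $1-\epsilon$ and invokes Yao's principle with a tailored two-point distribution that makes the $k$-dependence cancel exactly.
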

\begin{proof}
Here we construct two instances of DRACC. The following settings are the same between these two instances.
\begin{itemize}
\item The parameters $\MaxMultiSym$ and $W$ are chosen so that $C \cdot W = \frac{T}{2}$. Set $N = W$.
\item For each {\Item} $i$, $t_{a}(i) = 1$ and $t_{e}(i) = T$. This setting implies that for every {\Buyer} $t$, $A_{t} = [N]$, which is consistent with $W = N$. Every $i \in [N]$ has the same {\Multiplicity} $\Capacity(i) = \MaxMultiSym$.
\item For each {\Buyer} $t \in \Big[1, \frac{T}{2} \Big]$, the valuation function $v_{t}$ is set as follows.
\begin{equation*}
v_{t}(A') = \begin{cases}
\frac{1}{2} & \text{if } |A'| = 1 \\
0 & \text{otherwise}
\end{cases} \;\;\;\; \forall A' \subseteq A_{t} \, .
\end{equation*}
\end{itemize}

For the {\Buyers} $t \in \Big[\frac{T}{2} + 1, T \Big]$, their valuation functions are different between the two instances. In particular, in the first instance, $v_{t}(A') = 0$ for any $A' \subseteq A_{t}$, while in the second instance
\begin{equation*}
v_{t}(A') = \begin{cases}
1 - \epsilon & \text{if } |A'| = 1 \\
0 & \text{otherwise}
\end{cases} \;\;\;\; \forall A' \subseteq A_{t} \, .
\end{equation*}
where $\epsilon$ is some small enough constant in $(0, \frac{1}{2})$.

Now consider an arbitrary deterministic mechanism $\mathcal{M}$. Such a mechanism will output the same sequence of price vectors for the first half of the {\Buyers} in these two instances. Therefore, the total number of {\Items} that are allocated by $\mathcal{M}$ to the first half of {\Buyers} must be the same $k$ in the two instances for some $k \in \Big[0, \frac{T}{2} \Big]$. Then, the revenue of $\mathcal{M}$ is at most $\frac{k}{2}$ in the former instance, while at most $\frac{k}{2} + \Big(\frac{T}{2} - k \Big) \cdot (1 - \epsilon) = \frac{1 - \epsilon}{2}T - (\frac{1}{2} - \epsilon)k$ in the latter one. Now consider a pricing policy $\gamma$ that maps every {\Surplus} vector except $\LVEC 0 \RVEC$ to a price vector that only contains $\frac{1}{2}$. The revenue of $\gamma$ in the first instance is $\frac{T}{4}$. Similarly, there exists a policy $\gamma'$ with revenue $\frac{T}{2} \cdot (1 - \epsilon)$ in the second instance. Therefore, the regret of $\mathcal{M}$ is at least
\begin{equation*}
\max\bigg\lbrace \frac{T}{4} - \frac{k}{2},\;\; \frac{T}{2} (1 - \epsilon) - \Big[ \frac{1 - \epsilon}{2}T - (\frac{1}{2} - \epsilon)k \Big] \bigg\rbrace \geq \frac{1 - 2\epsilon}{8(1 - \epsilon)} T \, .
\end{equation*}

To generalize the result above to the mechanisms that can utilize the random bits, here we adopt Yao's principle \cite{Yao1997PCM}. In particular, we construct a distribution over the inputs which assigns probabilities $\frac{1 - 2\epsilon}{2 - 2\epsilon}$ and $\frac{1}{2 - 2\epsilon}$ to the two instances constructed above, respectively. It can be verified that against such a distribution, the expectation of any random mechanism's regret is at least $\frac{1 - 2\epsilon}{8(1 - \epsilon)}T$. By Yao's principle, the lower bound on the regret of any mechanism that can utilizes the random bits is also $\frac{1 - 2\epsilon}{8(1 - \epsilon)}T$. Therefore, this proposition is established.
\end{proof}

\section{Applications of the DRACC Problem}
\label{sec:applications}
\label{section:applications}
The mechanism {\MechanismName} proposed for the DRACC problem can be directly applied to a large family of online pricing problems arising in practice.
Two examples are presented in this section: the \emph{online job scheduling (OJS)} problem and the problem of \emph{matching over dynamic bipartite graphs (MDBG)}.

\subsection{Online Job Scheduling}\label{subsection_online_job_scheduling}
The OJS problem described in this section is motivated by the application of assigning jobs that arrive online to limited bandwidth {\Slots} for maximizing the total payments collected from the jobs.
Formally, in the OJS problem, there are $T$ strategic myopic jobs, arriving sequentially over $N$ time slots.
Each {\Slot}
$i \in [N]$
lasts over the time interval
$[i, i + 1)$
and is associated with a bandwidth $c(i)$, which means that this slot can be allocated to at most $c(i)$ jobs.
For each job
$t \in T$,
the adversary specifies an arrival slot
$1 \leq a_{t} \leq N$,
a departure slot
$a_{t} \leq d_{t} \leq N$,
a length
$1 \leq l_{t} \leq d_{t} - a_{t} + 1$,
and a value
$v_{t} \in [0, 1)$.
We emphasize that any number (including zero) of jobs may have slot $i$ as their arrival (or departure) slot.
The goal of job $t$ is to get an allocation of $l_{t}$ contiguous slots within
$[a_{t}, d_{t}]$,
namely, a slot interval in
\[
\mathcal{I}_{t}
\, = \,
\left\{ [i, i + l_{t} - 1] \mid a_{t} \leq i \leq d_{t} - l_{t} + 1 \right\} \, ,
\]
with $v_{t}$ being the job's value for each such allocation.
Let $C$ and $W$ be upper bounds on
$\max_{i \in [N]} c(i)$
and
$\max_{t \in [T]} d_{t} - a_{t} + 1$,
respectively.


Job
$t \in [T]$
is reported to the OJS mechanism at the beginning of slot $a_{t}$;
if several jobs share the same arrival slot, then they are reported to the mechanism sequentially in an arbitrary order.
At the beginning of slot $a_{t}$, the mechanism is also informed of the bandwidth parameter $c(i)$ of every slot
$i \in A_{t}$,
where
$A_{t}$
is defined to be the slot interval
\[
\Active_{t}
\, = \,
[a_{t}, a_{t} + W - 1] \, ;
\]
note that the mechanism may have been informed of the bandwidth parameters of some slots in $\Active_{t}$ beforehand (if they belong to $\Active_{t'}$ for
$t' < t$).
In response, the mechanism posts a price vector
$\PriceVec_{t} \in (0, 1]^{\Active_{t}}$
and elicits the parameters
$d_{t}$, $l_{t}$, and $v_{t}$.
Subsequently, (one bandwidth unit of) the slots in the demand set $\hat{A}_{t}^{\PriceVec_{t}}$ are allocated to job $t$ at a total price of $\hat{\Pay}_{t}^{\PriceVec_{t}}$, where
\[
\hat{A}_{t}^{\PriceVec}
\, = \,
\begin{cases}
\emptyset & \text{if } v_{t} < \min_{I \in \mathcal{I}_{t}} \sum_{i \in I} \PriceVec(i) \\
\argmin_{I \in \mathcal{I}_{t}} \sum_{i \in I} \PriceVec(i) & \text{otherwise}
\end{cases}
\qquad \text{and} \qquad
\hat{\Pay}_{t}^{\PriceVec}
\, = \,
\sum_{i \in \hat{A}_{t}^{\PriceVec}} \PriceVec(i)
\]
for any price vector
$\PriceVec \in (0, 1]^{\Active_{t}}$,
consistently breaking $\argmin$ ties according to the lexicographic order on $\Active_{t}$.

Let
$\SurplusVec_{t} \in \{ 0, 1, \dots, C \}^{\Active_t}$
be the (remaining) bandwidth vector that encodes the number $\SurplusVec_{t}(i)$ of units remaining from the bandwidth of slot
$i \in \Active_{t}$
before processing job
$t = 1, \dots, T$.
Formally, if slot $i$ has not been allocated to any of the jobs in
$\{ 1, \dots, t - 1 \}$,
then
$\SurplusVec_{t}(i) = \Capacity(i)$;
and if (a bandwidth unit of) slot $i$ is allocated to job $t$ and
$i \in A_{t + 1}$,
then
$\SurplusVec_{t + 1}(i) = \SurplusVec_{t}(i) - 1$.
We say that a price vector $\PriceVec$ is feasible for the bandwidth vector $\SurplusVec_{t}$ if
$\PriceVec(i) = 1$
for every
$i \in \Active_{t}$
such that
$\SurplusVec_{t}(i) = 0$,
that is, for every slot $i$ that has already been exhausted before job $t$ is processed.
To ensure that the slots' bandwidth is not exceeded, we require that the posted price vector $\PriceVec_{t}$ is feasible for $\SurplusVec_{t}$ for every
$1 \leq t \leq T$.
We aim for posted price OJS mechanisms whose objective is to maximize the total expected payment
$\Ex[\sum_{t = 1}^{T} \hat{\Pay}_{t}^{\PriceVec_{t}}]$
received from all jobs, where the expectation is over the mechanism's internal randomness.

A pricing policy $\PricePol$ is a function that maps each bandwidth vector
$\SurplusVec \in \{ 0, 1, \dots, C\}^{A_{t}}$,
$t \in [T]$,
to a price vector
$\PriceVec = \PricePol(\SurplusVec)$,
subject to the constraint that $\PriceVec$ is feasible for $\SurplusVec$.
Given a pricing policy $\PricePol$, consider a decision maker that repeatedly plays according to $\PricePol$;
namely, she posts the price vector
$\PriceVec^{\PricePol}_{t} = \PricePol(\SurplusVec^{\PricePol}_{t})$
for job
$t = 1, \dots, T$,
where $\SurplusVec^{\PricePol}_{t}$ is the bandwidth vector obtained by applying $\PricePol$ recursively on previous bandwidth vectors $\SurplusVec^{\PricePol}_{t'}$ and posting prices
$\PricePol(\SurplusVec^{\PricePol}_{t'})$
for jobs
$t' = 1, \dots, t - 1$.
Denoting
$\hat{\Pay}_{t}^{\PricePol} = \hat{\Pay}_{t}^{\PriceVec^{\PricePol}_{t}}$,
the revenue of this decision maker is given by
$\sum_{t = 1}^{T} \hat{\Pay}_{t}^{\PricePol}$.
Given a collection $\Gamma$ of pricing policies, the quality of a posted price OJS mechanism
$\{ \PriceVec_{t} \}_{t = 1}^{T}$
is measured by means of the decision maker's regret with respect to $\Gamma$, namely
\[
\max_{\PricePol \in \PricePols}
\sum_{t = 1}^{T} \hat{\Pay}_{t}^{\PricePol}
-
\Ex \left[
\sum_{t = 1}^{T} \hat{\Pay}_{t}^{\PriceVec_{t}}
\right]
\, ,
\]
where the expectation is taken over the decision maker's randomness.

\PARA{Reduction to DRACC.}
Given the aforementioned choice of notation, the transformation of an OJS instance to a DRACC instance should now be straightforward.
Specifically:
job $t$ is mapped to user $t$;
slot $i$ is mapped to resource $i$;
slot $i$'s bandwidth parameter $c(i)$ is mapped to the capacity of resource $i$;
job $t$'s arrival slot $a_{t}$ determines the set $A_{t}$ of active resources at time $t$, and through these sets, the arrival and departure times of the resources;
and
job $t$'s length $l_{t}$ and value $v_{t}$ parameters determine the valuation function of user $t$, assigning a value of $v_{t}$ to each
$I \in \mathcal{I}_{t}$;
and a zero value to any other subset of $A_{t}$.
The following corollary is now inferred directly from \Cref{theorem_regret_of_mechanism_for_BIT_general}.

\begin{corollary} \label{corollary_regret_of_JS}
The OJS problem admits a mechanism whose regret for $T$-round instances is
$O \left( T^{\frac{1}{2}\left(1+\frac{\MaxMultiSym W}{\MaxMultiSym W + 1}\right)} \sqrt{\log |\PricePols|} \right)$.
\end{corollary}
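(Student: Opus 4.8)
The plan is to show that the reduction from OJS to DRACC described in the \textbf{Reduction to DRACC} paragraph is \emph{faithful} and then to read the bound off \Cref{theorem_regret_of_mechanism_for_BIT_general}. Concretely, I would run {\MechanismName} on the DRACC instance produced by the mapping (job $t\mapsto$ user $t$, slot $i\mapsto$ resource $i$, bandwidth $c(i)\mapsto$ capacity, the window $A_t=[a_t,a_t+W-1]$ of active slots $\mapsto$ active resources, and the pair $(l_t,v_t)\mapsto$ the valuation $v_t(A)=v_t$ for $A\in\mathcal{I}_t$ and $v_t(A)=0$ otherwise) and reinterpret its output price vectors as OJS price vectors. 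Since the corollary's target bound is syntactically identical to the bound of \Cref{theorem_regret_of_mechanism_for_BIT_general} with the same symbols $C$ and $W$, the only real content is to check that (i) the OJS parameters $C,W$ control the corresponding parameters of the produced DRACC instance, and (ii) every quantity entering the two regret definitions --- demand sets, payments, inventory vectors, feasibility, and the benchmark policies in $\PricePols$ --- is preserved term by term.

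First I would verify the per-round allocation rule. The constructed valuation maps $2^{A_t}$ into $[0,1)$ because $v_t\in[0,1)$. Substituting it into the DRACC demand rule \eqref{formula_def_demand_set_and_payment}, any $A\notin\mathcal{I}_t\cup\{\emptyset\}$ gives strictly negative utility (prices are positive), the empty set gives utility $0$, and each $I\in\mathcal{I}_t$ gives utility $v_t-\sum_{i\in I}\PriceVec(i)$. Hence the DRACC demand set is $\emptyset$ exactly when $v_t<\min_{I\in\mathcal{I}_t}\sum_{i\in I}\PriceVec(i)$ and otherwise equals $\argmin_{I\in\mathcal{I}_t}\sum_{i\in I}\PriceVec(i)$, which is precisely the OJS rule for $\hat{A}_t^{\PriceVec}$ and therefore for the payment $\hat{q}_t^{\PriceVec}$ as well, the lexicographic tie-break over $A_t$ being common to both models. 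The feasibility constraint ($\PriceVec(i)=1$ on exhausted slots/resources) and the inventory update ($\SurplusVec_{t+1}(i)=\SurplusVec_t(i)-1$ when a unit of $i$ is sold and $i$ remains active) are literally identical across the two models, so the Dd-MDP transition function $g_t$ agrees with the OJS bandwidth dynamics, and a pricing policy $\gamma$ (a map from inventory vectors to feasible price vectors) denotes the same object on both sides. Consequently the OJS regret and the DRACC regret agree term by term over the same collection $\PricePols$.

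The point I expect to require the most care is justifying that the reduction produces a \emph{well-formed} DRACC instance, i.e. that each slot $i$ is active over a contiguous block of rounds so that $t_a(i)$ and $t_e(i)$ are well defined. Since jobs are reported in non-decreasing order of their arrival slots, the left endpoint $a_t$ of $A_t=[a_t,a_t+W-1]$ is non-decreasing in $t$, so $\{t : i\in A_t\}$ is an interval for every slot $i$; moreover $\max_t|A_t|\le W$ and $\max_i c(i)\le C$, matching the DRACC bounds with the same $C$ and $W$. I would also flag that the OJS valuations are \emph{not} in general covered by the restricted class of \Cref{remark:generalize-k-demand}: taking $B$ to be a single slot of the currently demanded interval and raising its price to $1$ can move the demand to a disjoint interval, so $|\hat{A}_t^{\PriceVec}\cap B|=1$ while $|\hat{A}_t^{\PriceVec'}\setminus\hat{A}_t^{\PriceVec}|$ can be as large as $l_t$, violating that inequality. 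This is exactly why the corollary invokes \Cref{theorem_regret_of_mechanism_for_BIT_general} for arbitrary valuations rather than \Cref{theorem_regret_of_mechanism_for_BIT}. With faithfulness established, \Cref{theorem_regret_of_mechanism_for_BIT_general} applied to the produced instance gives regret $O\!\left(T^{\frac{1}{2}\left(1+\frac{CW}{CW+1}\right)}\sqrt{\log|\PricePols|}\right)$ for {\MechanismName}, and by the term-by-term correspondence this is precisely the OJS regret of the reinterpreted mechanism, proving the corollary.
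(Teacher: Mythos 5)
Your proposal is correct and follows exactly the paper's route: the paper proves this corollary by the same job-to-user, slot-to-resource reduction to DRACC and then invokes \Cref{theorem_regret_of_mechanism_for_BIT_general} directly. The extra checks you carry out --- the term-by-term match of demand sets and payments, the contiguity of each slot's activity interval, and the observation that OJS valuations fall outside the class of \Cref{remark:generalize-k-demand} (which is why the general-valuation theorem is needed) --- are precisely the details the paper leaves implicit as ``straightforward,'' and they are all correct.
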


\Cref{corollary_regret_of_JS} is derived from the regret bound of {\MechanismName} for the DRACC problem with arbitrary valuation functions, based on the (randomized) chasing oracle implementation developed in \Cref{theorem_chasing_condition-general}.
It turns out though that one can exploit the structural properties of the OJS problem to design a chasing oracle with dramatically improved chasing regret, thus improving the regret bound for the OJS problem (see \cref{corollary_improved_regret_OJS}).

\begin{lemma} \label{lemma_chasing_job_scheduling}
The OJS problem admits a (deterministic) ongoing chasing oracle whose chasing regret is at most
$2 \cdot \MaxMultiSym W$.
\end{lemma}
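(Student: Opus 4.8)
The plan is to build a \emph{deterministic} oracle that imitates the target policy $\gamma$: at each round $t \in [\tinit,\tfinal]$ it tracks the inventory vector $\SurplusVec^{\gamma}_{t}$ that $\gamma$ reaches (by simulating $\gamma$ from round $1$, exactly as in Eq.~\eqref{formula_fixed_policy_states_and_actions}) and posts the price vector $\hat{\PriceVec}_{t}$ that agrees with $\PriceVec^{\gamma}_{t}=\gamma(\SurplusVec^{\gamma}_{t})$ on every slot in which it still holds inventory and equals $1$ on every slot it has already exhausted; that is, $\hat{\PriceVec}_{t}(i)=\PriceVec^{\gamma}_{t}(i)$ if $\hat{\SurplusVec}_{t}(i)>0$ and $\hat{\PriceVec}_{t}(i)=1$ otherwise. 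By construction this vector is feasible for $\hat{\SurplusVec}_{t}$ and it dominates $\gamma$'s prices pointwise, $\hat{\PriceVec}_{t}(i)\ge\PriceVec^{\gamma}_{t}(i)$.

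First I would exploit the \emph{single-interval} demand structure of OJS. Since the cost $\sum_{i\in I}\hat{\PriceVec}_{t}(i)$ of every candidate interval $I\in\mathcal{I}_{t}$ under the oracle's prices is at least its cost under $\gamma$'s prices, the cheapest interval seen by the oracle is never cheaper than the one seen by $\gamma$. Consequently (i) whenever the oracle sells, so does $\gamma$, and the oracle's payment is at least $\gamma$'s; and (ii) the only rounds on which the oracle can fall behind are the \emph{bad} rounds in which $\gamma$ buys but the oracle does not. Because every per-round payment is at most $v_t<1$, the chasing regret is bounded by the number of bad rounds, so it suffices to show there are at most $2\,\MaxMultiSym W$ of them.

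Next I would set up the potential $D(t)=\sum_{i\in A_t}\max\{0,\SurplusVec^{\gamma}_{t}(i)-\hat{\SurplusVec}_{t}(i)\}$, the total \emph{deficiency} of the oracle relative to $\gamma$ on the currently active slots. Two facts are immediate: $D(t)\le\MaxMultiSym W$ for all $t$ (each of the $\le W$ active slots contributes at most its capacity $\le\MaxMultiSym$), and on a bad round $D$ strictly decreases — the oracle sells nothing, while $\gamma$ sells an interval that must contain a slot $i$ with $\hat{\SurplusVec}_{t}(i)=0<\SurplusVec^{\gamma}_{t}(i)$, which is precisely what pushes the oracle's cheapest interval above $v_t$. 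Hence every bad round consumes at least one unit of deficiency.

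The main obstacle is that $D$ is not monotone: on a \emph{good} round the buyer may be forced off $\gamma$'s interval onto a shifted one $\hat{A}_{t}^{\hat{\PriceVec}_{t}}\neq\hat{A}_{t}^{\PriceVec^{\gamma}_{t}}$, and selling the extra slots can \emph{create} new deficiency. Bounding the total deficiency created over the whole run is the crux, and this is exactly where the OJS sliding-window structure is essential. Because $A_t=[a_t,a_t+W-1]$ with $a_t$ nondecreasing, a slot that leaves the window never returns, and each newly arriving slot enters at full capacity (hence zero deficiency) for both $\gamma$ and the oracle; moreover deficiency can be created on a slot only within distance $W-1$ of an already-deficient slot, so it remains confined to the band the window sweeps while it still carries the original deficiency. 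I would argue that the total created deficiency therefore telescopes to at most the initial budget $\MaxMultiSym W$. Combining this with the counting identity $\#\{\text{bad rounds}\}\le D(\tinit)+(\text{total created deficiency})$ and $D(\tinit)\le\MaxMultiSym W$ yields $\#\{\text{bad rounds}\}\le 2\,\MaxMultiSym W$, and hence $\mathrm{CR}\le 2\,\MaxMultiSym W$. I expect the delicate part to be making the confinement argument quantitatively tight enough to keep the created deficiency at $\MaxMultiSym W$ rather than a larger multiple of it.
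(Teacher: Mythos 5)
Your high-level strategy — mimic $\gamma$'s prices wherever the oracle still has inventory, post $1$ elsewhere, and control the number of \emph{bad} rounds via a deficiency potential — is genuinely different from the paper's proof, and it contains a real gap at exactly the point you flag as ``the crux.'' The accounting you rely on is
$\#\{\text{bad rounds}\} \le D(\tinit) + (\text{total created deficiency})$,
so everything hinges on the claim that the total deficiency created over the whole run is at most $\MaxMultiSym W$. That claim is false as stated. Consider $\MaxMultiSym=1$ and a single deficient slot: a buyer with $l_t=1$ and a window containing the deficient slot $i$ and a fresh slot $i'$ (with $\gamma$'s price on $i$ cheapest) makes $\gamma$ sell $i$ while your oracle sells $i'$; this destroys one unit of deficiency and creates one, and the configuration simply translates forward. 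Repeating this for $T$ rounds yields total created deficiency $\Omega(T)$, not $O(\MaxMultiSym W)$. In that particular example no regret accrues (the shifted sales pay at least as much as $\gamma$'s), but it kills the inequality you want to invoke: bounding bad rounds by ``initial budget plus total creation'' cannot work, because creation is unbounded. Worse, with $l_t\ge 3$ a single shifted round can destroy one unit of deficiency and create two (e.g., $\gamma$ sells $[1,3]$ with only slot $1$ exhausted for the oracle, while the oracle sells $[3,5]$), so the potential is not even conserved per round; any correct argument must charge creation against \emph{stranding} (deficiency left behind the forward-moving window) and against the fact that newly created deficiencies form adjacent blocks that future $\gamma$-minimizers tend to hit all at once. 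You have not supplied that argument, and it is not routine; note also that Proposition~\ref{prop:deterministic-chasing-oracle} shows the very same ``mimic and clip'' deterministic oracle has $\Omega(T)$ chasing regret in general DRACC, so any proof along your lines must isolate precisely which feature of OJS saves it — something your confinement sketch does not do.

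For contrast, the paper's oracle is deterministic but of a completely different design, and it avoids potential functions altogether. It posts the all-$1$ price vector (selling nothing) until round $t' = \min\{T,\max_{i\in A_{\tinit}} t_e(i)\}$, i.e., until every slot that was active at $\tinit$ has expired, and from then on posts $\PriceVec^{\gamma}_t$ verbatim. The OJS ``FIFO'' property ($t_a(i)\le t_a(i') \Rightarrow t_e(i)\le t_e(i')$) does two jobs: (i) after $t'$ every active slot arrived after $\tinit$ and was never sold by the oracle, so the oracle's inventory dominates $\gamma$'s forever and its posted prices coincide with $\gamma$'s, making the two reward streams identical from $t'+1$ on; (ii) every slot sellable during $[\tinit,t']$ lies in $A_{\tinit}\cup S$ where $S$ is the set of slots arriving in $(\tinit,t']$, and FIFO forces $S\subseteq A_{t'}$, hence $|A_{\tinit}\cup S|\le 2W$. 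Since each slot sells at most $\MaxMultiSym$ units at price below $1$, $\gamma$'s revenue on $[\tinit,t']$ — which is the entire chasing regret — is at most $2\MaxMultiSym W$. If you want to salvage your construction, you would need a quantitative amortization of creation versus stranding; but be aware that even if it works, it is a substantially harder route than the paper's.
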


\begin{proof}
One property of OJS is that for any two {\Slots} $i$ and $i'$,
\begin{equation}
t_{a}(i) \leq t_{a}(i')
\quad\Rightarrow \quad
t_{e}(i) \leq t_{e}(i') \, .
\label{formula_OJS_FIFO}
\end{equation}
We prove the claim by constructing a chasing ongoing oracle $\mathcal{O}^{\mathtt{Chasing}}$ with the desired CR using this property. Given a target policy $\gamma$, an initial step $\tinit$, and an initial state $\sinit$, oracle $\mathcal{O}^{\mathtt{Chasing}}$ posts a price vector $\hat{\PriceVec}_{t}$ for each $t \geq \tinit$ as follows
\begin{equation*}
\hat{\PriceVec}_{t} = \begin{cases}
	\langle 1 \rangle_{i \in \Active_{t}} & \text{if } t \leq \min\lbrace T, \max_{i \in \Active_{\tinit}} t_{e}(i) \rbrace \\
	\FPprice{t}{\gamma} & \text{otherwise}
\end{cases} \, .
\end{equation*}
Let $t' = \min\lbrace T, \max_{i \in \Active_{\tinit}} t_{e}(i) \rbrace$. The price vector $\hat{\PriceVec}_{t}$ is trivially feasible for every $t \in [\tinit, t']$. If $t' < T$, then for every {\Slot} $i$ in $A_{t' + 1}$, we have $i \notin \Active_{\tinit}$, which gives $t_{a}(i) > \tinit$. Since $\mathcal{O}^{\mathtt{Chasing}}$ does not sell any {\Slot} to {\Buyers} from $\tinit$ to $t'$, it holds that $\hat{\VEC{\SurplusSym}}_{t' + 1}(i) = \MultiSym{i} \geq \FPremain{t' + 1}{\gamma}(i)$. Therefore, $\mathtt{Good}_{t' + 1} = \Active_{t' + 1}$ and $\mathtt{Bad}_{t' + 1} = \emptyset$. 
Then it can be proved inductively that for any $t \geq t' + 1$, $\mathtt{Bad}_{t} = \emptyset$, which ensures the feasibility of $\hat{\PriceVec}_{t}$. 
Moreover, for each $t \geq t' + 1$, since $\hat{\PriceVec}_{t} = \FPprice{t}{\gamma}$, we have $f_{t}\big( \hat{\VEC{\SurplusSym}}_{t}, \hat{\PriceVec}_{t} \big) = f_{t}\big( \FPremain{t}{\gamma},  \FPprice{t}{\gamma} \big)$.

It remains to bound $\sum_{t = \tinit}^{t'} f_{t}\big( \FPremain{t}{\gamma},  \FPprice{t}{\gamma} \big) - \sum_{t = \tinit}^{t'} f_{t}\big( \hat{\VEC{\SurplusSym}}_{t}, \hat{\PriceVec}_{t} \big)$. Let $S$ be the set of {\Slots} $i$ with $t_{a}(i) \in (\tinit, t']$. By Eq.~\eqref{formula_OJS_FIFO}, it holds for every $i \in S$ that $t_{e}(i) \geq t'$, because for every $i' \in \Active_{\tinit}$, $t_{a}(i') \leq \tinit$. By definition, $S \subseteq \Active_{t'}$, which gives $|S| \leq W$. Since the {\Slots} that can be sold by any policy to {\Buyers} in $[\tinit, t']$ belong to $A_{\tinit} \cup S$, we have 
\begin{equation*}
\sum_{t = \tinit}^{t'} f_{t}\big( \FPremain{t}{\gamma},  \FPprice{t}{\gamma} \big) \leq \MaxMultiSym \cdot |A_{\tinit} \cup S| \leq \MaxMultiSym \cdot 2 W \, .
\end{equation*}
Since $\sum_{t = \tinit}^{t'} f_{t}\big( \hat{\VEC{\SurplusSym}}_{t}, \hat{\PriceVec}_{t} \big)$ is non-negative, this theorem is established.
\end{proof}

By plugging \Cref{lemma_chasing_job_scheduling} into \Cref{theorem_regret_of_hola}, we obtain the following improvement to \cref{corollary_regret_of_JS};
this bound is near-optimal due to \cite{blum2005near}.

\begin{corollary} \label{corollary_improved_regret_OJS}
The OJS problem admits a mechanism whose regret for $T$-round instances is
$O \left( \sqrt{\MaxMultiSym W \cdot T \log |\PricePols|} \right)$.
\end{corollary}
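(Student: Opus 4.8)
The plan is to combine the two ingredients that have just been assembled: the deterministic chasing oracle of \Cref{lemma_chasing_job_scheduling} and the generic reduction of \Cref{theorem_regret_of_hola}. First I would recall that, through the transformation described in the ``Reduction to DRACC'' paragraph, every OJS instance induces a Dd-MDP whose states are the bandwidth vectors $\SurplusVec_{t}$, whose feasible actions are the feasible price vectors, and whose reward function encodes the payment $\hat{\Pay}_{t}^{\PriceVec}$ collected from each job. Consequently the entire machinery of \Cref{section:HOLA} applies verbatim to OJS, and in particular the algorithm C\&S (Algorithm~\ref{alg:chasing_and_switching}) can be instantiated on this Dd-MDP with its policy collection $\Gamma$.

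Next I would invoke \Cref{lemma_chasing_job_scheduling}, which supplies an ongoing chasing oracle for OJS with chasing regret at most $2 C W$. In the language of \Cref{def:chase}, this says precisely that the OJS-induced Dd-MDP is $\sigma$-chasable with $\sigma = 2 C W$. The crucial point is that this bound is a constant that depends only on the instance parameters $C$ and $W$ and is independent of the horizon $T$; this is exactly the regime in which \Cref{theorem_regret_of_hola} delivers a regret that is sublinear in $T$.

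With $\sigma = 2 C W$ in hand, I would feed the oracle of \Cref{lemma_chasing_job_scheduling} into C\&S (setting the switching cost $\Delta = \sigma = 2 C W$) and apply \Cref{theorem_regret_of_hola} directly, obtaining
\[
O\left( \sqrt{\sigma \cdot T \log |\Gamma|} \right)
\, = \,
O\left( \sqrt{2 C W \cdot T \log |\Gamma|} \right)
\, = \,
O\left( \sqrt{C W \cdot T \log |\Gamma|} \right),
\]
which is the claimed bound and a strict improvement over \Cref{corollary_regret_of_JS}. The near-optimality assertion then follows by citing the matching lower bound of \cite{blum2005near}.

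Since both \Cref{lemma_chasing_job_scheduling} and \Cref{theorem_regret_of_hola} are already established, there is no genuine obstacle here; the only point requiring care is the bookkeeping check that the OJS $\to$ DRACC $\to$ Dd-MDP reduction preserves the exact objects (states, feasible action sets, reward functions $f_{t}$, and transition functions $g_{t}$) referenced in \Cref{theorem_regret_of_hola}, so that the chasing regret measured in \Cref{lemma_chasing_job_scheduling} coincides with the parameter $\sigma$ consumed by C\&S. This is immediate because the chasing regret in \Cref{def:chase} is defined through the very same reward functions $f_{t}$ that the OJS reduction produces, so $\sigma = 2 C W$ transfers without adjustment.
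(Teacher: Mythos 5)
Your proposal is correct and follows exactly the paper's own route: the paper likewise obtains this corollary by plugging the deterministic chasing oracle of \Cref{lemma_chasing_job_scheduling} (giving $\sigma = 2CW$) into the C\&S reduction of \Cref{theorem_regret_of_hola}, yielding $O\left(\sqrt{CW \cdot T \log |\Gamma|}\right)$ regret, with near-optimality cited from \cite{blum2005near}. Your additional bookkeeping remark about the OJS $\to$ DRACC $\to$ Dd-MDP translation is sound and simply makes explicit what the paper leaves implicit.
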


\subsection{Matching Over Dynamic Bipartite Graphs}
The MDBG problem is a dynamic variation of the conventional bipartite matching problem with the goal of maximizing the revenue. Formally, 
in the MDBG problem, there are two sets of nodes, the \emph{left-side} node set $\LeftSet = \lbrace \LeftNode \rbrace_{\LeftNode \in \LeftInts}$ and the \emph{right-side} node set $\RightSet = \lbrace \RightNode \rbrace_{\RightNode \in \RightInts}$. 
The nodes in each of these two sets arrive sequentially and dynamically. 
For each node $\LeftNode \in \LeftSet$, an adversary specifies a pair of parameters $t_{a}(\LeftNode) \in \RightInts$ and $t_{e}(\LeftNode) \in \Big[t_{a}(\LeftNode), \RightInts \Big]$. It means that the node $\LeftNode$ arrives just before the arrival of the node $\RightNode = t_{a}(\LeftNode) \in \RightSet$, and expires immediately after the node ${\RightNode}' = t_{e}(\LeftNode) \in \RightSet$ is given. 
For each node $\RightNode \in \RightSet$, define $A_{\RightNode} = \lbrace \LeftNode \in \LeftSet: \RightNode \in [t_{a}(\LeftNode), t_{e}(\LeftNode)] \rbrace$. The adversary also specifies a weight $w_{\RightNode}(\LeftNode) \in [0, 1)$ for each $\RightNode \in \RightSet$ and $\LeftNode \in A_{\RightNode}$.

A posted price mechanism is required to present a price vector $\VEC{p}_{\RightNode} \in (0, 1]^{|A_{\RightNode}|}$ independently of $w_{\RightNode}(\cdot)$ upon the arrival of each node $\RightNode \in \RightSet$. For any price vector $\VEC{p}$ presented to $\RightNode$, define $\hat{A}_{\RightNode}^{\VEC{p}} = \ArgMax_{\LeftNode \in A_{\RightNode}}w_{\RightNode}(\LeftNode) - \VEC{p}(\LeftNode)$ with breaking ties in a fixed way. The mechanism matches the left-side node $\hat{A}_{\RightNode}^{\VEC{p}}$ to the right-side node $\RightNode$ and charges $\RightNode$ the payment $\VEC{p}\Big( \hat{A}_{\RightNode}^{\VEC{p}}\Big)$ if $w_{\RightNode}\Big( \hat{A}_{\RightNode}^{\VEC{p}} \Big) \geq \VEC{p}\Big( \hat{A}_{\RightNode}^{\VEC{p}} \Big)$. Otherwise, no left node is matched to $\RightNode$, and no payment is obtained. After that, $w_{\RightNode}(\cdot)$ is revealed to the mechanism.

In the MDBG problem, every left-side node $\LeftNode$ can only be matched to at most one right-side node $\RightNode$. We express this constraint as a feasibility requirement on the price vector that for each right-side node $\RightNode$, if a left-side node $\LeftNode \in A_{\RightNode}$ has already been matched before the arrival of $\RightNode$, then the price of $\LeftNode$ should be set to $1$. The states of whether the left-side nodes in $A_{\RightNode}$ have been matched can be described with a Boolean vector of length $|A_{\RightNode}|$, and a pricing policy $\gamma \in \Gamma$ is a mapping from each possible Boolean vector to a feasible price vector. The objective of the MDBG problem is to find a feasible price vector $\VEC{p}_{\RightNode}$ for every $\RightNode \in \RightSet$ to maximize the total payments, and the regret is defined to be the difference between the revenue obtained by the best fixed in-hindsight pricing policy in a given collection $\Gamma$ and the expected revenue of the mechanism. 

\paragraph{Reduction to DRACC.} This problem can be transformed to a special case of the DRACC problem by taking the nodes in $\LeftSet$ (resp.~$\RightSet$) as the {\Items} (resp.~{\Buyers}). The {\Multiplicity} of every {\Item} is exactly one.  The valuation function of each {\Buyer} $\RightNode$ maps each subset $A \subseteq A_{\RightNode}$ to $v_{\RightNode}(A) = \max_{\LeftNode \in A} w_{\RightNode}(\LeftNode)$. Such a setting is consistent because the price posted for each {\Item} $\LeftNode$ is strictly larger than $0$, which ensures that at most one {\Item} is allocated to each {\Buyer}. Moreover, $v_{\RightNode}$ is a $k_{t}$-demand valuation function with $k_{\RightNode} = 1$ for every $\RightNode$. Using Theorem \ref{theorem_regret_of_mechanism_for_BIT}, we get the following result.

\begin{corollary}
For the MDBG problem, the regret of the mechanism {\MechanismName} is bounded by $O\Big( W^{\frac{1}{4}} T^{\frac{3}{4}} \sqrt{\log |\PricePols|}\Big)$.
\end{corollary}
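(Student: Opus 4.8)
The plan is to obtain the stated bound as an immediate consequence of \Cref{theorem_regret_of_mechanism_for_BIT}, applied to the DRACC instance produced by the reduction sketched just above, with the capacity parameter specialized to $\MaxMultiSym = 1$. First I would make the reduction precise and argue that it is regret-preserving: a posted price mechanism for the MDBG instance and the mechanism \MechanismName{} running on the reduced DRACC instance should face identical feasibility constraints, range over the same price vectors, and collect identical payments round by round, so that their cumulative revenues---and hence their regrets against any fixed policy collection $\PricePols$---coincide. Once this equivalence is in place, the corollary reduces to plugging $\MaxMultiSym = 1$ into the general DRACC bound.

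The crux of the faithfulness argument is to check that, for every right-side node $\RightNode$ and every price vector $\VEC{p}$, the DRACC demand set and payment computed from the valuation $v_{\RightNode}(A) = \max_{\LeftNode \in A} w_{\RightNode}(\LeftNode)$ reproduce exactly the MDBG matching rule. Because every price is strictly positive and $v_{\RightNode}$ depends only on the maximum weight in $A$, enlarging a non-empty set $A$ beyond a single node can only decrease the quasi-linear utility $v_{\RightNode}(A) - \sum_{\LeftNode \in A} \VEC{p}(\LeftNode)$; hence the DRACC $\argmax$ over $2^{A_{\RightNode}}$ is attained either at the empty set (utility $0$) or at the singleton $\{\LeftNode^{*}\}$ with $\LeftNode^{*} = \argmax_{\LeftNode} \big( w_{\RightNode}(\LeftNode) - \VEC{p}(\LeftNode) \big)$, and it is non-empty exactly when $w_{\RightNode}(\LeftNode^{*}) \geq \VEC{p}(\LeftNode^{*})$. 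This is precisely $\hat{A}_{\RightNode}^{\VEC{p}}$ together with its payment, provided one identifies the lexicographic tie-breaking of DRACC with the fixed tie-breaking of MDBG. I would also record that the ``match at most once'' constraint is captured by the feasibility rule (price $1$ on an already-matched, i.e.\ exhausted, left node), so that pricing policies over Boolean match-indicator vectors correspond bijectively to DRACC pricing policies over inventory vectors in $\{0,1\}^{A_{\RightNode}}$.

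With the reduction verified, the two remaining ingredients are immediate: each {\Item} (left node) has {\Multiplicity} exactly one, so the DRACC capacity bound is $\MaxMultiSym = 1$; and each $v_{\RightNode}$ is a $k_{t}$-demand valuation with $k_{\RightNode} = 1$, which is exactly the class to which \Cref{theorem_regret_of_mechanism_for_BIT} applies. Substituting $\MaxMultiSym = 1$ into the theorem's $O\big( (\MaxMultiSym W)^{1/4} T^{3/4} \sqrt{\log |\PricePols|} \big)$ regret bound yields the claimed $O\big( W^{1/4} T^{3/4} \sqrt{\log |\PricePols|} \big)$. The step I expect to require the most care is not this final substitution but the demand-set correspondence of the previous paragraph: one must be certain that the DRACC $\argmax$ truly collapses to a singleton-or-empty decision under strictly positive prices and that the two tie-breaking conventions can be made to agree, since any mismatch there would break the round-by-round revenue equality on which the whole reduction rests.
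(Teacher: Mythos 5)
Your proposal is correct and follows essentially the same route as the paper: reduce MDBG to DRACC by treating left nodes as unit-capacity resources and right nodes as buyers with the $1$-demand valuation $v_{\RightNode}(A) = \max_{\LeftNode \in A} w_{\RightNode}(\LeftNode)$, then apply \Cref{theorem_regret_of_mechanism_for_BIT} with $\MaxMultiSym = 1$. The only difference is that you spell out the singleton-or-empty demand-set correspondence and tie-breaking in detail, whereas the paper dispatches this with the one-line observation that strictly positive prices ensure at most one resource is allocated per buyer.
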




\clearpage
\bibliographystyle{plainnat}
\bibliography{references}

\begin{thebibliography}{35}
\providecommand{\natexlab}[1]{#1}
\providecommand{\url}[1]{\texttt{#1}}
\expandafter\ifx\csname urlstyle\endcsname\relax
  \providecommand{\doi}[1]{doi: #1}\else
  \providecommand{\doi}{doi: \begingroup \urlstyle{rm}\Url}\fi

\bibitem[Abbasi{-}Yadkori et~al.(2013)Abbasi{-}Yadkori, Bartlett, Kanade,
  Seldin, and Szepesv{\'{a}}ri]{Abbasi2013OLM}
Yasin Abbasi{-}Yadkori, Peter~L. Bartlett, Varun Kanade, Yevgeny Seldin, and
  Csaba Szepesv{\'{a}}ri.
\newblock Online learning in markov decision processes with adversarially
  chosen transition probability distributions.
\newblock In \emph{Proceedings of the 27th Annual Conference on Neural
  Information Processing Systems}, pages 2508--2516, Lake Tahoe, Nevada, United
  States, December 2013.

\bibitem[Agrawal and Devanur(2015)]{Agrawal2014FAO}
Shipra Agrawal and Nikhil~R. Devanur.
\newblock Fast algorithms for online stochastic convex programming.
\newblock In \emph{Proceedings of the Twenty-Sixth Annual {ACM-SIAM} Symposium
  on Discrete Algorithms, {SODA} '15}, pages 1405--1424, San Diego, CA, USA,
  January 2015. {SIAM}.

\bibitem[Arora et~al.(2012)Arora, Dekel, and Tewari]{Arora2012OBL}
Raman Arora, Ofer Dekel, and Ambuj Tewari.
\newblock Online bandit learning against an adaptive adversary: from regret to
  policy regret.
\newblock In \emph{Proceedings of the 29th International Conference on Machine
  Learning, {ICML} '12}, Edinburgh, Scotland, UK, June - July 2012. icml.cc /
  Omnipress.

\bibitem[Arora et~al.(2018)Arora, Dinitz, Marinov, and Mohri]{Arora2018PRR}
Raman Arora, Michael Dinitz, Teodor~Vanislavov Marinov, and Mehryar Mohri.
\newblock Policy regret in repeated games.
\newblock In \emph{Annual Conference on Neural Information Processing Systems
  2018, NeurIPS '18}, pages 6733--6742, Montr{\'{e}}al, Canada, December 2018.

\bibitem[Audibert and Bubeck(2009)]{Audibert2009MPA}
Jean{-}Yves Audibert and S{\'{e}}bastien Bubeck.
\newblock Minimax policies for adversarial and stochastic bandits.
\newblock In \emph{{COLT} '09 - The 22nd Conference on Learning Theory},
  Montreal, Quebec, Canada, June 2009.

\bibitem[Auer et~al.(2002)Auer, Cesa{-}Bianchi, Freund, and
  Schapire]{Auer2002NMB}
Peter Auer, Nicol{\`{o}} Cesa{-}Bianchi, Yoav Freund, and Robert~E. Schapire.
\newblock The nonstochastic multiarmed bandit problem.
\newblock \emph{{SIAM} Journal on Computing}, 32\penalty0 (1):\penalty0 48--77,
  2002.

\bibitem[Babaioff et~al.(2015)Babaioff, Dughmi, Kleinberg, and
  Slivkins]{babaioff2015dynamic}
Moshe Babaioff, Shaddin Dughmi, Robert~D. Kleinberg, and Aleksandrs Slivkins.
\newblock Dynamic pricing with limited supply.
\newblock \emph{{ACM} Transactions on Economics and Computation}, 3\penalty0
  (1):\penalty0 4:1--4:26, 2015.

\bibitem[Badanidiyuru et~al.(2018)Badanidiyuru, Kleinberg, and
  Slivkins]{badanidiyuru2013bandits}
Ashwinkumar Badanidiyuru, Robert Kleinberg, and Aleksandrs Slivkins.
\newblock Bandits with knapsacks.
\newblock \emph{Journal of the {ACM}}, 65\penalty0 (3):\penalty0 13:1--13:55,
  2018.

\bibitem[Blum and Hartline(2005)]{blum2005near}
Avrim Blum and Jason~D. Hartline.
\newblock Near-optimal online auctions.
\newblock In \emph{Proceedings of the Sixteenth Annual {ACM-SIAM} Symposium on
  Discrete Algorithms, {SODA} '05}, pages 1156--1163, Vancouver, British
  Columbia, Canada, January 2005. {SIAM}.

\bibitem[Blum et~al.(2004)Blum, Kumar, Rudra, and Wu]{blum2004online}
Avrim Blum, Vijay Kumar, Atri Rudra, and Felix Wu.
\newblock Online learning in online auctions.
\newblock \emph{Theoretical Computer Science}, 324\penalty0 (2-3):\penalty0
  137--146, 2004.

\bibitem[Bubeck et~al.(2017)Bubeck, Devanur, Huang, and
  Niazadeh]{bubeck2017online}
S{\'{e}}bastien Bubeck, Nikhil~R. Devanur, Zhiyi Huang, and Rad Niazadeh.
\newblock Online auctions and multi-scale online learning.
\newblock In \emph{Proceedings of the 2017 {ACM} Conference on Economics and
  Computation, {EC} '17}, pages 497--514, Cambridge, MA, USA, June 2017. {ACM}.

\bibitem[Bubeck et~al.(2019)Bubeck, Devanur, Huang, and
  Niazadeh]{bubeck2019multi}
S{\'{e}}bastien Bubeck, Nikhil~R. Devanur, Zhiyi Huang, and Rad Niazadeh.
\newblock Multi-scale online learning: Theory and applications to online
  auctions and pricing.
\newblock \emph{Journal of Machine Learning Research}, 20:\penalty0
  62:1--62:37, 2019.

\bibitem[Cesa{-}Bianchi and Lugosi(2006)]{Cesa2006PLG}
Nicol{\`{o}} Cesa{-}Bianchi and G{\'{a}}bor Lugosi.
\newblock \emph{Prediction, learning, and games}.
\newblock Cambridge University Press, 2006.
\newblock ISBN 978-0-521-84108-5.

\bibitem[Chawla et~al.(2010)Chawla, Hartline, Malec, and
  Sivan]{chawla2010multi}
Shuchi Chawla, Jason~D. Hartline, David~L. Malec, and Balasubramanian Sivan.
\newblock Multi-parameter mechanism design and sequential posted pricing.
\newblock In \emph{Proceedings of the 42nd {ACM} Symposium on Theory of
  Computing, {STOC} '10}, pages 311--320, Cambridge, Massachusetts, USA, June
  2010. {ACM}.

\bibitem[Chawla et~al.(2017{\natexlab{a}})Chawla, Devanur, Holroyd, Karlin,
  Martin, and Sivan]{Chawla2017SST}
Shuchi Chawla, Nikhil~R. Devanur, Alexander~E. Holroyd, Anna~R. Karlin,
  James~B. Martin, and Balasubramanian Sivan.
\newblock Stability of service under time-of-use pricing.
\newblock In \emph{Proceedings of the 49th Annual {ACM} {SIGACT} Symposium on
  Theory of Computing, {STOC} '17}, pages 184--197, Montreal, QC, Canada, June
  2017{\natexlab{a}}. {ACM}.

\bibitem[Chawla et~al.(2017{\natexlab{b}})Chawla, Devanur, Kulkarni, and
  Niazadeh]{Chawla2017TRO}
Shuchi Chawla, Nikhil~R. Devanur, Janardhan Kulkarni, and Rad Niazadeh.
\newblock Truth and regret in online scheduling.
\newblock In \emph{Proceedings of the 2017 {ACM} Conference on Economics and
  Computation, {EC} '17}, pages 423--440, Cambridge, MA, USA, June
  2017{\natexlab{b}}. {ACM}.

\bibitem[Dekel and Hazan(2013)]{Dekel2013BRA}
Ofer Dekel and Elad Hazan.
\newblock Better rates for any adversarial deterministic {MDP}.
\newblock In \emph{Proceedings of the 30th International Conference on Machine
  Learning, {ICML} '13}, volume~28 of \emph{{JMLR} Workshop and Conference
  Proceedings}, pages 675--683, Atlanta, GA, USA, June 2013. JMLR.org.

\bibitem[Dekel et~al.(2014)Dekel, Ding, Koren, and Peres]{Dekel2014BSC}
Ofer Dekel, Jian Ding, Tomer Koren, and Yuval Peres.
\newblock Bandits with switching costs: \emph{T}\({}^{\mbox{2/3}}\) regret.
\newblock In \emph{Symposium on Theory of Computing, {STOC} '14}, pages
  459--467, New York, NY, USA, May - June 2014. {ACM}.

\bibitem[den Boer(2015)]{den2015dynamic}
Arnoud~V den Boer.
\newblock Dynamic pricing and learning: historical origins, current research,
  and new directions.
\newblock \emph{Surveys in operations research and management science},
  20\penalty0 (1):\penalty0 1--18, 2015.

\bibitem[Devanur et~al.(2019)Devanur, Jain, Sivan, and Wilkens]{Devanur2011NOO}
Nikhil~R. Devanur, Kamal Jain, Balasubramanian Sivan, and Christopher~A.
  Wilkens.
\newblock Near optimal online algorithms and fast approximation algorithms for
  resource allocation problems.
\newblock \emph{Journal of the {ACM}}, 66\penalty0 (1):\penalty0 7:1--7:41,
  2019.

\bibitem[Even{-}Dar et~al.(2004)Even{-}Dar, Kakade, and Mansour]{Eyal2005EMD}
Eyal Even{-}Dar, Sham~M. Kakade, and Yishay Mansour.
\newblock Experts in a markov decision process.
\newblock In \emph{Advances in Neural Information Processing Systems 17, {NIPS}
  '04}, pages 401--408, Vancouver, British Columbia, Canada, December 2004.

\bibitem[Feldman et~al.(2015)Feldman, Gravin, and
  Lucier]{feldman2014combinatorial}
Michal Feldman, Nick Gravin, and Brendan Lucier.
\newblock Combinatorial auctions via posted prices.
\newblock In \emph{Proceedings of the Twenty-Sixth Annual {ACM-SIAM} Symposium
  on Discrete Algorithms, {SODA} '15}, pages 123--135, San Diego, CA, USA,
  January 2015. {SIAM}.

\bibitem[Feldman et~al.(2016)Feldman, Koren, Livni, Mansour, and
  Zohar]{feldman2016online}
Michal Feldman, Tomer Koren, Roi Livni, Yishay Mansour, and Aviv Zohar.
\newblock Online pricing with strategic and patient buyers.
\newblock In \emph{Advances in Neural Information Processing Systems 29: Annual
  Conference on Neural Information Processing Systems, {NIPS} '16}, pages
  3864--3872, Barcelona, Spain, December 2016.

\bibitem[Freund and Schapire(1995)]{Freund1995DTG}
Yoav Freund and Robert~E. Schapire.
\newblock A decision-theoretic generalization of on-line learning and an
  application to boosting.
\newblock In \emph{Computational Learning Theory, Second European Conference,
  EuroCOLT '95}, volume 904 of \emph{Lecture Notes in Computer Science}, pages
  23--37, Barcelona, Spain, March 1995. Springer.

\bibitem[Guan et~al.(2014)Guan, Raginsky, and Willett]{Guan2014MVL}
Peng Guan, Maxim Raginsky, and Rebecca Willett.
\newblock From minimax value to low-regret algorithms for online markov
  decision processes.
\newblock In \emph{American Control Conference, {ACC} '14}, pages 471--476,
  Portland, OR, USA, June 2014. {IEEE}.

\bibitem[Kalai and Vempala(2005)]{Kailai2005EAO}
Adam~Tauman Kalai and Santosh~S. Vempala.
\newblock Efficient algorithms for online decision problems.
\newblock \emph{Journal of Computer and System Sciences}, 71\penalty0
  (3):\penalty0 291--307, 2005.

\bibitem[Kesselheim et~al.(2014)Kesselheim, Radke, T{\"{o}}nnis, and
  V{\"{o}}cking]{Kesselheim2014PBD}
Thomas Kesselheim, Klaus Radke, Andreas T{\"{o}}nnis, and Berthold
  V{\"{o}}cking.
\newblock Primal beats dual on online packing lps in the random-order model.
\newblock In \emph{Symposium on Theory of Computing, {STOC} '14}, pages
  303--312, New York, NY, USA, May - June 2014. {ACM}.

\bibitem[Kleinberg and Leighton(2003)]{kleinberg2003value}
Robert~D. Kleinberg and Frank~Thomson Leighton.
\newblock The value of knowing a demand curve: Bounds on regret for online
  posted-price auctions.
\newblock In \emph{Symposium on Foundations of Computer Science, FOCS '03},
  pages 594--605, Cambridge, MA, USA, October 2003. {IEEE} Computer Society.

\bibitem[Lehmann et~al.(2006)Lehmann, Lehmann, and
  Nisan]{lehmann2006combinatorial}
Benny Lehmann, Daniel Lehmann, and Noam Nisan.
\newblock Combinatorial auctions with decreasing marginal utilities.
\newblock \emph{Games and Economic Behavior}, 55\penalty0 (2):\penalty0
  270--296, 2006.

\bibitem[Lehmann et~al.(2002)Lehmann, O'Callaghan, and
  Shoham]{lehmann2002truth}
Daniel Lehmann, Liadan O'Callaghan, and Yoav Shoham.
\newblock Truth revelation in approximately efficient combinatorial auctions.
\newblock \emph{Journal of the ACM}, 49\penalty0 (5):\penalty0 577--602, 2002.

\bibitem[Littlestone and Warmuth(1989)]{Littlestone1989WMA}
Nick Littlestone and Manfred~K. Warmuth.
\newblock The weighted majority algorithm.
\newblock In \emph{30th Annual Symposium on Foundations of Computer Science,
  FOCS '89}, pages 256--261, Research Triangle Park, North Carolina, USA,
  October - November 1989. {IEEE} Computer Society.

\bibitem[Neu et~al.(2014)Neu, Gy{\"{o}}rgy, Szepesv{\'{a}}ri, and
  Antos]{Neu2014OMD}
Gergely Neu, Andr{\'{a}}s Gy{\"{o}}rgy, Csaba Szepesv{\'{a}}ri, and
  Andr{\'{a}}s Antos.
\newblock Online markov decision processes under bandit feedback.
\newblock \emph{{IEEE} Transactions on Automatic Control}, 59\penalty0
  (3):\penalty0 676--691, 2014.

\bibitem[Yao(1977)]{Yao1997PCM}
Andrew~Chi{-}Chih Yao.
\newblock Probabilistic computations: Toward a unified measure of complexity.
\newblock In \emph{18th Annual Symposium on Foundations of Computer Science,
  FOCS '77}, pages 222--227, Providence, Rhode Island, USA, October - November
  1977. {IEEE} Computer Society.

\bibitem[Yu et~al.(2009)Yu, Mannor, and Shimkin]{Yu2008MDP}
Jia~Yuan Yu, Shie Mannor, and Nahum Shimkin.
\newblock Markov decision processes with arbitrary reward processes.
\newblock \emph{Mathematics of Operations Research}, 34\penalty0 (3):\penalty0
  737--757, 2009.

\bibitem[Zhang et~al.(2018)Zhang, Wu, Huang, and Li]{zhang2018occupation}
Xiaoxi Zhang, Chuan Wu, Zhiyi Huang, and Zongpeng Li.
\newblock Occupation-oblivious pricing of cloud jobs via online learning.
\newblock In \emph{2018 {IEEE} Conference on Computer Communications, {INFOCOM}
  '18}, pages 2456--2464, Honolulu, HI, USA, April 16-19 2018. {IEEE}.

\end{thebibliography}

\clearpage
\appendix

\begin{figure}
    \centering
    \Large{\textbf{APPENDIX}}
\end{figure}




\section{External Regret}
\label{section:external_regret}

To complement our result, in this part we consider another natural alternative definition for regret, known as \emph{external regret}, defined as follows (see \cite{Arora2018PRR} for more details).

\begin{equation}
\max_{\gamma \in \Gamma} \sum_{t \in [T]} f_{t}\Big(s_{t}, \gamma(s_{t}) \Big) - \sum_{t \in [T]} \mathbb{E}\Big[ f_{t}(s_{t}, x_{t}) \Big] \, . \label{formula_external_regret}
\end{equation}
In words, while policy regret is the difference between the simulated reward of the optimal fixed policy and the actual reward of the algorithm, in external regret the reward that is being accredited to the optimal fixed policy in each round $t$ is the reward that policy would have obtained when being in the actual state of the algorithm (versus being in its simulated current state). 
In \cite{Arora2018PRR}, it is shown that for the online learning problems where the reward functions depend on the $m$-recent actions, the policy regret and the external regret are incomparable, which means that any algorithm with a sublinear policy regret has a linear external regret, and vice visa. Based on the techniques proposed in \cite{Arora2018PRR}, we prove that such a statement also holds for the online learning problem on the Dd-MDP with chasability. This is the reason why we focus on obtaining vanishing policy regret in the main part of this paper.

\begin{theorem}\label{theorem_policy_regret_external_regret_incomparable}
There exists a $\sigma$-chasable instance of the Dd-MDP so that for any online learning algorithm having a sublinear policy regret on this instance, it cannot guarantee a sublinear external regret on the same instance, and vice visa.
\end{theorem}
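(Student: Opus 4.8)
The plan is to exhibit a single explicit family of Dd-MDP instances, one per horizon $T$, on which the two regret notions provably pull in opposite directions, in the spirit of the memory-bounded separation of \cite{Arora2018PRR}. First I would set up a two-state, two-action instance in which the state merely records the previous action. Take $\states = \{q_0, q_1\}$ and $\actions = \{0,1\}$ with both actions feasible in both states, initial state $s_1 := q_0$, and the (static, action-determined) transition $g_t(\cdot, a) = q_a$ for all $t$. With this choice the state at round $t$ is $q_{x_{t-1}}$, so each reward $f_t(s_t, x_t)$ is effectively a function of the pair (previous action, current action); this is exactly what lets me import the memory-$1$ intuition. I would fix the static reward $f(q_0,0) = \tfrac12$, $f(q_0,1) = \tfrac23$, and $f(q_1,0) = f(q_1,1) = 0$, and take the comparator set $\Gamma = \{\gamma_0, \gamma_1\}$ consisting of the two constant policies that always play $0$ and always play $1$, respectively.

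Second I would verify chasability, which is cheap in this design: because the transition depends only on the action, a chasing oracle that simply replays the target policy's action $x^{\gamma}(t)$ (computable from the observed transition functions) lands in the target's simulated state $s^{\gamma}(t)$ from round $\tinit+1$ onward, and therefore matches the target's reward in every round except possibly $\tinit$. Hence the chasing regret is at most $1$, so the instance is $\sigma$-chasable with $\sigma = O(1) = o(T)$, placing it squarely inside the regime of \Cref{def:chase} and \Cref{theorem_regret_of_hola}.

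The heart of the argument is a two-sided counting bound. For an arbitrary algorithm, let $n_0$ (resp.\ $n_1$) be the number of rounds in which it sits at $q_0$ and plays $0$ (resp.\ plays $1$), and let $k = n_0 + n_1$ be the total number of rounds spent at $q_0$; then its reward is $R = \tfrac23 n_1 + \tfrac12 n_0$, the policy benchmark is $P = \max(\tfrac12 T,\, \tfrac23) = \tfrac12 T$, and the external benchmark is $E = \max(\tfrac12 k,\, \tfrac23 k) = \tfrac23 k$. The key structural fact is that $q_0$ is the only reward-bearing state, yet the tempting action $1$ that the external comparator favors there immediately ejects the algorithm into the reward-free trap $q_1$; a short accounting of entries into $q_1$ yields the ejection bound $n_0 \le T - 2n_1 + 1$. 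For the first implication, sublinear policy regret forces $R = \tfrac12 T - o(T)$, which through this inequality forces $n_1 = o(T)$ and $n_0 = T - o(T)$, so $k = T - o(T)$, $E \ge \tfrac23 T - o(T)$, and the external regret $E - R = \Omega(T)$. For the converse, sublinear external regret forces $R \ge \tfrac23 k - o(T)$, which together with $R = \tfrac23 n_1 + \tfrac12 n_0$ forces $n_0 = o(T)$; then the ejection bound gives $n_1 \le (T+1)/2$, hence $R \le \tfrac13 T + o(T)$ and the policy regret $P - R = \Omega(T)$.

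The step I expect to be the main obstacle is choosing the reward values so that \emph{both} implications hold on the \emph{same} instance. On one hand $f(q_0,1)$ must exceed $f(q_0,0)$ by a constant, so that the external comparator genuinely prefers the tempting action at $q_0$ and external regret can blow up; on the other hand the trap $q_1$ must make action $1$ net-negative over any return trip, so that the best fixed policy, and every low-policy-regret algorithm, avoids it. Pinning down constants that simultaneously make ``always $0$'' the best policy, rule out the oscillating strategy that would otherwise enjoy both regrets being small, and keep every reward inside $[0,1]$ is the delicate calculation; the clean gap $\tfrac12$ versus $\tfrac23$ combined with the ejection bound $n_0 \le T - 2n_1 + 1$ is precisely what renders the corner $(\text{policy regret},\ \text{external regret}) = (o(T), o(T))$ provably unreachable.
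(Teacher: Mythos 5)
Your proof is correct, and it takes a genuinely different (more minimal) route than the paper's. The paper also follows the memory-based blueprint of \cite{Arora2018PRR}, but its instance is an $m$-state chain ($m>2$ a constant): action \textsc{Forward} advances one state (and loops at state $m$), \textsc{Backward} resets to state $1$, and rewards are nonzero only at state $m$ ($\tfrac{1}{2}$ for \textsc{Forward}, $1$ for \textsc{Backward}). Chasability there holds with $\sigma = m-1$, because the oracle must first walk \textsc{Forward} for up to $m-1$ rounds before it can synchronize with the target policy; the separation then comes from the accounting $k + mk' \le T$, where $k,k'$ count \textsc{Forward} and \textsc{Backward} plays at state $m$: sublinear policy regret against the always-\textsc{Forward} policy forces $k = T - o(T)$, which makes the external benchmark (always-\textsc{Backward}, credited reward $1$ in every round the algorithm sits at $m$) exceed the algorithm's reward by $\Omega(T)$, and conversely sublinear external regret caps the algorithm's total reward at $T/m + o(T)$, making policy regret linear. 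Your two-state action-recording instance is the compressed version of the same idea: your trap state $q_1$ plays the role of the paper's long return path, your ejection bound $n_0 \le T - 2n_1 + 1$ plays the role of $k + mk' \le T$, and your chasability argument is cleaner ($\sigma = 1$, immediate because the transition depends only on the action, versus the paper's multi-step walk). What the paper's construction buys is a tunable constant: its policy-regret blowup scales as $(\tfrac{1}{2} - \tfrac{1}{m})T$, approaching $\tfrac{1}{2}T$ for large $m$, whereas your gaps are fixed at $\tfrac{1}{6}T$; what yours buys is the smallest possible state space and the best possible chasability parameter. One cosmetic caveat shared by both arguments: the counting ($n_0,n_1$ for you, $k,k'$ for the paper) is done pathwise, so for randomized algorithms the inequalities should be read in expectation, which is harmless since every bound involved is linear in the counts.
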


\begin{proof}
We start by constructing a deterministic MDP instance and proving that it is a feasible Dd-MDP instance with a constant $\sigma$. 
\OriginalDisplay
\begin{equation}
\xymatrix{
*+[o][F]{1} \ar[r]^{0} \ar@(dl, ul)^{0} & *+[o][F]{2} \ar[r]^{0}  \ar@/^/[l]^{0} & *+[o][F]{3} \ar[r]^{0}  \ar@/_1pc/[ll]_{0} & \cdots \ar[r]^{0} & *+[o][F]{m} \ar@/^2pc/[llll]^{1} \ar@(ur, dr)^{0.5}
} \label{formula_impossibility_result}
\end{equation}

Consider the deterministic MDP instance with $m > 2$ states in Eq.~\eqref{formula_impossibility_result}, 
\OriginalDisplayEnd
\MultiLineSplit
Consider a deterministic MDP instance with $m > 2$ states, 
\MultiLineSplitEnd
where $m$ is a constant independent of $T$. Each state in this instance is labeled with a distinct integer in $[m]$. The action set contains two actions, which are denoted by \textsc{Forward} and \textsc{Backward}, respectively. These two actions are feasible for every state. The state transition functions $g_{t}$ and reward functions $f_{t}$ are fixed for all $t \in [T]$ as follows.
\begin{align*}
g_{t}(s, x) =& \begin{cases}
s + 1 & \text{if } s < m \bigwedge x = \textsc{Forward} \\
m & \text{if } s = m \bigwedge x = \textsc{Forward} \\
1 & \text{if } x = \textsc{Backward}
\end{cases} \, , \\
f_{t}(s, x) =& \begin{cases}
0.5 & \text{if } s = m \bigwedge x = \textsc{Forward} \\
1 & \text{if } s = m \bigwedge x = \textsc{Backward} \\
0 & \text{otherwise}
\end{cases} \, .
\end{align*}
For any target policy $\gamma \in \Gamma$, initial time $\tinit$ and any initial state $\sinit$, let $k = m - \sinit$, and $\lbrace \hat{x}_{t} \rbrace_{t\geq \tinit}$ be a sequence of actions so that
\begin{equation*}
\hat{x}_{t} = \begin{cases}
\textsc{Forward} & \text{if } t \leq \tinit + k - 1 \\
\textsc{Backward} & \text{otherwise}
\end{cases} \, .
\end{equation*}
This sequence of actions are trivially feasible. For any $\tau \leq \tinit + k - 1$, it is easy to see that
\begin{equation*}
\sum_{t = \tinit}^{\tau} f_{t}(\FPS{t}{\gamma}, \FPA{t}{\gamma}) - f_{t}(\hat{s}_{t}, \hat{x}_{t}) \leq m - 1 \, ,
\end{equation*}
where $\lbrace \hat{s}_{t} \rbrace_{t \geq \tinit}$ is a sequence of states defined in a similar with with Eq.~\eqref{formula_states_visited_in_local_online_learning}. By the setting of the state transition function, we have $\hat{s}_{\tinit + k} = m$. Let $t' = \underset{t \geq \tinit + k}{\argmin} \FPA{t}{\gamma} = \textsc{Backward}$. Then
\begin{equation*}
\sum_{t = \tinit + k}^{t'} f_{t}(\FPS{t}{\gamma}, \FPA{t}{\gamma}) - f_{t}(\hat{s}_{t}, \hat{x}_{t}) \leq 0 \, ,
\end{equation*}
and for every $t > t'$, 
\begin{equation*}
f_{t}(\FPS{t}{\gamma}, \FPA{t}{\gamma}) = f_{t}(\hat{s}_{t}, \hat{x}_{t}) 
\end{equation*}
because in such a case $\FPS{t}{\gamma} = \hat{s}_{t}$ and $\FPA{t}{\gamma} = \hat{x}_{t}$ always hold. Putting the three formulas above together, it is proved that this Dd-MDP instance is $\sigma$-chasable with $\sigma = m - 1$.

Let the number of {\Steps} that an arbitrary algorithm performs the actions \textsc{Forward} and \textsc{Backward} at the state $m$ be $k$ and $k'$, respectively. Note that each time an algorithm performs \textsc{Backward} at the state $m$, then it needs to take at least $m - 1$ {\Steps} to go back to the state $m$. It implies that $k + m \cdot k' \leq T$. The total reward obtained by this algorithm is
\begin{equation}
\frac{1}{2}k + k' \leq \frac{1}{2}k + \frac{1}{m}(T - k) \, . \label{formula_total_reward_and_policy_regret}
\end{equation}
Since the total reward by repeating a fixed policy $\gamma_{F}$ that maps every state to \textsc{Forward} is at least $\frac{1}{2}(T - m)$, the policy regret is at least $(\frac{1}{2} - \frac{1}{m})(T - k) - \frac{m}{2}$. Therefore, if the policy regret is sublinear in $T$, we have $k = T - o(T)$. Now, consider another policy $\gamma_{B}$ that maps every state to the action \textsc{Backward}. We have 
\OriginalDisplay
\begin{equation*}
\sum_{t = 1}^{T}f_{t}(s_{t}, \gamma'(s_{t})) - \sum_{t = 1}^{T}f_{t}(s_{t}, x_{t}) \; \geq \; \Big(1 - \frac{1}{2} \Big) \cdot k \, ,\nonumber
\end{equation*}
\OriginalDisplayEnd
\MultiLineSplit
$\sum_{t = 1}^{T}f_{t}(s_{t - 1}, \gamma_{B}(s_{t - 1})) - \sum_{t = 1}^{T}f_{t}(s_{t - 1}, x_{t}) \geq  \Big(1 - \frac{1}{2} \Big) \cdot k$, 
\MultiLineSplitEnd
which implies that the external regret is linear in $T$.

Now consider an arbirary algorithm whose external regret is sublinear in $T$. Then, the total reward of this algorithm is at most $\frac{T}{m} + o(T)$, because otherwise it can still be inferred from Eq.~\eqref{formula_total_reward_and_policy_regret} that $k$ is linear in $T$, which leads to a linear external regret. Recall that the total reward of repeating the policy $\gamma_{F}$ is $(T - m)/2$. Therefore, the policy regret is linear in $T$.
\end{proof}

\section{Bandit Setting}
\label{section:bandit}

In Section \ref{section:HOLA}, we investigate the online learning problem on Dd-MDPs under the full information setting, which means that for each {\Step} $t$, both the state transition function $g_{t}(s, x)$ and reward function $f_{t}(s, x)$ selected by the adversary are completely revealed to the decision maker after the decision maker chooses a (randomized) action $x_{t}$. In this part, we consider the \emph{bandit} setting, where at each {\Step} $t$, the decision maker only knows the actual reward she receives, $f_{t}(s_{t}, x_{t})$, with the state transition function $g_{t}(\cdot, \cdot)$. Obtaining vanishing regret for Dd-MDPs under the bandit setting requires a stronger condition than $\sigma$-chasability, which is defined in the following subsection.

\subsection{Stateless Chasability}
\label{sec:stateless_chaseability}
We say that an instance of Dd-MDP satisfies the \emph{stateless chasability} condition for some parameter $\sigma > 0$ if there exists a chasing ongoing oracle $\mathcal{O}^{\mathtt{Chasing}}$ which not only guarantees that $\mathtt{CR} \leq \sigma$, but also ensures that for any target policy $\gamma$ and any initial state $\tinit$, the cumulative reward obtained by taking the generated actions $\lbrace \hat{x}_{t} \rbrace_{t \geq \tinit}$ does not depend on the initial state $\sinit$. More formally, let $\sinit, \sinit'$ be two arbitrary initial states, and $\lbrace \hat{x}_{t} \rbrace_{t \geq \tinit}, \lbrace \hat{x}_{t}' \rbrace_{t \geq \tinit}$ be two sequence of actions generated by the chasing ongoing oracle with starting from $\sinit$ and $\sinit'$, respectively. The stateless chasability condition requires that for any $\tfinal \geq \tinit$
\begin{equation*}
\sum_{t \in [\tinit, \tfinal]} \mathbb{E}\big[f_{t}(\hat{s}_{t}, \hat{x}_{t}) \big] = \sum_{t \in [\tinit, \tfinal]} \mathbb{E}\big[ f_{t}(\hat{s}_{t}', \hat{x}_{t}') \big] \, ,
\end{equation*}
where $\hat{s}_{t}$ and $\hat{s}_{t}'$ are defined in a similar way with Eq.~\eqref{formula_states_visited_in_local_online_learning}. A chasing ongoing oracle is said to be applicable to the bandit setting if its decision on each action $\hat{x}_{t}$ for $t \geq \tinit$ only depends on $\sinit$, $\lbrace f_{t'}(\hat{s}_{t'}, \hat{x}_{t'}) \rbrace_{t' \in [\tinit, t - 1]}$ and $\lbrace g_{t'}(\cdot, \cdot) \rbrace_{t' \in [\tinit, t - 1]}$.


\subsection{Multiarmed Bandit Problem}
\label{section:multiarmed_bandit}
To develop vanishing-regret algorithms for $\sigma$-chasable Dd-MDPs under the bandit setting, we utilize technical tools that are related to the \emph{Multiarmed Bandit Problem (MBP)}~\cite{Auer2002NMB}. Using a blackbox algorithm for this problem, \Cref{subsecion_C_and_S_bandit} shows how to obtain vanishing regret for our problem.

In MBP, there is a set of arms $\Gamma$, and  $\Psi \in\mathbb{N}$ rounds. At each {\Step} $\psi \in [\Psi]$, an adversary specifies a reward function $F_{\psi}: \Gamma \mapsto [0, 1]$, which is unknown to the online algorithm at the beginning of this round. Simultaneously, the algorithm chooses an action $\gamma_{\psi} \in\Gamma$.  Then the reward $F_{\psi}(\gamma_{\psi})$ obtained by the algorithm is revealed. The goal of the algorithm is to pick a sequence of actions $\gamma_1,\ldots, \gamma_\Psi$ in an online fashion to maximize $\mathbb{E}\Big[\sum_{\psi \in [\Psi]}F_{t}(\gamma_{\psi}) \Big]$. The regret is defined to be
\begin{equation*}
\max_{\gamma \in \Gamma}\sum_{\psi \in [\Psi]}F_{\psi}(\gamma) - \sum_{\psi \in [\Psi]}\mathbb{E}\Big[F_{\psi}(\gamma_{\psi})\Big] \, . 
\end{equation*}

\begin{theorem}[\cite{Audibert2009MPA}]\label{theorem_INF_regret}
There exists an algorithm \emph{Implicitly Normalized Forecaster} (\texttt{INF}) for MBP whose regret is bounded by $O\left(\sqrt{|\Gamma| \cdot \Psi}\right)$.
\end{theorem}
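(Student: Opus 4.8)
The plan is to realize \texttt{INF} as an instance of potential-based online stochastic mirror descent and to prove the bound by balancing a ``penalty'' (diameter) term against a ``stability'' (variance) term. First I would pass to losses: set $\ell_\psi(\gamma) = 1 - F_\psi(\gamma) \in [0,1]$, which leaves the regret unchanged, and form the importance-weighted estimators $\widehat{\ell}_\psi(\gamma) = \frac{\ell_\psi(\gamma)}{p_\psi(\gamma)}\, 1_{\gamma_\psi = \gamma}$, where $p_\psi$ is the sampling distribution over $\Gamma$ used in round $\psi$. Conditioned on the history $\mathcal{H}_{\psi-1}$ these are unbiased, $\mathbb{E}[\widehat{\ell}_\psi(\gamma)\mid \mathcal{H}_{\psi-1}] = \ell_\psi(\gamma)$, so by the tower rule it suffices to control the expected \emph{estimated} regret $\mathbb{E}\big[\sum_{\psi\in[\Psi]} \langle p_\psi - e_{\gamma^\ast}, \widehat{\ell}_\psi\rangle\big]$ against the point mass $e_{\gamma^\ast}$ at the best fixed arm $\gamma^\ast$.

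Second, I would identify the implicit normalization of \texttt{INF} with a mirror projection onto the simplex. Concretely, $p_\psi$ is the minimizer over the simplex of $\langle p, \widehat{L}_{\psi-1}\rangle + \tfrac{1}{\eta}R(p)$, where $\widehat{L}_{\psi-1} = \sum_{\psi' < \psi}\widehat{\ell}_{\psi'}$ and the regularizer $R$ is the convex conjugate of the potential $\psi(\cdot)$ driving the forecaster. The standard follow-the-regularized-leader identity then gives, pathwise,
\begin{equation*}
\sum_{\psi\in[\Psi]} \big\langle p_\psi - e_{\gamma^\ast},\, \widehat{\ell}_\psi \big\rangle
\;\le\;
\frac{R(e_{\gamma^\ast}) - R(p_1)}{\eta}
\;+\;
\sum_{\psi\in[\Psi]} \mathrm{stab}_\psi ,
\end{equation*}
where the first summand is the penalty term and $\mathrm{stab}_\psi$ is a local-norm (second-order Taylor) term controlled by the curvature of $R$ at $p_\psi$.

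Third comes the crux: the choice of regularizer. Using the negative Tsallis entropy $R(p) = 2\big(1 - \sum_{\gamma} \sqrt{p(\gamma)}\big)$ (equivalently, the polynomial potential underlying \texttt{INF} rather than the exponential one of \texttt{EXP3}) makes the penalty term $O(\sqrt{|\Gamma|}/\eta)$, since $R(e_{\gamma^\ast}) - R(p_1) = O(\sqrt{|\Gamma|})$ when $p_1$ is uniform. For stability, the curvature estimate yields $\mathrm{stab}_\psi \lesssim \eta \sum_\gamma p_\psi(\gamma)^{3/2}\,\widehat{\ell}_\psi(\gamma)^2$, whose conditional expectation is at most $\eta \sum_\gamma p_\psi(\gamma)^{1/2} \le \eta\sqrt{|\Gamma|}$ by Cauchy--Schwarz on the simplex (using $\mathbb{E}[\widehat{\ell}_\psi(\gamma)^2\mid\mathcal{H}_{\psi-1}] = \ell_\psi(\gamma)^2/p_\psi(\gamma) \le 1/p_\psi(\gamma)$). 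Summing over $\Psi$ rounds gives total stability $O(\eta\sqrt{|\Gamma|}\,\Psi)$, and balancing the two terms by taking $\eta \asymp 1/\sqrt{\Psi}$ produces the claimed $O(\sqrt{|\Gamma|\cdot\Psi})$ bound. Notably, the absence of the $\log|\Gamma|$ factor (present in \texttt{EXP3}) is exactly the payoff of the Tsallis-$\tfrac12$ regularizer over the entropic one.

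The main obstacle I expect is the rigorous second-order control of the stability term: one must show that the conjugate $R^\ast$ is smooth enough in the appropriate local norm so that the per-round Taylor remainder is genuinely bounded by the curvature expression above, and that this bound survives even when some $p_\psi(\gamma)$ is tiny (where $\widehat{\ell}_\psi$ can be large). This is precisely the place where the growth/convexity conditions on $\psi$ are used, and verifying them for the polynomial potential — together with checking that the implicit normalization constant $C_\psi$ defining $p_\psi$ coincides with the mirror projection — is the delicate, technical heart of the argument.
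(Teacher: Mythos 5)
You were asked to prove a statement that the paper itself never proves: Theorem~\ref{theorem_INF_regret} is imported verbatim from Audibert and Bubeck (2009) as a black-box ingredient for the bandit reduction in Appendix~\ref{section:bandit}, so there is no in-paper argument to compare against. Your proposal instead reconstructs the external result, and the route you take --- follow-the-regularized-leader over the simplex with the Tsallis-$\tfrac{1}{2}$ regularizer $R(p) = 2\bigl(1 - \sum_{\gamma}\sqrt{p(\gamma)}\bigr)$, importance-weighted loss estimators, and a penalty/stability decomposition --- is the now-standard proof of the minimax $O(\sqrt{|\Gamma|\cdot\Psi})$ bound, and your two estimates are right: penalty $O(\sqrt{|\Gamma|}/\eta)$, expected per-round stability at most $\eta\sum_{\gamma}p_{\psi}(\gamma)^{1/2} \le \eta\sqrt{|\Gamma|}$ by Cauchy--Schwarz, and $\eta \asymp 1/\sqrt{\Psi}$ balances them. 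Note, however, that this is not literally the argument of the cited source: Audibert and Bubeck analyze the implicitly normalized forecaster directly through its defining potential (probabilities obtained by applying a polynomial potential to shifted cumulative estimates and solving for the normalization constant), and the identification of that forecaster with mirror descent under Tsallis entropy was made explicit only in later work; your reformulation buys a much shorter proof in which the disappearance of the $\log|\Gamma|$ factor is transparent. The one place where your sketch is genuinely incomplete is the crux you yourself flag: the local-norm control of the Taylor remainder when some $p_{\psi}(\gamma)$ is tiny. This step is where the passage from gains $F_{\psi}$ to losses $1-F_{\psi}$ is not cosmetic but essential --- with non-negative loss estimators, the probability of the played arm can only decrease after the update, so the intermediate point of the Taylor expansion has its relevant coordinate dominated by $p_{\psi}(\gamma_{\psi})$ and the curvature bound $\mathrm{stab}_{\psi} \lesssim \eta\, p_{\psi}(\gamma_{\psi})^{3/2}\,\widehat{\ell}_{\psi}(\gamma_{\psi})^{2}$ goes through, whereas with gain-based estimators the inequality points the wrong way. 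Provided you fill in that (standard but nontrivial) lemma, your sketch is a correct and complete proof of the theorem as stated.
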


\subsection{Decision Making Algorithm: Chasing \& Switching in Fixed-Length Periods}
\label{subsecion_C_and_S_bandit}

We now present our decision making (DM) algorithm for the $\sigma$-chasable Dd-MDP problems under the bandit setting. Our DM algorithm \emph{Chasing and Switching in Fixed-Length Periods} (C\&S-FLP) requires blackbox accesses to a chasing ongoing oracle that is applicable to the bandit setting and Algorithm \texttt{INF} for MBP, where the action set of MBP is set to be the collections of policies $\Gamma$ in Dd-MDP. Algorithm \texttt{INF} runs over consecutive periods of $\tau$ {\Steps} for some $\tau > \sigma$, while the last period is allowed to have less than $\tau$ {\Steps}. At the beginning of each period $\psi$, C\&S-FLP invokes Algorithm \texttt{INF} to choose a policy $\gamma_{\psi}$ from $\Gamma$. Then it starts a new run of the chasing ongoing oracle $\mathcal{O}^{\mathtt{Chasing}}$ with $\gamma_{\psi}$ as the target policy, $s_{(\psi - 1)\tau + 1}$ as the initial state $\sinit$ and $(\psi - 1)\tau + 1$ as the initial time $\tinit$. Then C\&S-FLP takes the sequence $\lbrace \hat{x}_{t} \rbrace_{t \in [(\psi - 1)\tau + 1, \min \lbrace \psi\cdot \tau, T \rbrace ]}$ of actions generated by $\mathcal{O}^{\mathtt{Chasing}}$ throughout the current period $\psi$, and send the reward $f_{t}(s_{t}, x_{t})$ and state transition function $g_{t}(\cdot, \cdot)$ to $\mathcal{O}^{\mathtt{Chasing}}$ after performing each $\hat{x}_{t}$ as the feedback. After the reward of the last step $t = \min\lbrace T, \psi\tau \rbrace$ of the current period is received, C\&S-FLP computes $F_{\psi}(\gamma_{\psi}) = \frac{1}{\tau}\sum\limits_{t' = (\psi - 1)\tau + 1}^{t} f_{t'}(s_{t'}, x_{t'})$ and feeds it to Algorithm \texttt{INF} as the reward of $\gamma$ at $t$.

\begin{theorem}\label{theorem_regret_of_C_and_S_FLP}
The regret of C\&S-FLP is bounded by $O\left(\frac{\sigma \cdot T}{\tau} + \sqrt{|\Gamma|T\tau }\right)$.
\end{theorem}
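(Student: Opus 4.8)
The plan is to mirror the full-information reduction of \Cref{theorem_regret_of_hola}, replacing the switching-cost primitive by the bandit algorithm \texttt{INF} and the plain chasing oracle by a stateless one. First I would partition the $T$ rounds into $\Psi = \lceil T / \tau \rceil$ consecutive periods, each of length $\tau$ (the last possibly shorter), and set up an MBP instance whose arm set is the policy collection $\Gamma$ and whose rounds are the $\Psi$ periods. For a period $\psi$ and an arm $\gamma$, I define the MBP reward $F_{\psi}(\gamma)$ to be $\tfrac{1}{\tau}$ times the expected cumulative reward collected by running the chasing oracle with target $\gamma$ throughout period $\psi$. The decisive use of \emph{stateless} chasability (\Cref{sec:stateless_chaseability}) is exactly here: it guarantees that this cumulative reward is independent of the state in which the algorithm happens to enter period $\psi$, so $F_{\psi}(\gamma)$ is a well-defined function of the arm alone despite bandit feedback. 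Since C\&S-FLP actually runs the chasing oracle for the arm $\gamma_{\psi}$ selected by \texttt{INF} during period $\psi$, its realized per-period reward coincides with $F_{\psi}(\gamma_{\psi})$, and each $F_{\psi}(\gamma) \in [0,1]$, so this is a legitimate MBP instance.

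With the instance in place, I would invoke the \texttt{INF} guarantee of \Cref{theorem_INF_regret} over the $\Psi$ periods, obtaining $\max_{\gamma \in \Gamma} \sum_{\psi} F_{\psi}(\gamma) - \sum_{\psi} \Ex[F_{\psi}(\gamma_{\psi})] = O(\sqrt{|\Gamma| \cdot \Psi}) = O(\sqrt{|\Gamma| \, T / \tau})$. Multiplying through by $\tau$ un-normalizes both sums: writing $\mathrm{CO}(\gamma) \triangleq \sum_{\psi} \Ex[\sum_{t \in \text{period } \psi} f_t(\hat{s}(t), \hat{x}(t))]$ for the total expected chasing-oracle reward of arm $\gamma$, the benchmark becomes $\max_{\gamma} \mathrm{CO}(\gamma)$, the subtracted term becomes the algorithm's actual total reward $\sum_{t} \Ex[f_t(s_t, x_t)]$, and the bound scales to $O(\tau \sqrt{|\Gamma| \, T / \tau}) = O(\sqrt{|\Gamma| \, T \tau})$, supplying the second term of the claimed regret.

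It remains to relate $\mathrm{CO}(\gamma)$ back to the benchmark $\sum_{t} f_t(\FPS{t}{\gamma}, \FPA{t}{\gamma})$ appearing in the regret. For this I would apply the $\sigma$-chasing guarantee of \Cref{def:chase} \emph{period by period}: within each period $\psi$ the oracle is restarted against $\gamma$, so $\sum_{t \in \text{period } \psi} f_t(\FPS{t}{\gamma}, \FPA{t}{\gamma}) - \Ex[\sum_{t \in \text{period } \psi} f_t(\hat{s}(t), \hat{x}(t))] \le \sigma$. Summing the $\Psi$ inequalities telescopes the simulated term into the global quantity $\sum_{t} f_t(\FPS{t}{\gamma}, \FPA{t}{\gamma})$ and accumulates a slack of $\Psi \cdot \sigma = (T/\tau)\,\sigma$, i.e.\ $\sum_{t} f_t(\FPS{t}{\gamma}, \FPA{t}{\gamma}) - \mathrm{CO}(\gamma) \le \sigma T / \tau$ for every $\gamma$. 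Applying this to the maximizing benchmark policy $\gamma^{*}$ and chaining with the previous paragraph yields $\max_{\gamma} \sum_{t} f_t(\FPS{t}{\gamma}, \FPA{t}{\gamma}) - \sum_{t} \Ex[f_t(s_t, x_t)] \le \sigma T / \tau + O(\sqrt{|\Gamma| T \tau})$, which is the asserted bound.

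The step I expect to be most delicate is the well-definedness of $F_{\psi}$ as a genuine MBP reward vector rather than a state-contingent quantity. Under full information (\Cref{theorem_regret_of_hola}) one simply simulates every policy to read off its reward, but bandit feedback forecloses that; defining the arm reward through the chasing oracle only makes sense because stateless chasability removes the dependence on the (history-determined) entry state of each period. I would therefore take care to verify that the chasing oracle employed here is both stateless-chasable and applicable to the bandit setting in the sense of \Cref{sec:stateless_chaseability} --- its action at each step depending only on $\sinit$ and the past realized rewards and transition functions --- so that the $F_{\psi}$ values are consistent with the feedback \texttt{INF} actually receives.
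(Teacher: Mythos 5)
Your proposal is correct and follows essentially the same route as the paper's own proof: partition the horizon into $\Psi = \lceil T/\tau \rceil$ fixed-length periods, run \texttt{INF} over periods with arms identified with policies (using stateless chasability to make the per-period arm rewards $F_{\psi}(\gamma)$ well-defined independently of the entry state), apply the $\sigma$-chasing guarantee once per period for a total slack of $\sigma T/\tau$, and un-normalize the \texttt{INF} regret to get the $O(\sqrt{|\Gamma| T \tau})$ term. If anything, your period-by-period chasing inequality is stated more cleanly than the corresponding display in the paper, which conflates the per-period oracle reward with the maximized benchmark.
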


\begin{proof}
Let $\Psi = \left\lceil \frac{T}{\tau} \right\rceil$, and $R(\Psi, |\Gamma|)$ be the regret of Algorithm \texttt{INF}. With the stateless condition of the chasing ongoing oracle, $\lbrace F_{\psi} \rbrace_{\psi \in [\Psi]}$ is a sequence of stateless reward functions that satisfy the condition of Theorem \ref{theorem_regret_of_C_and_S_FLP}. Therefore, we have
\begin{equation*}
\max_{\gamma \in \Gamma}\sum_{\psi \in [\Psi]}F_{\psi}(\gamma) - \sum_{\psi \in [\Psi]}\mathbb{E}\Big[F_{\psi}(\gamma_{\psi})\Big] \leq R(\Psi, |\Gamma|) \, ,
\end{equation*}
which gives that
\begin{equation*}
\tau \cdot \max_{\gamma \in \Gamma}\sum_{\psi \in [\Psi]}F_{\psi}(\gamma) - \sum_{t \in [T]}\mathbb{E}\Big[f_{t}(s_{t}, x_{t})\Big] \leq \tau \cdot R(\Psi, |\Gamma|) \, .
\end{equation*}
By the definition of $\mathtt{CR}$, for each period $\psi$ we have
\begin{equation*}
\sum_{t = (\psi - 1)\tau + 1}^{\psi \cdot \tau} f_{t}(\FPS{t}{\gamma}, \FPA{t}{\gamma}) - \tau \cdot \max_{\gamma \in \Gamma}\sum_{\psi \in [\Psi]}F_{\psi}(\gamma) \leq \sigma \, .
\end{equation*}
Therefore, the regret of Algorithm C\&S-FLP is bounded by
\begin{equation*}
\sum_{t \in [T]} f_{t}(\FPS{t}{\gamma}, \FPA{t}{\gamma}) - \sum_{t \in [T]}\mathbb{E}\Big[f_{t}(s_{t}, x_{t})\Big] \leq \sigma \cdot \Psi + \tau \cdot R(\Psi, |\Gamma|) \, .
\end{equation*}
This proposition is proved by plugging Theorem \ref{theorem_INF_regret} into the formula above.
\end{proof}

\begin{corollary}
\label{corollary-bndit}
By taking $\tau = T^{\frac{1}{3}}$, the regret of Algorithm C\&S-FLP is bounded by $O\Big(\sigma T^{\frac{2}{3}} \sqrt{|\Gamma| } \Big)$.
\end{corollary}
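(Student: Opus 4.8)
The plan is to obtain Corollary~\ref{corollary-bndit} as a direct specialization of Theorem~\ref{theorem_regret_of_C_and_S_FLP}, so the entire argument reduces to substituting the prescribed value of $\tau$ and simplifying the exponents of $T$. Recall that Theorem~\ref{theorem_regret_of_C_and_S_FLP} already guarantees that the regret of C\&S-FLP is $O\left(\frac{\sigma \cdot T}{\tau} + \sqrt{|\Gamma| T \tau}\right)$; with this in hand, no further probabilistic or algorithmic reasoning is needed, and the proof is purely arithmetic.

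First I would set $\tau = T^{1/3}$ and evaluate the two summands separately. For the first term, $\frac{\sigma T}{\tau} = \sigma\, T^{1 - 1/3} = \sigma\, T^{2/3}$. For the second term, $\sqrt{|\Gamma| T \tau} = \sqrt{|\Gamma|\cdot T^{4/3}} = \sqrt{|\Gamma|}\cdot T^{2/3}$. The motivation behind the choice $\tau = T^{1/3}$ is precisely that it balances these two contributions, driving both to the common order $T^{2/3}$. Consequently the regret is $O\left((\sigma + \sqrt{|\Gamma|})\, T^{2/3}\right)$.

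Finally I would fold the additive contributions into the single product form stated in the corollary. Under the natural assumptions $\sigma \geq 1$ and $|\Gamma| \geq 1$ (so that $\sqrt{|\Gamma|} \geq 1$), the elementary inequality $a + b \leq 2ab$ valid for $a, b \geq 1$ yields $\sigma + \sqrt{|\Gamma|} = O\!\left(\sigma \sqrt{|\Gamma|}\right)$, which gives the claimed bound $O\!\left(\sigma\, T^{2/3} \sqrt{|\Gamma|}\right)$. There is no substantive obstacle in this corollary; the only points meriting care are verifying that $T^{1/3}$ is the balancing choice of period length and observing that the product form in the statement is simply a convenient (mildly loose) upper bound on the sum form that emerges directly from the substitution.
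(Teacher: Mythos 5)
Your proposal is correct and matches the paper's (implicit) argument exactly: the corollary is a direct substitution of $\tau = T^{1/3}$ into the bound $O\left(\frac{\sigma T}{\tau} + \sqrt{|\Gamma| T \tau}\right)$ of Theorem~\ref{theorem_regret_of_C_and_S_FLP}, yielding $O\left((\sigma + \sqrt{|\Gamma|})T^{2/3}\right)$. Your observation that passing from the sum form to the stated product form $O\left(\sigma T^{2/3}\sqrt{|\Gamma|}\right)$ requires the mild (and here implicit) assumption $\sigma \geq 1$ is a careful and accurate remark.
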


\section{Lower Bounds for Online Learning over Dd-MDPs with Chasability}
\label{apx:lower}
In this part, we will prove lower bounds on the regret of online learning algorithms for $\sigma$-chasable Dd-MDP instances under the full-information setting and bandit setting, respectively.

\begin{theorem}\label{theorem_reduction_from_switching_cost}
The regret of any online learning algorithm for $1$-chasable Dd-MDP under full-information (resp., bandit) feedback is lower bounded by  $\Omega(\sqrt{T \log |{\Gamma}|})$
(resp., $\Omega(|{\Gamma}|^{1/3}T^{2/3})$).
\end{theorem}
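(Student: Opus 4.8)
The plan is to show that any \emph{$1$-chasable} Dd-MDP is at least as hard to learn as the corresponding switching-cost problem, and then transfer the classical lower bounds: the $\Omega(\sqrt{T\log|\Gamma|})$ experts bound in the full-information case, and the $\Omega(|\Gamma|^{1/3}T^{2/3})$ bound for adversarial bandits with unit switching cost (Dekel--Ding--Koren--Peres) in the bandit case. Concretely, I would construct a \emph{memoryless} gadget Dd-MDP from any switching-cost instance with reward functions $F_t:\Gamma\to[0,1)$. Take $\states=\{s_\gamma : \gamma\in\Gamma\}\cup\{s_0\}$, action set $\actions=\Gamma$, initial state $s_1=s_0$, a fixed transition $g_t(s,x)=s_x$ that \emph{ignores} the current state $s$, and identify each arm $\gamma\in\Gamma$ with the policy mapping every state to the action $\gamma$. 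The reward is
\[
f_t(s,x)\;=\;F_t(x)\cdot \mathbf{1}[\,s=s_x\,],
\]
so a state--action pair is rewarded only when the action played equals the state reached one round earlier, i.e.\ only when the same action was played in the previous round.

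First I would verify the gadget is $1$-chasable. The chasing oracle for a target $\gamma$ simply plays action $\gamma$ in every round $t\ge\tinit$; after the single transient round $\tinit$ it is permanently at state $s_\gamma$ and thereafter coincides, state-for-state and action-for-action, with the simulation $(\FPS{t}{\gamma},\FPA{t}{\gamma})$ of $\gamma$ started from $s_1$. Hence $\mathtt{CR}\le f_{\tinit}(\FPS{\tinit}{\gamma},\FPA{\tinit}{\gamma})\le 1$, and since this oracle ignores its feedback it applies verbatim under bandit feedback. I would also record that a fixed policy never switches its action after round $1$, so the best policy's simulated value is $\max_{\gamma}\sum_{t\ge 2}F_t(\gamma)\ge \max_\gamma\sum_t F_t(\gamma)-1$, matching the optimum of the source instance up to an additive constant.

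Next I would translate an algorithm's trajectory. Any Dd-MDP algorithm induces an arm sequence $x_1,\dots,x_T$ with $s_t=s_{x_{t-1}}$, so its Dd-MDP reward is $\sum_{t\ge 2}F_t(x_t)\,\mathbf{1}[x_{t-1}=x_t]$: the algorithm forfeits the entire round's reward whenever it changes the played action, which is exactly a per-switch penalty. Moreover the feedback matches the source problem: full information reveals every $F_t(\gamma)=f_t(s_\gamma,\gamma)$, while bandit feedback reveals $F_t(x_t)$ on non-switch rounds (and the known, state-independent $g_t$ carries no extra information). For the full-information bound this already suffices, since dropping the non-negative forfeiture only helps the algorithm: its policy regret is at least the plain $|\Gamma|$-expert regret, which is $\Omega(\sqrt{T\log|\Gamma|})$ by the classical adversarial-experts lower bound (random $\pm$ rewards together with Yao's principle). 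Combined with \Cref{theorem_regret_of_hola}, this pins down the full-information rate.

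The bandit bound is where the real work lies, and I expect the \textbf{main obstacle} to be that the gadget penalizes a switch by the forfeited reward $F_t(x_t)\in[0,1)$ rather than by a fixed cost $\Delta$, so a priori an algorithm might cheapen switches by timing them to low-reward rounds. I would close this gap by running the reduction on the hard instance underlying the switching-cost bandit lower bound, whose (randomized) rewards satisfy $F_t(\gamma)\ge c$ for a constant $c>0$ on every round and arm before the planted gap is added. Since the decision to switch at round $t$ is committed before $F_t$ is revealed, the expected forfeiture is at least $c$ per switch, so in expectation the gadget reward is dominated by $\sum_t F_t(x_t)-c\cdot(\#\text{switches})$; this reproduces a genuine constant switching cost. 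The Dd-MDP policy regret is therefore bounded below by the unit-switching-cost bandit regret, up to the additive $O(1)$ from the benchmark shift and the constant $c$, yielding $\Omega(|\Gamma|^{1/3}T^{2/3})$. Should a black-box statement with forfeiture-type penalties be unavailable, the fallback is to re-run the multi-scale random-walk argument of the switching-cost lower bound with the per-round forfeiture in place of the fixed $\Delta$, which affects only constants.
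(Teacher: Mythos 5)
Your reduction skeleton is the same as the paper's: states indexed by arms, constant policies, transitions $g_t(s,x)=s_x$ that record the last action, and a reward that penalizes rounds in which the played action disagrees with the recorded one. The full-information half of your argument is correct, and in fact slightly more direct than the paper's: since the forfeiture term is non-negative, the Dd-MDP policy regret dominates the plain experts regret (up to the additive $1$ from the benchmark shift), and the classical $\Omega(\sqrt{T\log|\Gamma|})$ experts bound finishes that case without any reference to switching costs.

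The genuine gap is in the bandit half, and it is created precisely by your choice of a \emph{multiplicative} gadget $f_t(s,x)=F_t(x)\cdot 1_{s=s_x}$. With this gadget a switch costs the algorithm $F_t(x_t)$, not a fixed amount, so the lower bound of \cite{Dekel2014BSC} cannot be invoked as a black box; you need the hard instance there to charge $\Omega(1)$ per switch in conditional expectation. Your stated justification --- that in that construction ``$F_t(\gamma)\ge c$ on every round and arm'' --- is false as written: the rewards are a clipped Gaussian multi-scale random walk, so they can reach $0$, and more importantly they are correlated across rounds, so an algorithm that has seen the walk drift toward the boundary \emph{knows} the next forfeiture will be cheap and can time its switches accordingly. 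Ruling this out requires opening up the multi-scale random-walk analysis and verifying that the walk stays bounded away from the clipping thresholds on all but a negligible set of rounds --- i.e., exactly the ``fallback'' you defer to, which is the actual work and is not done in the proposal. The paper avoids the issue entirely with an \emph{additive} gadget,
\begin{equation*}
f_t(s,x) \;=\; \tfrac{1}{2}F_t(x) + \tfrac{1}{2}\cdot 1_{s = s^x}\, ,
\end{equation*}
which is still $1$-chasable but makes every switch cost exactly $\tfrac12$ regardless of $F_t$; the Dd-MDP regret then sandwiches the OLSC regret with unit switching cost up to a factor $2$ and an additive constant, so both the full-information and the bandit lower bounds for OLSC apply black-box. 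Replacing your product reward by this additive form (or by any reward in which the state-agreement bonus is decoupled from $F_t$) repairs your proof with no other changes.
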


\begin{proof}
We only prove the full information lower bound, and the proof of the bandit version follows the same lines.
Suppose the statement of the theorem does not hold.
Then, there exists an online learning algorithm with regret of
$o(\sqrt{T \log |{\Gamma}|})$
for any $1$-chasable instance of the Dd-MDP problem.
We show how to use this algorithm to design an OLSC algorithm with a unit switching cost whose regret is
$o(\sqrt{T \log |\mathcal{X}|})$,
where $\mathcal{X}$ is the action set of the OLSC instance.
This is in contradiction to the known information theoretic 
$\Omega(\sqrt{T \log |\mathcal{X}|})$
lower bound on the regret of OLSC under the full-information setting \cite{Cesa2006PLG, Freund1995DTG, Littlestone1989WMA}.
(For the bandit version of this proof we use the lower bound of \cite{Dekel2014BSC}).

Here is how the reduction works:
Given an OLSC instance with a set $\mathcal{X}$ of actions and a unit switching cost, we construct a Dd-MDP instance with a state $s^{x}$ for each action
$x \in \mathcal{X}$.
An arbitrary state
$s \in S$
is selected to be the initial state $s_{1}$.
Moreover, we set
$\mathcal{X}_{s} = \mathcal{X}$
for every state $s$.
For every
$x \in \mathcal{X}$,
we introduce a policy $\gamma^{x}$ in the policy collection $\Gamma$ of the Dd-MDP, defined so that it maps all states to action
$x \in \mathcal{X}$.
For each round $t$, when the adversary in OLSC specifies a reward function $F_{t}(\cdot)$, we construct the state transition function and reward function in the Dd-MDP by setting
\begin{equation*}
    g_{t}(s, x) = s^{x}
    \qquad \text{and} \qquad
    f_{t}(s, x) = \frac{1}{2}F_{t}(x) + \frac{1}{2} \cdot 1_{s = s^{x}} \, .
\end{equation*}
Obviously, this is a $1$-chasable Dd-MDP instance.
Moreover,
\begin{equation*}
    \max_{\gamma^{x} \in \Gamma}\sum_{t \in [T]} f_{t}(\FPS{t}{\gamma^{x}}, \FPA{t}{\gamma^{x}}) - \sum_{t \in [T]} \mathbb{E}\Big[ f_{t}(s_{t}, x_{t}) \Big] \leq o(\sqrt{T\log|\Gamma|}) = o(\sqrt{T\log|\mathcal{X}|}) \, ,
\end{equation*}
where the inequality is due to the assumed regret bound.
The construction of the Dd-MDP instance ensures that for every $\gamma \in \Gamma$,
\begin{align*}
    \sum_{t \in [T]} f_{t}(\FPS{t}{\gamma^{x}}, \FPA{t}{\gamma^{x}}) \geq&  \, \frac{1}{2} \cdot \sum_{t \in [T]} F_{t}(x) + \frac{1}{2}(T - 1) \, ,\\
    \sum_{t \in [T]} f_{t}(s_{t}, x_{t}) \leq& \frac{1}{2} \sum_{t \in [T]} F_{t}(x_{t}) + \frac{1}{2}T - \frac{1}{2} \sum_{t \in [2, T]} 1_{x_{t} \neq x_{t-1}} \\
\end{align*}
Putting these pieces together, we get
\begin{equation*}
    \frac{1}{2}\bigg(\max_{x \in \mathcal{X}}\sum_{t \in [T]} F_{t}(x) - \sum_{t \in [T]} \mathbb{E}\Big[ F_{t}(x_{t}) \Big] + \sum_{t \in [2, T]}1_{x_{t} \neq x_{t - 1}}\bigg) - 1 \leq o\Big(\sqrt{T\log|\mathcal{X}|} \Big) \, ,
\end{equation*}
and therefore the regret of the OLSC instance is bounded by $o(\sqrt{T\log\lvert\mathcal{X}}\rvert)$, a contradiction.
\end{proof}



\end{document}